\algrenewcommand\alglinenumber[1]{\normalsize #1.} 
\newcounter{algsubstate}
\newcommand{\ceil}[1]{\left\lceil #1 \right\rceil}
\newcommand{\floor}[1]{\left\lfloor #1 \right\rfloor}
\newcommand{\ket}[1]{\left| #1 \right>}
\newcommand{\bra}[1]{\left< #1 \right|}
\newcommand{\ketbra}[2]{\ket{#1} \! \bra{#2}}
\newcommand{\pure}[1]{\ketbra{#1}{#1}}
\newcommand{\tr}[2][]{\operatorname{Tr}_{#1}\!\left[#2\right]} 
\newcommand{\lambdaEC}{\lambda_{\mathrm{EC}}} 
\newcommand{\asin}{\sin^{-1}}
\newcommand{\CvsQ}{\kappa}
\newcommand{\defvar}{\coloneqq} 
\newcommand{\dop}[1]{\operatorname{S}_{#1}} 
\newcommand{\eps}{\varepsilon}
\newcommand{\fmin}{f}
\newcommand{\freq}{\operatorname{freq}}
\newcommand{\g}{g} 
\newcommand{\id}{\mathbb{I}} 
\newcommand{\idmap}{\mathcal{I}} 
\newcommand{\Max}{\operatorname{Max}}
\newcommand{\norm}[1]{\left\lVert#1\right\rVert} 
\newcommand{\pr}[2][]{\Pr_{#1}\!\left[#2\right]}
\newcommand{\Pos}{\operatorname{Pos}} 
\newcommand{\rate}{\operatorname{rate}}
\newcommand{\suchthat}{\text{ s.t.}} 
\newcommand{\supp}{\operatorname{supp}} 
\newcommand{\term}[1]{\emph{#1}}
\newcommand{\Var}{\operatorname{Var}}
\newcommand{\phon}{\mbf{p}^\mathrm{hon}}
\newcommand{\phonc}{p^\mathrm{hon}_c}
\newcommand{\qhon}{\mbf{q}^\mathrm{hon}}
\newcommand{\qhonc}{q^\mathrm{hon}_c}
\newcommand{\svar}{\tau} 
\newcommand{\tupp}{\mbf{t}^\mathrm{upp}}
\newcommand{\tlow}{\mbf{t}^\mathrm{low}}
\newcommand{\tuppc}{t_c^\mathrm{upp}}
\newcommand{\tlowc}{t_c^\mathrm{low}}
\newcommand{\rhotmu}{\rho^{t,\mu}_{J}}
\newcommand{\rhotmun}[1][n]{\rho^{t,\mu}_{J_{#1}}}
\newcommand{\Nph}{N_{\mathrm{ph}}}
\newcommand{\musig}{\mu_{\mathrm{sig}}}
\newcommand{\ptot}{p_{\mathrm{tot}}}
\newcommand{\deltavec}{\boldsymbol{\delta}}
\newcommand{\mbf}[1]{\mathbf{#1}} 
\newcommand{\Renyi}{R\'{e}nyi}
\newcommand{\esecure}{\eps^\mathrm{secure}}
\newcommand{\ecorr}{\eps^\mathrm{correct}}
\newcommand{\esecret}{\eps^\mathrm{secret}}
\newcommand{\eEV}{\eps_\mathrm{EV}}
\newcommand{\ePA}{\eps_\mathrm{PA}}
\newcommand{\ecom}{\eps^\mathrm{com}}
\newcommand{\ecomEV}{\eps^\mathrm{com}_\mathrm{EV}}
\newcommand{\ecomAT}{\eps^\mathrm{com}_\mathrm{AT}}
\newtheorem{theorem}{Theorem}
\newtheorem{lemma}{Lemma}
\newtheorem{remark}{Remark}
\theoremstyle{definition} 
\newtheorem{definition}{Definition}
\newtheorem{prot}{Protocol}
\begin{document}

\title{\textbf{\textcolor{black}{Finite-size analysis of prepare-and-measure and decoy-state QKD via entropy accumulation}}}

\renewcommand\Affilfont{\itshape\small} 
\renewcommand{\theaffiliation}{%
\textsuperscript{\normalfont\arabic{affiliation}}%
}

\newcommand\CoAuthorMark{\footnotemark[\arabic{footnote}]} 

\author[1,3]{Lars Kamin\footnote{Both authors contributed equally to this work.}}
\email[3]{lars.kamin@outlook.com}

\author[1,4]{Amir Arqand \protect\CoAuthorMark}
\email[4]{aarqand@uwaterloo.ca}

\author[2,5]{Ian George}
\email[5]{igeorge3@illinois.edu}

\author[1,6]{Norbert L\"{u}tkenhaus}
\email[6]{lutkenhaus.office@uwaterloo.ca}

\author[1,7]{Ernest Y.-Z.\ Tan}
\email[7]{yzetan@uwaterloo.ca}

\affiliation[1]{Institute for Quantum Computing and Department of Physics and Astronomy, University of Waterloo, Waterloo, Ontario N2L 3G1, Canada}
\affiliation[2]{Department of Electrical and Computer Engineering,  Coordinated Science Laboratory, University of Illinois at Urbana-Champaign, Urbana, Illinois 61801, USA}

\date{}

\maketitle

\begin{abstract}
An important goal in quantum key distribution (QKD) is the task of providing a finite-size security proof without the assumption of collective attacks. For prepare-and-measure QKD, one approach for obtaining such proofs is the generalized entropy accumulation theorem (GEAT), but thus far it has only been applied to study a small selection of protocols. In this work, we present techniques for applying the GEAT in finite-size analysis of generic prepare-and-measure protocols, with a focus on decoy-state protocols. In particular, we present an improved approach for computing entropy bounds for decoy-state protocols, which has the dual benefits of providing tighter bounds than previous approaches (even asymptotically) and being compatible with methods for computing min-tradeoff functions in the GEAT. Furthermore, we develop methods to incorporate some improvements to the finite-size terms in the GEAT, and implement techniques to automatically optimize the min-tradeoff function. Our approach also addresses some numerical stability challenges specific to prepare-and-measure protocols, which were not addressed in previous works.
\end{abstract}

\section{Introduction}\label{sec:Introduction}
Quantum key distribution (QKD) is the task of establishing a secret shared key between two parties (Alice and Bob) in the presence of an adversary (Eve)~\cite{BB84,Eke91}. In a QKD task, Alice and Bob are connected by an insecure quantum channel and an authenticated classical channel. 
Eve is allowed to intercept any states sent over the quantum channel and perform any valid attack within the realm of quantum mechanics, though she cannot modify or impersonate messages sent over the authenticated classical channel. The goal of a security proof for a QKD protocol is to show that Alice and Bob can produce a secure key under these conditions, regardless of the attack Eve employs.

One of the major difficulties in constructing such security proofs is in handling the full scope of attacks available to Eve. In particular, Eve could vary her attack across different rounds of the protocol: for instance, by using classical information or quantum states she gathered from previous rounds. A security proof that takes into account such an attack is called a security proof against coherent attacks, as opposed to a proof that assumes Eve attacks in an independent and identically distributed (IID) manner across the rounds, which would be called a security proof against IID collective attacks. 
Some techniques for proving security against coherent attacks include 
complementarity-based techniques~\cite{Koa09,TR11,LCW+14},
de Finetti theorems~\cite{rennerthesis}, the postselection technique~\cite{CKR09,Nahar2024PRXQuantum}, and entropy accumulation theorems (EAT)~\cite{DFR20,DF19,Metger2022Mar,MR23}. The original versions of the EAT~\cite{DFR20,DF19} were developed for entanglement-based (EB) protocols, and could not be straightforwardly applied to prepare-and-measure (PM) protocols. This issue was addressed with the subsequent development of a \emph{generalized} entropy accumulation theorem (GEAT) in~\cite{Metger2022Mar,MR23} that yields security proofs for PM protocols.

Typically, these techniques show that in the asymptotic limit of infinitely many protocol rounds, the key rates (i.e.~the length of the final key divided by the number of rounds) for most protocols are the same against both coherent attacks and IID collective attacks. However, in realistic scenarios where the number of rounds is finite, these techniques may yield significantly different bounds on the finite-size key rate; furthermore, the conditions required to apply each technique are somewhat different.
In particular, complementarity-based proofs such as those in~\cite{LCW+14} require a detailed analysis of how to reduce the security proof to a model where complementary measurements are being performed on qubit systems. Such reductions can be rather protocol-specific, and involve some conditions on the detectors, such as requiring active basis choices and assuming that the detector efficiency is independent of basis.\footnote{After initial preparation of this manuscript, progress was made in weakening these assumptions~\cite{arx_TNS+24}; however, we leave a detailed comparison for future work.} 
In contrast, entropy accumulation is not subject to these restrictions, and is hence more generally applicable to a broad range of protocols.
As for de Finetti theorems and the postselection technique, they require a permutation-symmetry condition that usually involves explicitly implementing a random permutation at the start of the protocol, which introduces an additional complication in practical implementations. Furthermore, they generally give worse bounds on the finite-size key rate as compared to entropy accumulation~\cite{arx_GLvH+22}. 

Hence in principle, the GEAT provides a flexible security proof framework that applies to generic PM-QKD protocols, while also providing better key rates than de Finetti theorems or the postselection technique, following the observations in~\cite{arx_GLvH+22}.
However, previous works based on the GEAT mainly provided proof-of-principle demonstrations of such security proofs, without much focus on optimizing the parameter choices to maximize the finite-size key rates in practical applications --- in particular, the GEAT relies on a construction known as a \emph{min-tradeoff function}, which is challenging to construct optimally for protocols with large numbers of signal states or measurement outcomes.
This has been a critical limitation in using it to obtain security proofs for QKD protocols that are closer to practical realization. Our goal in this work is to overcome this obstacle, while also implementing several techniques to optimize the finite-size key rate bounds (based on ideas introduced in~\cite{arx_GLvH+22} for EB protocols).

Furthermore, when considering PM protocols of practical interest, an important family of such protocols would be decoy-state protocols. More specifically, when dealing with practical QKD implementations, the source usually employs a highly attenuated laser that does not necessarily output a single-photon pulse. There is a non-trivial probability of these pulses containing more than one photon. This would potentially introduce an insecurity into the protocol since with these pulses, Eve can now perform a PNS (photon-number splitting) attack \cite{BBB+92,BLM+00}. To limit the effects of this attack, decoy-state protocols were introduced \cite{H02,MQZ+05,W05}. In this method, Alice chooses randomly from a set of possible intensities for the pulses she sends to Bob. Since Eve cannot directly distinguish the signals by their intensities, Alice and Bob can design an acceptance test that aims to detect attempted PNS attacks. Thus, the decoy-state method makes the protocol more resistant to such attacks.

In this work, we provide a flexible framework for the finite-size analysis of PM protocols using the GEAT, including an analysis of decoy-state protocols. Our main contributions at the theoretical level can be summarized as follows:
\begin{itemize}
\item We introduce a new method for computing reliable bounds on the entropy produced in a single round of a generic decoy-state protocol. For the purposes of the GEAT, our approach gives a flexible method to construct a min-tradeoff function for such protocols, unlike previous constructions that relied on closed-form entropy bounds. It also offers some benefits in terms of improved asymptotic key rates, which we describe below.
\item For the GEAT analysis specifically, we apply techniques similar to those in \cite{arx_GLvH+22}, but with improvements and modifications to address challenges specific to PM protocols. In particular, we study a different version of the GEAT~\cite{LLR+21} that has improved finite-size terms defined by a separate convex optimization, and show that this optimization can be unified with the computation of the single-round entropies, hence simplifying the application of the form of the GEAT from~\cite{LLR+21}. We also address a technical challenge in applying the~\cite{arx_GLvH+22} approach to analyze test rounds for PM protocols by introducing a parametrization based on Choi states, which may also be useful for security proofs outside the entropy accumulation framework. 
\end{itemize}

In terms of the final implications on the key rates, the overall benefits of our approach can be summarized as follows. We view these as basically separate effects; however, we do not attempt to isolate the extent of their effects on the key rates.
\begin{itemize}
\item Regarding the key rates in the asymptotic limit, or against IID collective attacks, our approach for analyzing single rounds of a decoy-state protocol would yield better bounds than the techniques  in~\cite{WL22,NUL23,KL24,arx_KTL25}, because it bypasses some suboptimal intermediate steps in those works --- we formalize this claim rigorously in Remark~\ref{remark:twostep}, after introducing the necessary concepts.
\item Regarding the final finite-size key rates against coherent attacks, our approach would yield better results than the entropy-accumulation-based approach in~\cite{arx_GLvH+22}, as we used the strictly better finite-size bounds developed in~\cite{LLR+21}. Furthermore, it was already demonstrated in~\cite{arx_GLvH+22} that entropy-accumulation-based approaches provide better finite-size key rates than the postselection technique for many protocols, hence we do not repeat such a comparison here. (For a full analysis of applying the latter to decoy-state protocols, refer to~\cite{arx_KTL25}.)
\end{itemize}
We emphasize that the scope of our above claims is restricted to the specifically mentioned works. In particular, these claims do not encompass complementarity-based proof techniques, which use a rather different proof structure, making a fully rigorous mathematical comparison difficult. While those techniques are not the focus of this work, in Fig.~\ref{fig:Decoy BB84 unique vs realistic acceptance} we show some comparisons to the finite-size key rates we empirically obtained from such methods, which heuristically suggest that each of the techniques performs better in different regimes. Still, we note that as mentioned above, the GEAT framework covers a wider range of protocols (for instance, those based on passive basis choices) as compared to the existing literature on complementarity-based proofs. 

To briefly elaborate on the above points, previous approaches in~\cite{WL22,NUL23,KL24,arx_KTL25} for key rate evaluations in decoy-state protocols were based on a two-step procedure, which first computes bounds on various photon yields and then bounds the entropy based on the single-photon yields using the method in~\cite{WLC18}. However, since this takes place in two separate steps, it is not straightforwardly compatible with existing methods for numerically constructing min-tradeoff functions for the GEAT, which relied on the existence of a single convex optimization problem for bounding the entropies. We introduce a new approach which unifies the two steps into a single convex optimization. This lets us construct min-tradeoff functions for generic decoy-state protocols. It has the further advantage that it would yield better bounds than the methods in~\cite{WL22,NUL23,KL24,arx_KTL25} used to analyze the asymptotic or finite-size IID scenarios (see Remark~\ref{remark:twostep} for the details).

As for our improvements specific to the GEAT-based security proof framework (in particular, over the entropy accumulation analysis in \cite{arx_GLvH+22}), our first improvement is to incorporate a different version of the GEAT with improved finite-size terms~\cite{LLR+21}.
The drawback of this version is that its finite-size terms involve yet another optimization problem, making key rate computations more time-consuming. We address this difficulty by showing that this optimization can be merged with the numerical methods for bounding single-round entropies, hence significantly streamlining the calculations. Similar to~\cite{arx_GLvH+22}, we also derive a method to automatically optimize the choice of min-tradeoff function for this version of the GEAT, by showing that the optimal min-tradeoff function is the Lagrange dual solution to a modified version of an entropy minimization problem derived in \cite{WLC18}. 
However, an additional difficulty here for PM protocols is that in this method, it is more numerically stable to analyze the distribution only on the test rounds, but the formulation developed for this purpose in~\cite{arx_GLvH+22} is not immediately compatible with the standard source-replacement technique~\cite{BBM92,FL12}. To overcome this difficulty, we introduce a parametrization via Choi states --- we highlight that this approach may also improve stability of previous numerical techniques for analyzing prepare-and-measure protocols. 

We highlight that in the~\cite{MR23} analysis of PM protocols using the GEAT, there was a technical condition that Eve only interacts with a single signal at a time (although she can do so in an arbitrary non-IID manner), which limits the repetition rate of the protocol. However, it has been shown in separate works~\cite{arx_FKR+25,arx_AT25} that the key rates computed in this manner are in fact also valid even without that condition (this general proof approach was proposed in an earlier work~\cite{inprep_HB25} presented at various conferences, but the proofs in~\cite{arx_FKR+25,arx_AT25} are independent of that work). Therefore, that condition does not in fact need to be imposed when applying the results we have obtained here --- see Sec.~\ref{sec:conclusion} for further details. (An alternative approach would be to use a ``virtual tomography'' analysis developed in~\cite{Bauml2024Quantum}, though this slightly reduces the key rates due to finite-size limitations of the virtual tomography step.)

The rest of the paper is organized as follows. In Sec.~\ref{sec:Notation}, we lay out our overall notation and definitions. In Sec.~\ref{sec:prot_description} we describe the general structure of PM protocols that can be accommodated in our framework. In Sec.~\ref{sec:EATbounds}, we derive the finite key length formula for such protocols, based on the GEAT. Then in Sec.~\ref{sec:finitesize}, we present the methods we derived for efficiently computing the terms in that formula, including our methods to optimize the min-tradeoff function as well as various other parameter choices. In Sec.~\ref{sec:Qubit BB84 with loss} we present the resulting key rates we obtained for a qubit BB84 protocol with loss. In Sec.~\ref{sec:Decoy-state with improved analysis} we present our improved method for analyzing decoy-state protocols, as well as the key rates we obtained for a weak coherent pulse decoy-state BB84 protocol. Finally, in Sec.~\ref{sec:conclusion} we present some concluding discussions and remarks.

\section{Notation and definitions}\label{sec:Notation}

\begin{table}[h!]
\caption{List of notation}\label{tab:notation}
\def\arraystretch{1.5} 
\setlength\tabcolsep{.28cm}
\begin{tabular}{c l}
\toprule
\textit{Symbol} & \textit{Definition} \\
\toprule
$\log$ & Base-$2$ logarithm \\
\hline
$H$ & Base-$2$ von Neumann entropy \\
\hline
$A\perp B$ & $A$ and $B$ are orthogonal; $AB=BA=0$ \\
\hline
$\floor{\cdot}$ (resp.~$\ceil{\cdot}$) & Floor (resp.~ceiling) function \\
\hline
$\norm{\cdot}_p$ & Schatten $p$-norm \\
\hline
$\Pos(A)$ & Set of positive semi-definite operators on register $A$\\
\hline
$\dop{=}(A)$ (resp.~$\dop{\leq}(A)$) & Set of normalized (resp.~subnormalized) states on register $A$ \\
\hline
$A^n$ & Abbreviated notation for registers $A_1 \dots A_n$ \\
\toprule
\end{tabular}
\def\arraystretch{1}
\end{table}

In this section, we set our overall notation and define some concepts required in our work.

\begin{definition}\label{def:freq}
(Frequency distributions) For a string $z_1^n\in\mathcal{Z}^n$ on some alphabet $\mathcal{Z}$, $\freq_{z_1^n}$ denotes the following probability distribution on $\mathcal{Z}$:
\begin{align}
\freq_{z_1^n}(z) \defvar \frac{\text{number of occurrences of $z$ in $z_1^n$}}{n} 
.
\end{align}
\end{definition}

\begin{definition}
A state $\rho \in \dop{\leq}(CQ)$ is said to be \term{classical on $C$} (with respect to a specified basis on $C$) if it is in the form 
\begin{align}
\rho_{CQ} = \sum_c \lambda_c \pure{c} \otimes \sigma_c,
\label{eq:cq}
\end{align}
for some normalized states $\sigma_c \in \dop{=}(Q) $ and weights $\lambda_c \geq 0$, with $\ket{c}$ being the specified basis states on $C$. In most circumstances, we will not explicitly specify this ``classical basis'' of $C$, leaving it to be implicitly defined by context.
It may be convenient to absorb the weights $\lambda_c$ into the states $\sigma_c$, writing them as subnormalized states $\omega_c = \lambda_c\sigma_c \in \dop{\leq}(Q)$ instead. 
\end{definition}

\begin{definition}\label{def:cond}
(Conditioning on classical events) For a state $\rho \in \dop{=}(CQ)$ classical on $C$, written in the form
$\rho_{CQ} = \sum_c \pure{c} \otimes \omega_c$ 
for some $\omega_c \in \dop{\leq}(Q)$,
and an event $\Omega$ defined on the register $C$, we will define the corresponding \term{conditional state} as
\begin{align}
\rho_{|\Omega} \defvar \frac{
\sum_{c\in\Omega} \pure{c} \otimes \omega_c
}{\sum_{c\in\Omega} \tr{\omega_c}} .
\end{align}
In light of the above definition, we may sometimes write $\rho_{CQ}$ with the slightly abbreviated notation
\begin{align}
\rho_{CQ} = \sum_c \pr{c} \pure{c} \otimes \rho_{Q|c},
\label{eq:cqnorm}
\end{align}
where $\pr{c}$ describes the probability distribution on $C$, and $\rho_{Q|c}$ denotes the conditional state on $Q$ corresponding to the register $C$ taking value $c$.
\end{definition}

Our security proofs will involve sandwiched {\Renyi} entropies~\cite{Muller_Lennert_2013,Wilde_2014,Tom16}. Here we briefly summarize their definitions, based on sandwiched {\Renyi} divergences.
\begin{definition}[Sandwiched {\Renyi} divergence]
\label{def:sand_Renyi_div}
For any $\rho\in\dop{=}(A)$, $\sigma\in\Pos(A)$, with $\mathrm{supp}(\rho)\subseteq \mathrm{supp}(\sigma)$, and $\alpha\in(0,1)\cap (1,\infty)$, the $\alpha$-sandwiched {\Renyi} divergence between $\rho$, $\sigma$ is defined as:
\begin{align}
    \label{eq:sand_Renyi_div}
    D_\alpha(\rho||\sigma)=\begin{cases}
    \frac{1}{\alpha-1}\log\tr{ \left(\sigma^{\frac{1-\alpha}{2\alpha}}\rho\sigma^{\frac{1-\alpha}{2\alpha}}\right)^\alpha} &\left(\alpha < 1\ \land\ \rho\not\perp\sigma\right)\vee \left(\supp(\rho)\subseteq\supp(\sigma)\right) \\ 
    +\infty & \text{otherwise},
    \end{cases}  
\end{align}
where $\rho\not\perp\sigma$ stands for the case where the two operators are not orthogonal. For $\alpha>1$ the $\sigma^{\frac{1-\alpha}{2\alpha}}$ terms are defined via the Moore-Penrose pseudoinverse if $\sigma$ is not full-support.
\end{definition}

\begin{definition}
\label{def:cond_Renyi}
For any bipartite state $\rho_{AB}\in\dop{=}(AB)$, and $\alpha\in(0,1)\cup (1,\infty)$, we define the following sandwiched {\Renyi} entropies of $A$ conditioned on $B$:
\begin{align}
    \label{eq:cond_Renyi}
    &H_\alpha(A|B)_\rho=-D_\alpha(\rho_{AB}||\id_A\otimes\rho_B)\notag\\
    &H_\alpha^\uparrow(A|B)_\rho=\sup_{\sigma_B\in\dop{=}(B)}-D_\alpha(\rho_{AB}||\id_A\otimes\sigma_B).
\end{align}
\end{definition}

\section{Protocol description}
\label{sec:prot_description}

For this work, we focus on protocols that make a binary accept/abort decision and output keys of fixed length when they accept. A recent work~\cite{TTL24} has developed security proof techniques (in the entropic framework rather than phase error analysis) for protocols that instead produce keys of variable length depending on the statistics observed in the protocol, though restricted to the case of collective attacks. An interesting question is whether the proof techniques used here can be combined with that work, though we leave this for future research. 

\begin{prot}{Prepare-and-Measure Protocol.}\label{Prot:PM Protocol} \\
		\textbf{Parameters:} \\
	\begin{tabularx}{0.9\linewidth}{r c X}
			\(n \in \mathbb{N}_0\) 			    &:& 	Total number of rounds \\
			\(l \in \mathbb{N}_0\)						&:&		Length of final key\\
			\(\{\sigma_i\}_{i=1 \dots d_X}\) 	&:& 	States sent by Alice \\
            \(\mathcal{S}\)				&:&		alphabet of raw-key registers, usually \(\{0,1\}\) \\
			\(M^{B}_{k}\) 						&:& 	POVM elements acting on Bob's system describing his measurements outcomes \(k\)\\
           \(\gamma\)						&:& 	Probability that Alice chooses a round to be a test round \\
			\(S_{\mathrm{acc}}\)						&:& 	Acceptance set of accepted frequencies \\
           \(\lambda_{\mathrm{EC}}\)			&:& 	Number of bits exchanged during error correction step \\
			\(\ePA\)			&:&		Security parameter contribution from privacy amplification \\
			\(\eEV\)			        &:&		Security parameter contribution from error verification \\
			\(\Omega_{\text{AT}}\)				&:&		Event of passing the acceptance test \\
			\(\Omega_{\text{EV}}\)				&:&		Event of passing the error verification \\
			\(\Omega_{\text{acc}} = \Omega_{\text{AT}} \land \Omega_{\text{EV}}\)				&:&		Event of the protocol not aborting \\
	\end{tabularx}
	\vspace{20pt}
	
	\textbf{Protocol steps:}
    \begin{enumerate}
    \item For each round $i \in \{1,2,\dots,n\}$, Alice and Bob perform the following steps:
	\begin{enumerate}
		\item \textbf{State preparation and transmission:} Independently for each round, with probability $\gamma$ Alice chooses it to be a \term{test round}, and otherwise she chooses it to be a \term{generation round}. Then, Alice prepares one of \(d_X\) states \(\{\sigma_k\}_{k=1\dots d_X}\), according to some distribution which could depend on the choice of a test or generation round. She stores the label for her choice of signal state in a classical register \(X_i\), and computes a classical register \(C^A_i\) for public announcement (including for instance the test/generation decision). Finally, Alice sends the signal state to Bob via a quantum channel.
  
		\item \textbf{Measurements:} Bob measures his received states using a POVM with POVM elements \(\{M_k^B\}_{k=1 \dots d_B}\), and stores his results in a classical register \(Y_i\) with alphabet \(\mathcal{Y}\). Furthermore, he computes a classical register \(C^B_i\) for public announcement.
  
        \item \label{step:announce} \textbf{Public announcement:} Alice and Bob announce their values $C^A_i C^B_i$, and jointly compute a value $C_i$ via public classical discussion (which can be two-way). $C_i$ is a value that will be used later in the acceptance test to decide whether to abort, and we use the convention that it is set to a fixed symbol $\perp$ whenever Alice chose the round to be a generation round (informally, this corresponds to the fact that the acceptance test later will only depend on data from the test rounds).
        Let $I_i$ denote all the registers publicly communicated in this step, including a copy of $C_i$.

		\item \label{step:sift} \textbf{Sifting and key map:} 
        Alice applies a sifting\footnote{For the purposes of this work, when a round is sifted out, we take this to mean Alice sets $S_i$ to a fixed value in the alphabet of $S_i$ (say, $0$). However, it should be possible to modify this to a version where such rounds are actually discarded (i.e.~not included in the privacy amplification step) by using the techniques in~\cite{TTL24}.} and/or key map procedure based on her raw data $X_i$ and the public announcements $I_i$, to produce a classical register $S_i$.
	\end{enumerate}

		\item \label{step:AT} \textbf{Acceptance test (parameter estimation):} Alice and Bob compute the frequency distribution $\mbf{F}^{\text{obs}}$ of the string $C_1^n$. They accept if \(\mbf{F}^{\text{obs}} \in S_{\mathrm{acc}} \) where \(S_{\mathrm{acc}}\) is the predetermined acceptance set, and abort the protocol otherwise.
		\item \label{step:ECandEV} \textbf{Error correction and verification:} Alice and Bob publicly communicate  $\lambda_{\mathrm{EC}}$ bits for error correction, such that Bob can use those bits together with his data $Y_1^n I_1^n$ to produce a guess $\widehat{S}_1^n$ for Alice's string $S_1^n$. This is followed by an error-verification step, where Alice sends a 2-universal hash of $S_1^n$ with length $\ceil{\log(1/\eEV)}$ to Bob, who compares it to the hash of his guess and accepts if the hashes match (and otherwise aborts).
		\item \textbf{Privacy amplification:} Alice and Bob randomly choose a 2-universal hash function from some family (with fixed output length $l$) and apply it to their strings $S_1^n$ and $\widehat{S}_1^n$, producing final keys $K_A$ and $K_B$.
	\end{enumerate}
\end{prot}	

While we have described steps~\ref{step:announce}--\ref{step:sift} as taking place in individual rounds, in practical implementations it is fine for them to take place after all states have been sent and measured, as long as $C_i I_i S_i$ are all indeed only computed from the data in their respective rounds. This is because even if such operations physically take place after all measurements are completed, we can commute them with the measurements in preceding rounds to describe the protocol using a model of the form presented above. 

We also highlight that what we have chosen to label as a test or generation round is determined \emph{entirely} by Alice, and we will use this convention throughout our subsequent proofs. It is true that in each round Bob makes a separate decision of whether to measure in a ``test basis'' or ``generation basis'', and to prepare the registers $S_i$ for final key generation, we would have Bob announce his basis choice, so the parties can sift out rounds where they did not both measure in the generation basis. However, for this work we use the convention that this is described in the sifting and/or key map, rather than the label of whether we call a round a test or generation round. We give details of how to formalize this when considering specific protocols in Sec.~\ref{sec:Qubit BB84 with loss}--\ref{sec:Decoy-state with improved analysis}.

\section{Improved entropy accumulation analysis for PM protocols}
\label{sec:EATbounds}
In this section, we employ a modified version of the generalized entropy accumulation theorem in~\cite{Metger2022Mar}, along with the improvement in~\cite{LLR+21}, to bound the secure key length of a PM QKD protocol. We start by defining the notion of a GEAT channel.
\begin{definition}
    \label{def:eat_channel}
    A sequence of channels $\{\mathcal{M}_i:R_{i-1}E_{i-1}\rightarrow C_iA_iR_iE_i\}_{i\in \{1,2,\dots,n\}}$ is a sequence of GEAT channels if for all $i\in\{1,2,\dots,n\}$, they satisfy the following properties:
    \begin{itemize}
        \item $C_i$ are classical registers with common alphabet $\mathcal{C}$.
        \item $\exists\mathcal{P}_{A^nE_n\rightarrow C^nA^nE_n}$ such that $\mathcal{M}_n\circ\cdots\circ\mathcal{M}_1=\mathcal{P}\circ\mathcal{N}_n\circ\dots\circ\mathcal{N}_1$ with $\mathcal{N}_i=\operatorname{Tr}_{C_i}\circ\mathcal{M}_i$, and
        \begin{equation}
            \label{eq:Reconstruction map}
            \mathcal{P}(\rho_{A^nE_n})=\sum_{s\in\mathcal{S},t\in\mathcal{T}}\left(\Pi_{A^n}^{(s)}\otimes\Pi_{E_n}^{(t)}\right)\rho_{A^nE_n}\left(\Pi_{A^n}^{(s)}\otimes\Pi_{E_n}^{(t)}\right)\otimes\ket{r(s,t)}\bra{r(s,t)}_{C^n},
        \end{equation}
        where $\{\Pi_{A^n}^{(s)}\}_{s\in\mathcal{S}}$ and$\{\Pi_{E_n}^{(t)}\}_{t\in\mathcal{T}}$ are families of orthogonal projectors on $A^n$ and $E_n$ respectively, while $r:\mathcal{S}\times\mathcal{T}\rightarrow\mathcal{C}$ is a deterministic function. In other words, the statistics can be generated solely from the $A^n$ and $E_n$ systems.
        \item $\exists\mathcal{R}_i:E_{i-1}\rightarrow E_i$ such that $\operatorname{Tr}_{A_iR_iC_i}\circ\mathcal{M}_i=\mathcal{R}_i\circ\operatorname{Tr}_{R_{i-1}}$. Intuitively, this condition states that the channels are ``non-signaling'' from the $R_{i-1}$ registers to the $E_i$ registers.
        
    \end{itemize}
\end{definition}
We now define the notion of rate function and min-tradeoff function:
\begin{definition}
    A real function $\rate(\mbf{p})$ on $\mathcal{C}$ is called a \term{rate function} for a set of GEAT channels $\{\mathcal{M}_i\}_{i\in\{1,2,\dots,n\}}$, if it satisfies:
    \begin{equation}\label{eq:ratefuncdef}
    \rate(\mbf{p})\leq\inf_{\nu\in\Sigma_{i}(\mbf{p})}H(A_i|E_iR)_\nu\qquad\qquad\forall i\in\{1,2,\dots,n\}
    \end{equation}
where $\Sigma_i(\mbf{p})$ is the set of output states of the extended GEAT channel $\mathcal{M}_i\otimes\id_R$, such that the reduced state on the classical register $C_i$ has the same distribution as $\mbf{p}$. Note that if the set $\Sigma_i(\mbf{p})$ is empty we define the infimum to be $+\infty$ (following standard conventions in optimization theory).
\end{definition}
\begin{definition}
A \term{min-tradeoff function} is an affine rate function. 
Being an affine function, it can always be expressed in the form
\begin{align}\label{eq:affinef}
f(\mbf{p}) = \mbf{f}\cdot\mbf{p} + k_f,
\end{align}
for some vector $\mbf{f}$ (which is just the gradient of $f$) and scalar
$k_f$.\footnote{In~\cite{arx_GLvH+22}, the scalar $k_f$ was implicitly absorbed into the gradient $\mbf{f}$ by exploiting the fact that the min-tradeoff function is always only evaluated on normalized distributions; however, in this work, it will be more convenient to keep the terms separate.} 
\end{definition}
We now state the GEAT, with an improved second-order term from~\cite{LLR+21}:
\begin{theorem}[\cite{Metger2024} Theorem~4.3
and~\cite{LLR+21} Supplement, Theorem~2]
\label{thrm:EAT}
For a sequence of GEAT channels $\{\mathcal{M}_i\}_{i\in\{1,2,\dots,n\}}$ defined in Eq.~(\ref{def:eat_channel}), let $\eps\in (0,1),\ \alpha\in(1,3/2)$, let 
$\Omega$ be an event on registers $C_1^n$,
let $\rho_{R_0E_0}\in\mathcal{D}(R_0E_0)$, let $f$ be a min-tradeoff function, and let $h=\min_{c^n\in\Omega} f(\mathrm{freq}(c^n))$. Then: 
\begin{align}\label{eq:EATbound}
    H^{\uparrow}_\alpha(A^n|E_n)&_{\mathcal{M}_n\circ\cdots\circ\mathcal{M}_1(\rho_{R_0E_0})_{|_{\Omega}}}
    \ge nh+nT_\alpha(f)-\frac{\alpha}{\alpha-1}\log \frac{1}{\pr
    {\Omega}}-n\left(\frac{\alpha-1}{2-\alpha}\right)^2 K(\alpha),
\end{align}
where (letting $\CvsQ=1$ if the $A_i$ systems are classical and $\CvsQ=2$ if they are quantum):
\begin{align}\label{eq:2ndorderfuncs}
    &T_\alpha(f)=\inf_{\mbf{p}\in\mathcal{Q}}\left(\rate(\mbf{p})-f(\mbf{p})-\frac{\alpha-1}{2-\alpha}\frac{\operatorname{ln}(2)}{2}V(\mbf{p},f)\right)\notag\\
    &\Var(\mbf{p},f)=\sum_xp(x)f(\delta_x)^2-\left(\sum_xp(x)f(\delta_x)\right)^2\notag\\
    &V(\mbf{p},f)=\left(\log(1+2d_A^{\CvsQ})+\sqrt{2+\Var(\mbf{p},f)}\right)^2\notag\\
    &K(\alpha)=\frac{(2-\alpha)^3}{6(3-2\alpha)^3\operatorname{ln}2}2^{\frac{\alpha-1}{2-\alpha}(\CvsQ\log d_A+\max(f)-\min_\mathcal{Q}(f))}\ln^3\left(2^{\CvsQ\log d_A+\max(f)-\min_\mathcal{Q}(f)}+e^2\right),
\end{align}
with $d_A=\max_i{\dim{(A_i)}}$, $\delta_x$ being the distribution with all weight on element x, and $\mathcal{Q}$ being the set of all distributions on $\mathcal{C}$ that could be produced by a GEAT channel $\mathcal{M}_i$. 
\end{theorem}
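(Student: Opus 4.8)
The plan is to treat Theorem~\ref{thrm:EAT} as a combination of the generalized entropy accumulation chain rule of \cite{MR23} with the sharpened second-order analysis of \cite{LLR+21}, rather than reproving either from scratch. The backbone is the observation that for a sequence of GEAT channels the defining structure in Definition~\ref{def:eat_channel} --- the reconstruction map $\mathcal{P}$ together with the non-signaling maps $\mathcal{R}_i$ --- is exactly what licenses a sandwiched \Renyi{} chain rule that lower-bounds $H^{\uparrow}_\alpha(A^n|E_n)$ by a sum of single-round contributions of the form $\inf_{\nu}H_\alpha(A_i|E_iR)_\nu$. First I would recall this chain rule and verify that the hypotheses (classical $C_i$ on a common alphabet, orthogonal projector families, deterministic $r$) place us squarely in its scope, so that the $n$-round entropy genuinely tensorizes into a round-by-round bound.

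The next step is to convert each single-round sandwiched \Renyi{} entropy into a von Neumann entropy controlled by the min-tradeoff function. Because $f$ is an affine rate function, $f(\mbf{p})\le\inf_{\nu\in\Sigma_i(\mbf{p})}H(A_i|E_iR)_\nu$ for every single-round distribution $\mbf{p}$; expanding $H_\alpha$ around $\alpha=1$ then produces a first-order variance correction plus a controlled remainder. This is precisely where the \cite{LLR+21} improvement enters: it supplies the tighter expression for the per-round defect, giving the $T_\alpha(f)$ term (with $\Var(\mbf{p},f)$ feeding into $V(\mbf{p},f)$) and the third-order remainder packaged into $K(\alpha)$. Summing the $n$ round bounds yields the $nT_\alpha(f)$ contribution and the $-n\big(\tfrac{\alpha-1}{2-\alpha}\big)^2K(\alpha)$ penalty, with the dimension factor entering through $\CvsQ\log d_A$ and $d_A^{\CvsQ}$ according to whether the $A_i$ are classical or quantum.

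The remaining ingredients are the event conditioning and the extraction of $h$. Conditioning the output state on the classical event $\Omega$ on $C_1^n$ changes the optimized \Renyi{} entropy by at most $-\tfrac{\alpha}{\alpha-1}\log\tfrac{1}{\pr{\Omega}}$, which is the standard estimate for how $H^{\uparrow}_\alpha$ behaves under restriction to the subnormalized branch selected by a classical register; I would invoke exactly this property. To obtain the leading $nh$ term I would use that $f$ is affine, so that $n\,f(\freq(c^n))=\mbf{f}\cdot(\text{count vector of }c^n)+nk_f$ decomposes additively over rounds and is therefore compatible with the per-round sum produced by the chain rule. Replacing the observed frequency by its worst case over $\Omega$, namely $h=\min_{c^n\in\Omega}f(\freq(c^n))$, then converts the summed per-round rate into $nh$ while keeping the bound valid for every $c^n$ consistent with $\Omega$.

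I expect the main obstacle to be reconciling the two cited results into a single consistent statement. The Metger--Renner chain rule and the \cite{LLR+21} second-order refinement are stated under somewhat different conventions, so the real work is verifying that the improved variance and third-order terms transport to the generalized, non-signaling GEAT structure, and that all conventions align --- the admissible range $\alpha\in(1,3/2)$, the classical/quantum distinction encoded in $\CvsQ$, and the definition of $\mathcal{Q}$ as the set of round-achievable distributions over which $T_\alpha$ and $\min_{\mathcal{Q}}(f)$ are optimized. A secondary but genuinely technical point is controlling the operator Taylor remainder that yields $K(\alpha)$, where the norm estimates and the $\max(f)-\min_{\mathcal{Q}}(f)$ range must be tracked carefully to land on the stated closed form.
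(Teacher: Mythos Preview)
Your proposal is reasonable, but you should note that the paper does not actually prove this theorem: it is stated as a cited result, with the attribution to \cite{arx_Metger2022Mar} Theorem~4.3 and \cite{LLR+21} Supplement, Theorem~2 appearing directly in the theorem header, and no proof is given. The paper simply invokes it as a black box in the subsequent security argument.

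That said, your sketch of how one would obtain the statement --- the GEAT chain rule of Metger--Renner supplying the tensorization of $H^\uparrow_\alpha(A^n|E_n)$ into single-round contributions under the non-signaling and reconstruction conditions of Definition~\ref{def:eat_channel}, followed by the sharpened variance and third-order analysis of \cite{LLR+21} to produce $T_\alpha(f)$ and $K(\alpha)$, together with the standard conditioning penalty $-\tfrac{\alpha}{\alpha-1}\log\tfrac{1}{\pr{\Omega}}$ --- is exactly what the dual citation in the theorem header indicates. Your identification of the main technical work as verifying that the \cite{LLR+21} second-order refinement (originally formulated for the EAT) transports to the GEAT setting is also accurate; this is indeed the nontrivial point and is handled in the cited references rather than in this paper. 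So your plan matches the intended provenance of the result, but there is nothing in the paper itself to compare against beyond the bare citation.
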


Before formally defining the notion of security in QKD protocols, we state a result regarding privacy amplification with {\Renyi} entropies, which was proven in \cite{Dupuis2023IEEETrans.Inf.Theory}.
\begin{theorem}[\cite{Dupuis2023IEEETrans.Inf.Theory} Theorem~9]
\label{thrm:privacy_amp}
Let $\rho_{AE}$ be a classical-quantum state, and $(\mathcal{F}_{\mathcal{A}\rightarrow\mathcal{Z}},p_f)$ be a family of two-universal hash functions with $\mathcal{Z}=\{0,1\}^l$. Then if $f$ is a function drawn uniformly from that family, for $\alpha\in(1,2)$ we have
\begin{align}
    \label{eq:privacy_amp}
    \mathbb{E}_f\left\|\rho^f_{ZE}-\frac{1}{|\mathcal{Z}|}\id_z\otimes\rho_{E} \right\|_1=\left\|\rho_{ZEF}-\frac{1}{|\mathcal{Z}|}\id_z\otimes\rho_{EF} \right\|_1 \leq 2^{\frac{2}{\alpha}-1}2^{\frac{1-\alpha}{\alpha}\big(H_\alpha^\uparrow(A|E)_\rho-l \big)},
\end{align}
where $F$ is the register that stores the choice of hash function.
\end{theorem}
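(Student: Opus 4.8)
The plan is to reprove this as a one-shot leftover-hashing bound phrased in sandwiched \Renyi\ entropies, in the ``without smoothing'' style. The claimed equality is essentially free: since $F$ is a classical register recording the hash choice with distribution $p_f$, the operator $\rho_{ZEF}-\frac{1}{|\mathcal{Z}|}\id\otimes\rho_{EF}$ is block-diagonal across $F$, so its trace norm equals $\sum_f p_f \norm{\rho^f_{ZE}-\frac{1}{|\mathcal{Z}|}\id\otimes\rho_E}_1=\mathbb{E}_f\norm{\rho^f_{ZE}-\frac{1}{|\mathcal{Z}|}\id\otimes\rho_E}_1$. It therefore suffices to bound this averaged trace distance. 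Throughout I fix an arbitrary $\sigma_E\in\dop{=}(E)$, to be optimized only at the very end so as to reproduce the supremum defining $H_\alpha^\uparrow$.

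The first step is a weighted Hölder inequality: for Hermitian $M$ on $E$ and normalized $\sigma_E$, writing $M=\sigma_E^{\frac{\alpha-1}{2\alpha}}\big(\sigma_E^{\frac{1-\alpha}{2\alpha}}M\sigma_E^{\frac{1-\alpha}{2\alpha}}\big)\sigma_E^{\frac{\alpha-1}{2\alpha}}$ and applying Hölder with conjugate exponents $\alpha$ and $\tfrac{\alpha}{\alpha-1}$ gives $\norm{M}_1\le\norm{\sigma_E^{\frac{1-\alpha}{2\alpha}}M\sigma_E^{\frac{1-\alpha}{2\alpha}}}_\alpha$. Applying this on $ZE$ with the normalized reference $\frac{1}{|\mathcal{Z}|}\id\otimes\sigma_E$ and pulling the scalar $\big(\tfrac{1}{|\mathcal{Z}|}\id\big)^{\frac{1-\alpha}{2\alpha}}$ out of the norm yields $\norm{\rho^f_{ZE}-\frac{1}{|\mathcal{Z}|}\id\otimes\rho_E}_1\le|\mathcal{Z}|^{\frac{\alpha-1}{\alpha}}\norm{\tilde\Delta^f}_\alpha$, where $\tilde\Delta^f\defvar(\id_Z\otimes\sigma_E^{\frac{1-\alpha}{2\alpha}})(\rho^f_{ZE}-\frac{1}{|\mathcal{Z}|}\id\otimes\rho_E)(\id_Z\otimes\sigma_E^{\frac{1-\alpha}{2\alpha}})$. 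Since $t\mapsto t^{1/\alpha}$ is concave for $\alpha>1$, Jensen's inequality gives $\mathbb{E}_f\norm{\tilde\Delta^f}_\alpha\le\big(\mathbb{E}_f\tr{|\tilde\Delta^f|^\alpha}\big)^{1/\alpha}$, so the task reduces to bounding $\mathbb{E}_f\tr{|\tilde\Delta^f|^\alpha}$.

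The technical heart is this last quantity, and it is where I expect the real difficulty to lie. The key structural input is that averaging over the hash family decouples the key, $\mathbb{E}_f[\rho^f_{ZE}]=\frac{1}{|\mathcal{Z}|}\id\otimes\rho_E$, so that $\tilde\Delta^f=\tilde\rho^f-\tilde\mu$ is a mean-zero fluctuation of the positive operator $\tilde\rho^f\defvar(\id_Z\otimes\sigma_E^{\frac{1-\alpha}{2\alpha}})\rho^f_{ZE}(\id_Z\otimes\sigma_E^{\frac{1-\alpha}{2\alpha}})$ about its mean $\tilde\mu\defvar\mathbb{E}_f[\tilde\rho^f]$. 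I would then invoke the noncommutative ``variance'' estimate valid for $\alpha\in(1,2]$ --- the operator analogue of the scalar inequality $\mathbb{E}\,|x-\mathbb{E}x|^\alpha\le\mathbb{E}\,x^\alpha-(\mathbb{E}x)^\alpha$, which itself rests on the superadditivity $(a+b)^\alpha\ge a^\alpha+b^\alpha$ for $\alpha\ge1$ --- to obtain $\mathbb{E}_f\tr{|\tilde\Delta^f|^\alpha}\le\mathbb{E}_f\tr{(\tilde\rho^f)^\alpha}-\tr{\tilde\mu^\alpha}$. Establishing this for non-integer $\alpha$ and noncommuting off-diagonal blocks is delicate (this is precisely the content that removes the need for smoothing), and it is the source of the constant that ultimately appears as $2^{\frac{2}{\alpha}-1}$ in the final bound (equal to $1$ at $\alpha=2$ and degrading toward $\alpha\to1$).

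Finally I would evaluate $\mathbb{E}_f\tr{(\tilde\rho^f)^\alpha}$ using two-universality. Writing $\tilde\rho^f=\sum_z\pure{z}\otimes\sigma_E^{\frac{1-\alpha}{2\alpha}}\big(\sum_{a:f(a)=z}\omega^a_E\big)\sigma_E^{\frac{1-\alpha}{2\alpha}}$ in terms of the subnormalized conditional states $\omega^a_E$ of $\rho_{AE}$, expanding the $\alpha$-power and averaging over $f$, the off-diagonal collision contributions are controlled by the two-universal bound $\Pr_f[f(a)=f(a')]\le1/|\mathcal{Z}|$ for $a\neq a'$; these assemble into a term matching $\tr{\tilde\mu^\alpha}$ (cancelling the subtraction above, exactly so when $\alpha=2$ and via the same mechanism in general), while the diagonal no-collision contribution reconstructs $\tr{\big(\sigma_E^{\frac{1-\alpha}{2\alpha}}\rho_{AE}\sigma_E^{\frac{1-\alpha}{2\alpha}}\big)^\alpha}=2^{(\alpha-1)D_\alpha(\rho_{AE}\|\id_A\otimes\sigma_E)}$. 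Collecting the prefactor $|\mathcal{Z}|^{\frac{\alpha-1}{\alpha}}=2^{\frac{\alpha-1}{\alpha}l}$, the outer $1/\alpha$ power, and the constant, and then taking the supremum over $\sigma_E$ to turn $-D_\alpha(\rho_{AE}\|\id_A\otimes\sigma_E)$ into $H_\alpha^\uparrow(A|E)_\rho$, reproduces the stated bound $2^{\frac{2}{\alpha}-1}2^{\frac{1-\alpha}{\alpha}(H_\alpha^\uparrow(A|E)_\rho-l)}$.
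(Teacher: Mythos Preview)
First, note that the paper does not actually prove this statement: it is quoted verbatim as Theorem~8 of~\cite{arx_Dup21} and used as a black box. So there is no ``paper's proof'' to compare against; the relevant comparison is with Dupuis's original argument.

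Your equality $\mathbb{E}_f\norm{\cdot}_1=\norm{\cdot}_1$ via block-diagonality in $F$ is fine, and the H\"older step $\norm{M}_1\le\norm{\sigma^{\frac{1-\alpha}{2\alpha}}M\sigma^{\frac{1-\alpha}{2\alpha}}}_\alpha$ together with Jensen are standard and correct. The gap is in the two steps you flag as ``delicate'': as written they do not go through.

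For the ``variance estimate'' in step~4, the scalar inequality you invoke, $\mathbb{E}|X-\mathbb{E}X|^\alpha\le\mathbb{E}X^\alpha-(\mathbb{E}X)^\alpha$ for nonnegative $X$ and $\alpha\in(1,2)$, is simply false. Take $X\in\{0,2\}$ each with probability $\tfrac12$: then $\mathbb{E}|X-1|^\alpha=1$ while $\mathbb{E}X^\alpha-1=2^{\alpha-1}-1<1$ for every $\alpha<2$ (and the gap blows up as $\alpha\to1$). The superadditivity $(a+b)^\alpha\ge a^\alpha+b^\alpha$ only controls the region $X\ge\mu$; on $X<\mu$ the pointwise inequality goes the wrong way, and averaging does not repair this. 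So there is no operator analogue to prove here, even up to the constant $2^{2-\alpha}$ you anticipate.

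Step~5 has a related problem: for non-integer $\alpha$ there is no expansion of $\big(\sum_{a:f(a)=z}\tilde\omega^a\big)^\alpha$ into ``diagonal'' and ``collision'' terms that the two-universality bound can be applied to. That expansion is exactly what makes the $\alpha=2$ (collision-entropy) proof work, and it does not generalize by analytic continuation of the combinatorics. What your sketch really is, then, is the classical $\alpha=2$ leftover-hash proof with the hope that each step survives for $\alpha\in(1,2)$; the two steps that carry all the content do not.

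Dupuis's actual argument is structurally different: rather than a moment inequality for $|\tilde\rho^f-\tilde\mu|^\alpha$, he works directly with sandwiched {\Renyi} quantities and uses operator-convexity\,/\,interpolation tools tailored to $\alpha\in(1,2]$ (in the spirit of uniform smoothness of Schatten classes) to compare the hashed {\Renyi} quantity to the unhashed one in one stroke, with the constant $2^{\frac{2}{\alpha}-1}$ emerging from that operator inequality rather than from a variance subtraction. If you want to reconstruct a proof, that is the missing ingredient to look up; the H\"older-plus-Jensen scaffolding you already have is compatible with it.
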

We now proceed to present the composable definition of security for a QKD protocol.
\begin{definition}
    \label{def:security}
    Consider a generic QKD protocol, and let $\rho_{K_AK_BE}$ be the final state at the end of the protocol between Alice and Bob's key register, and Eve's side information register (including all public announcements made during the protocol). We define the following terms:
\begin{itemize}
    \item The protocol is $\esecret$-secret if any possible output state of the protocol satisfies 
    \begin{align}\label{eq:secrecy}
        \frac{1}{2} \pr{\Omega_{\mathrm{acc}}} \left\| \rho_{K_AE|\Omega_{\mathrm{acc}}}-\frac{\id_{K_A}}{|\mathcal{K}_A|}\otimes\rho_{E|\Omega_{\mathrm{acc}}}\right\|_1\leq\esecret.
    \end{align}
    where $\Omega_{\mathrm{acc}}$ is the event that the protocol accepts.
    \item The protocol is $\ecorr$-correct if any possible output state of the protocol satisfies 
    \begin{align}
        \label{eq:correctness}
        \pr{K_A\neq K_B\land \text{accept}}\leq\ecorr.
    \end{align}
        \item The protocol is $\ecom$-complete if, 
        for the state produced by the honest protocol implementation (described by some given noise model), we have 
    \begin{align}
        \label{eq:completeness}
        \pr{\text{abort}}\leq\ecom.
    \end{align}
\end{itemize}
If the protocol is $\esecret$-secret and $\ecorr$-correct, we refer to it as $\esecure$-secure, where $\esecure=\esecret+\ecorr$.
\end{definition}

\subsection{Security}\label{subsec:security}

We first prove the security statement for Protocol~\ref{Prot:PM Protocol}. We highlight that by using the GEAT instead of the EAT (which allows us to directly include the test-round announcements in the conditioning registers) and the {\Renyi} privacy amplification theorem of~\cite{Dupuis2023IEEETrans.Inf.Theory}, we obtain a remarkably simple formula for the secrecy parameter as compared to previous works --- it consists only of a single parameter $\ePA$ that appears in a simple fashion in the key length formula, informally representing the ``cost'' of the privacy amplification task.
\begin{theorem}
    For any $\alpha\in(1,3/2)$, $\ePA,\eEV\in(0,1]$, Protocol~\ref{Prot:PM Protocol} is $\ePA$-secret, $\eEV$-correct, and hence $(\ePA+\eEV)$-secure, when the length $l$ of the final key satisfies
    \begin{align}
    \label{eq:keylength}
    l\leq nh+nT_\alpha(f)-n\left(\frac{\alpha-1}{2-\alpha}\right)^2K(\alpha)-\lambdaEC-\ceil{\log\frac{1}{\eEV}}-\frac{\alpha}{\alpha-1}\log\frac{1}{\ePA}+2,
    \end{align}
    where $h,T_\alpha(f)$, and $K(\alpha)$ are as defined in Theorem~\ref{thrm:EAT}, and $\lambdaEC$ is the length of the error-correction string.
\end{theorem}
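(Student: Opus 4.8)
The plan is to prove correctness and secrecy separately, then combine them additively to get the security claim, following the standard composable-security decomposition.

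For $\eEV$-correctness, the argument is short: the key registers $K_A, K_B$ are deterministic functions (the privacy amplification hash) of Alice's and Bob's strings $S_1^n$ and $\widehat{S}_1^n$ respectively. By the same hash function applied to both strings, if $S_1^n = \widehat{S}_1^n$ then $K_A = K_B$. Therefore the event $\{K_A \neq K_B \land \text{accept}\}$ is contained in the event $\{S_1^n \neq \widehat{S}_1^n \land \Omega_{\mathrm{EV}}\}$. The error-verification step sends a 2-universal hash of length $\lceil \log(1/\eEV)\rceil$, and by the 2-universality property the probability that two distinct strings collide under such a hash is at most $2^{-\lceil \log(1/\eEV)\rceil} \leq \eEV$. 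This bounds the correctness error independently of the key length formula.

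**Secrecy via Rényi privacy amplification**

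The main work is the $\ePA$-secrecy bound. The plan is to apply Theorem~\ref{thrm:privacy_amp} with the classical-quantum state being Alice's raw string $S_1^n$ (register $A$) against Eve's total side information. The key step is identifying the correct side-information register: it consists of Eve's quantum register $E_n$, all the public announcements $I_1^n$ (including error-correction bits and the error-verification hash), and the choice of privacy-amplification hash function. I would first absorb $I_1^n$ and the test-round data into the conditioning by noting that the GEAT channels are constructed so that the announcements $C_i$ and communicated registers $I_i$ are included appropriately; the GEAT bound of Theorem~\ref{thrm:EAT} gives a lower bound on $H^\uparrow_\alpha(A^n|E_n)$ conditioned on the accept event $\Omega_{\mathrm{AT}}$. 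I would then account for the error-correction leakage $\lambdaEC$ and the error-verification hash of length $\lceil\log(1/\eEV)\rceil$ by a chain-rule / data-processing argument that subtracts these classical leakages from the conditional Rényi entropy (each bit of leakage can reduce $H^\uparrow_\alpha$ by at most its length). Feeding the resulting entropy bound into the right-hand side of Eq.~\eqref{eq:privacy_amp} and setting $|\mathcal{Z}| = 2^l$, the secrecy error is at most $2^{\frac{2}{\alpha}-1} 2^{\frac{1-\alpha}{\alpha}(H^\uparrow_\alpha(A^n|E_n) - \lambdaEC - \lceil\log(1/\eEV)\rceil - l)}$. Requiring this to be at most $\ePA$ and solving for $l$ reproduces the claimed key-length formula; the additive constant $+2$ and the term $-\frac{\alpha}{\alpha-1}\log\frac{1}{\ePA}$ arise precisely from rearranging the prefactor $2^{2/\alpha-1}$ and the exponent when isolating $l$.

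**The main obstacle**

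The step I expect to be most delicate is correctly handling the conditioning on the accept event and the probability factor $\pr{\Omega_{\mathrm{acc}}}$ that appears in the secrecy definition Eq.~\eqref{eq:secrecy}. The GEAT bound is stated for the state conditioned on $\Omega_{\mathrm{AT}}$ with a penalty term $-\frac{\alpha}{\alpha-1}\log\frac{1}{\pr{\Omega}}$, whereas the secrecy definition involves $\Omega_{\mathrm{acc}} = \Omega_{\mathrm{AT}}\land\Omega_{\mathrm{EV}}$ weighted by its probability. I would reconcile these by observing that conditioning on the additional event $\Omega_{\mathrm{EV}}$ only decreases probability and that the factor $\frac12\pr{\Omega_{\mathrm{acc}}}$ in the secrecy definition combines favorably with the expectation over hash functions in Theorem~\ref{thrm:privacy_amp}: the probability weight cancels against a corresponding factor so that it suffices to bound the \emph{unconditioned} trace distance, letting us avoid paying the $\frac{\alpha}{\alpha-1}\log\frac{1}{\pr{\Omega}}$ penalty twice. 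Tracking this cancellation carefully — and verifying that the entropy used in privacy amplification is exactly the GEAT output entropy on $\Omega_{\mathrm{AT}}$ after subtracting the classical leakages — is where the bookkeeping must be done with care to land on the clean formula with its single $\ePA$ parameter.
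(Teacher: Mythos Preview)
Your proposal is correct and follows essentially the same approach as the paper: correctness via the 2-universal hash collision bound, secrecy via the {\Renyi} privacy amplification theorem combined with the GEAT bound on $H^\uparrow_\alpha(S^n|I^nE_n)_{\rho_{|\Omega_{\mathrm{AT}}}}$, with the classical leakage $\lambdaEC+\lceil\log(1/\eEV)\rceil$ removed by a chain rule for {\Renyi} entropy.

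One clarification on the step you correctly flag as delicate: the cancellation does \emph{not} work by reducing to an unconditioned trace distance. The paper applies privacy amplification directly to the state conditioned on $\Omega_{\mathrm{acc}}$, then uses a {\Renyi} conditioning lemma (\cite[Lemma~B.5]{DFR20}) to relate $H^\uparrow_\alpha$ on $\rho_{|\Omega_{\mathrm{acc}}}$ to $H^\uparrow_\alpha$ on $\rho_{|\Omega_{\mathrm{AT}}}$ at the cost of $\frac{\alpha}{\alpha-1}\log\frac{1}{\pr{\Omega_{\mathrm{EV}}|\Omega_{\mathrm{AT}}}}$. Together with the GEAT penalty $\frac{\alpha}{\alpha-1}\log\frac{1}{\pr{\Omega_{\mathrm{AT}}}}$, both log-probability terms enter the exponent of the privacy amplification bound with coefficient $\frac{1-\alpha}{\alpha}$, and multiplying by the prefactor $\pr{\Omega_{\mathrm{acc}}}=\pr{\Omega_{\mathrm{EV}}|\Omega_{\mathrm{AT}}}\pr{\Omega_{\mathrm{AT}}}$ from the secrecy definition cancels them exactly. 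So it is not a matter of avoiding a double penalty but of two separate penalties each being absorbed by the corresponding probability factor; tracking this is what yields the clean single-$\ePA$ formula.
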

As discussed in~\cite[Theorem~V.2]{DF19}, the above formula converges asymptotically to the Devetak-Winter key rate by taking $\alpha = 1+O(1/\sqrt{n})$. However, to obtain the best finite-size key rates, one should optimize $\alpha$ numerically for each $n$, which we do in our subsequent computations.

We now present the proof of the above formula.
\begin{proof}
Proving correctness is straightforward: let us denote the event of Bob correctly guessing Alice's string by $\Omega_g$, then the expression in Eq.~(\ref{eq:correctness}) can be written as:
\begin{align}
    \label{eq:correction_apply}
    \pr{K_A\neq K_B\land \Omega_{\mathrm{acc}}}\leq\pr{K_A\neq K_B\land \Omega_{\mathrm{EV}}}\leq\pr{\Omega^c_g\land \Omega_{\mathrm{EV}}}\leq\pr{\Omega_{\mathrm{EV}}|\Omega^c_g}\leq 2^{-\ceil{\log \frac{1}{\eEV}}}\leq \eEV,
\end{align}
where $\Omega^c_g$ is the complementary event to $\Omega^c_g$. Thus, the protocol has correctness parameter $\ecorr=\eEV$. 

To prove secrecy, let us denote the final state at the end of the QKD protocol by $\rho_{K_AK_BI^nE'_nL}$, where $K_A,K_B$ are Alice and Bob's key registers respectively, $I^n$ denotes the public communication registers over all rounds, $E'_n$ is Eve's quantum side information register, and $L$ is the register containing the error correction and error verification information. Note that the event in which the protocol does not abort is $\Omega_{\mathrm{acc}}=\Omega_{\mathrm{EV}}\land\Omega_{\mathrm{AT}}$, where $\Omega_{\mathrm{EV}}$ is the event where the error verification succeeds, and $\Omega_{\mathrm{AT}}$ is the event of having the acceptance test step passes.
Then, the secrecy condition in Eq.~(\ref{eq:secrecy}) can be written as
\begin{align}
    \label{eq:secrecy2}
   \frac{1}{2}\pr{\Omega_{\mathrm{EV}} \land \Omega_{\mathrm{AT}}}\left\|\rho_{K_AI^nE'_nL_{|_{\Omega_{\mathrm{acc}}}}}-\frac{\id_{K_A}}{|\mathcal{K}_A|}\otimes\rho_{I^nE'_nL_{|_{\Omega_{\mathrm{acc}}}}}\right\|_1\leq\esecret.
\end{align}

To show the above bound holds, we first apply Theorem~\ref{thrm:privacy_amp} 
to find:
\begin{align}
    \label{eq:PA_S}
    \frac{1}{2}\left\|\rho_{K_AI^nE'_nL_{|_{\Omega_{\mathrm{acc}}}}}-\frac{\id_{K_A}}{|\mathcal{K}_A|}\otimes\rho_{I^nE'_nL_{|_{\Omega_{\mathrm{acc}}}}}\right\|_1\leq 2^{\frac{2(1-\alpha)}{\alpha}}2^{\frac{1-\alpha}{\alpha}\left(H_\alpha^\uparrow(S^n|I^nE'_nL)_{\rho_{|_{\Omega_{\mathrm{acc}}}}}-l\right)},
\end{align}
where $S^n$ is the key register before applying the hash function.
On the other hand, the entropic quantity in the exponent of the RHS of the above inequality can be bounded as follows:
\begin{align}
    \label{eq:alpha_bound}
    H_\alpha^\uparrow(S^n|I^nE'_nL)_{\rho_{|_{\Omega_{\mathrm{acc}}}}}&\ge H_\alpha^\uparrow(S^n|I^nE'_nL)_{\rho_{|_{\Omega_{\mathrm{AT}}}}}+\frac{\alpha}{1-\alpha}\log\frac{1}{\pr{\Omega_{\mathrm{EV}}|\Omega_{\mathrm{AT}}}}\cr
    &\ge H_\alpha^\uparrow(S^n|I^nE'_n)_{\rho_{|_{\Omega_{\mathrm{AT}}}}}+\frac{\alpha}{1-\alpha}\log\frac{1}{\pr{\Omega_{\mathrm{EV}}|\Omega_{\mathrm{AT}}}}-\lambdaEC-\ceil{\log(1/\eEV)},\cr
\end{align}
where in the first line we used Lemma B.5 in \cite{DFR20}, and in the second line we used Lemma 6.8 in \cite{Tom16}. Note that the $\lambdaEC$ term arises from the length of the error-correction string, while the $\ceil{\log(1/\eEV)}$ term arises from the length of the error-verification hash. 
It remains to bound the first term in Eq.~(\ref{eq:alpha_bound}). To do so, we first note that by setting $E_i\equiv E'_iI_1\cdots I_i
$ in Theorem~\ref{thrm:EAT}, 
we can apply that theorem to lower bound $H_\alpha^\uparrow(S^n|E_n)_{\rho_{|_{\Omega_{\mathrm{AT}}}}}$ via Eq.~(\ref{eq:EATbound}), modelling the protocol with a sequence of GEAT channels as shown in~\cite[Claim~10]{MR23} (see also~\cite[Sec.~5]{arx_AT25} for an improved model in which we do not need to restrict Eve to interact with a single signal at a time, while still obtaining the same entropy bounds). Combining this with Eq.~(\ref{eq:PA_S})--(\ref{eq:alpha_bound}), we can write\footnote{Notice that in this calculation, the log-probability terms in the preceding bound end up corresponding exactly to the accept-probability prefactors in the secrecy definition, as noted in~\cite{Dupuis2023IEEETrans.Inf.Theory}. This is basically the reason we have such a simple formula for the final secrecy parameter. Previous calculations based on the original EAT~\cite{arx_GLvH+22} did not have this property, because handling the test-round announcements in that framework involved the use of an additional chain rule that changed the {\Renyi} parameter.}
\begin{align}
    \label{eq:PA_2}
    &\frac{1}{2}\pr{\Omega_{\mathrm{EV}} \land \Omega_{\mathrm{AT}}}\left\|\rho_{KI^nE'_nL_{|_{\Omega_{\mathrm{acc}}}}}-\frac{\id_{K_A}}{|\mathcal{K}_A|}\otimes\rho_{I^nE'_nL_{|_{\Omega_{\mathrm{acc}}}}}\right\|_1\nonumber\\
    =&\frac{1}{2}\pr{\Omega_{\mathrm{EV}}|\Omega_{\mathrm{AT}}}\pr{\Omega_{\mathrm{AT}}}\left\|\rho_{KI^nE'_nL_{|_{\Omega_{\mathrm{acc}}}}}-\frac{\id_{K_A}}{|\mathcal{K}_A|}\otimes\rho_{I^nE'_nL_{|_{\Omega_{\mathrm{acc}}}}}\right\|_1\nonumber\\
    \leq& 2^{\frac{1-\alpha}{\alpha}\left(2+nh+T_\alpha(f)-n\left(\frac{\alpha-1}{2-\alpha}\right)^2K(\alpha)-\lambdaEC-l-\ceil{\log(1/\eEV)}\right)}
    ,
\end{align}
Comparing this to the secrecy definition in Eq.~(\ref{eq:secrecy2}), we conclude that  the protocol is $\ePA$-secret as long as the key length satisfies the formula~\eqref{eq:keylength}.

Combining Eq.~(\ref{eq:keylength}) and Eq.~(\ref{eq:correction_apply}), we obtain the statements of the theorem.
\end{proof}

\subsection{Completeness}\label{subsec:completeness}

We now turn to the completeness parameter $\ecom$, i.e.~an upper bound on the probability of an ``accidental'' abort in the honest case. Recalling that the only points the protocol could abort are during the acceptance test (Step~\ref{step:AT}) or error verification (Step~\ref{step:ECandEV}), from the union bound we immediately see it suffices to choose
\begin{align}
\ecom = \ecomAT + \ecomEV,
\end{align}
where $\ecomAT$ and $\ecomEV$ are 
any upper bounds on 
the probabilities of the (honest) behaviour aborting during the acceptance test and error verification respectively. Since completeness is a property that is completely unrelated to the security of the protocol (it only involves how often the honest protocol ``accidentally'' aborts), the value of $\ecom$ can usually be chosen less stringently as compared to $\esecure$. In this work, we shall choose values on the order of $10^{-3}$.

We begin by considering the $\ecomAT$ term. First, we highlight that our above discussions regarding the security of the protocol were proven for arbitrary choices of the acceptance set $S_{\mathrm{acc}}$. However, to compute more explicit bounds on $\ecomAT$, we shall now focus on specific choices of $S_{\mathrm{acc}}$: namely, we consider protocols where the acceptance set is defined by ``entrywise'' constraints with respect to the honest behaviour, i.e.~letting $\mathbb{P}_{\mathcal{C}}$ denote probability distributions on $\mathcal{C}$, we have
\begin{align}\label{eq:acceptentrywise}
S_{\mathrm{acc}} = \left\{ \mbf{p}^\mathrm{acc} \in \mathbb{P}_{\mathcal{C}} \;\middle|\; \forall c \in \mathcal{C},\; \phonc -  \tlowc \leq p_c^\mathrm{acc} \leq \phonc + \tuppc \right\},
\end{align}
where $\tlow, \tupp \in \mathbb{R}_{\geq 0}^{|\mathcal{C}|}$, and $\phon$ denotes the distribution produced on a single-round register $C_j$ by the honest behaviour.\footnote{Note that by the structure specified in the protocol, we will always have $p^\mathrm{hon}_{\perp} = 1-\gamma$. 
}
Our analysis of $\ecomAT$ (and also one other step in our explicit key rate calculations later on, explained in Sec.~\ref{subsubsec:deltacomp}) will be focused on $S_{\mathrm{acc}}$ of the above form.

To bound $\ecomAT$ for such $S_{\mathrm{acc}}$, first focus on some specific value $c \in \mathcal{C}$. Now observe that since the honest behaviour is IID, the frequency of outcome $c$ in the string it produces simply follows a binomial distribution, with the single-trial ``success probability'' being $\phonc$. 
Hence the probability that the frequency of $c$ lies outside the acceptance interval $[\phonc -  \tlowc , \phonc + \tuppc]$ specified by $S_{\mathrm{acc}}$ can simply be written as
\begin{align}
\pr[X\sim\operatorname{Binom}\left(n,\phonc\right)]{\frac{X}{n} \notin [\phonc -  \tlowc , \phonc + \tuppc]},
\end{align}
where $X\sim\operatorname{Binom}(n,p)$ denotes a random variable $X$ following a binomial distribution with parameters $(n,p)$ (i.e.~$X$ is the sum of $n$ IID Bernoulli random variables $X_j$ with $\pr{X_j=1}=p$). 
Applying the union bound, we conclude that it suffices to take $\ecomAT$ to be any value such that
\begin{align}\label{eq:ecomATbnd}
\ecomAT \geq \sum_c \pr[X\sim\operatorname{Binom}\left(n,\phonc\right)]{\frac{X}{n} \notin [\phonc -  \tlowc , \phonc + \tuppc]}.
\end{align}
We note that the binomial-distribution probabilities in the above formula can be computed using inbuilt functions in Matlab, e.g.~the regularized beta function (with some special-case handling if $\phonc$ is too close to $0$ or $1$).
Hence in this work we use the above formula for $\ecomAT$; more specifically, we use the fixed choice $\ecomAT = 10^{-3}$ and optimize the values of $\tlow, \tupp$ while ensuring the above bound is satisfied (further details in Sec.~\ref{subsubsec:deltacomp}).

\begin{remark}
\label{remark:unique}
A number of past works on QKD have simplified their analysis by focusing on the case where $S_{\mathrm{acc}}$ only includes the single distribution asymptotically produced by the honest behaviour; such protocols are sometimes referred to as \term{unique-acceptance} protocols. Unique-acceptance protocols are not practical to implement, since in the finite-size regime they would almost always abort even under the honest behaviour, i.e.~the completeness parameter is trivial. However, for the purposes of comparison to previous works, in our later computations in Sec.~\ref{sec:Qubit BB84 with loss}--\ref{sec:Decoy-state with improved analysis} we also present some results for unique-acceptance protocols (though we then show how much our key rates change when instead using a realistic acceptance set with nontrivial completeness parameter).
\end{remark}

We now discuss the $\ecomEV$ term. Since the error verification step always accepts if Bob produces a correct guess $\widehat{S}_1^n$ for Alice's string $S_1^n$ after the error-correction procedure, we see that $\ecomEV$ is simply upper bounded by the probability that the error-correction procedure produces an incorrect guess for Bob. 
Fundamentally, this can be analyzed as a task of source compression with side information \cite{Slepian73,Tan13}, with Bob holding the registers $Y_1^n I_1^n$ as his side-information.
Since we are focusing on the honest behaviour, we can restrict to the IID case. 
For this, it is known~\cite{rennerthesis} that in the asymptotic limit, 
vanishing error probability can be achieved by having Alice send approximately $\lambdaEC \approx nH(S|YI)_\mathrm{hon}$ bits, where 
$H(S|YI)_\mathrm{hon}$ denotes the single-round conditional entropy in the honest behaviour.
Unfortunately, in the finite-size regime, many error-correction procedures used in practice do not have tight rigorously proven upper bounds on this probability, only heuristic estimates. For the purposes of this work, we follow a standard heuristic\footnote{We stress however that the heuristic nature of this step only affects the \emph{completeness} parameter, i.e.~the probability of the honest behaviour aborting --- it does not in any way affect the \emph{security} parameter we derived above against dishonest behaviour, which was fully rigorous and involved no heuristics.}~\cite{TMP+17,HL+06}: we take the length of the error-correction string to be
\begin{align}
    \label{eq:error_correction}
    \lambdaEC= nf_{\mathrm{EC}}H(S|YI)_\mathrm{hon},
\end{align}
where 
$f_{\mathrm{EC}}>1$ is an ``error-correction efficiency'' parameter that quantifies how much it differs from the asymptotic value, and we suppose that ``reasonable'' $\ecomEV$ values (say, $10^{-3}$) can be achieved by setting $f_{\mathrm{EC}} = 1.16$. We use this choice throughout all computations in this work. The computation of $H(S|YI)_\mathrm{hon}$ can usually be slightly simplified by using the structure of the public announcements; we describe this for specific protocols in Sec.~\ref{sec:Qubit BB84 with loss}--\ref{sec:Decoy-state with improved analysis}.

\subsection{Formulating the rate function}
\label{subsec:ratefunctions}
\newcommand{\channE}{\mathcal{E}}
\newcommand{\rhotJ}{\rho^t_J}
\newcommand{\rhogJ}{\rho^g_J}
\newcommand{\rhotE}{\rho^t_{\mathcal{E}}}
\newcommand{\rhogE}{\rho^g_{\mathcal{E}}}
\newcommand{\probst}{\mathbf{\Phi}}
\newcommand{\probt}{\Phi}
\newcommand{\Cnoperp}{\mathcal{C}_{\setminus \perp}}

We now describe more precisely the rate function for our protocol. Note that due to the techniques we apply in this work, we will not need to explicitly evaluate the rate function in the form described in this section; rather, we will only be computing modified versions of it that are described in Sec.~\ref{sec:finitesize}.

Consider a single round in Protocol~\ref{Prot:PM Protocol}. For brevity, in this section we omit the subscripts specifying individual rounds. By the source-replacement technique, we can equivalently describe Alice's preparation process in this round as follows: with probability $\gamma$ she prepares a ``test-round state'' \(\ket{\xi^t}_{AA'}\) and sends out the system \(A'\) (keeping the system \(A\) to be measured later with the POVM \(M^A\)), and otherwise she prepares a ``generation-round state'' \(\ket{\xi^g}_{AA'}\).
Eve then applies some channel\footnote{More precisely, in terms of the GEAT channels, this channel $\channE$ consists of Eve appending her past side-information to the register $A'$ and then applying the operation she performs in the GEAT channel $\mathcal{M}_i$. Strictly speaking, this means that the optimization~\eqref{eq:rateE} we define in this section is not in \emph{precisely} the same form as the optimization~\eqref{eq:ratefuncdef} in the definition of a rate function, as the former is an optimization over channels while the latter is an optimization over input states to the GEAT channels. However, it is clear from the structure of our prepare-and-measure protocol that the optimizations are basically equivalent once we consider the ``worst-case behaviour'' over all GEAT channels $\mathcal{M}_i$ that could describe Eve's possible attacks (see also~\cite{MR23} for similar discussion, under the term ``collective-attack bounds'').} $\channE: A' \rightarrow B$ on the  \(A'\) register;
let us denote the resulting states on $AB$ in the test and generation cases as (respectively)
\begin{align}\label{eq:rhotg_channel}
\begin{gathered}
\rhotE \defvar \left( \idmap_A \otimes \channE \right) \left( \ketbra{\xi^t}{\xi^t}_{AA'} \right) , \\
\rhogE \defvar  \left( \idmap_A \otimes \channE \right) \left( \ketbra{\xi^g}{\xi^g}_{AA'} \right) .
\end{gathered}
\end{align}
She then forwards the \(B\) register to Bob, who performs his measurements. Bob's measurements can be described as acting upon the state with his (potentially squashed) POVM \(M^B\). Alice and Bob then make some public announcement $I$, and Alice uses it to process her raw measurement outcome into some value $S$, as described by the protocol.

Let $W(\rhogE)$ denote the conditional entropy
$H(S|IE)$
of the state that would be produced at the end of this process if Eve keeps an arbitrary purification $E$ when applying the channel $\mathcal{E}$. Note that in our above notation $\rhogE$ is a state on only the registers $AB$ rather than $ABE$, but since all purifications are isometrically equivalent, the resulting value of $H(S|IE)$ can indeed be computed using only this state on $AB$, as discussed in~\cite{WLC18}. 
Furthermore, as shown in that work, $W$ can be expressed as a convex function (specifically, a relative entropy), which we present in more detail in Sec.~\ref{sec:Qubit BB84 with loss}--\ref{sec:Decoy-state with improved analysis}.
Also note that in our protocol, it depends only on $\rhogE$ and not $\rhotE$ because when the round is a test round, Alice simply announces her outcome and hence it does not contribute any entropy; we elaborate further on this when presenting the detailed formulas in those sections. 

Furthermore, conditioned on the round being a test round, let $\probt_c[\rhotE]$ denote the (conditional) probability of obtaining the value $c$ on the $C$ register when measuring $\rhotE$ --- put another way, the overall (unconditioned) distribution on the $C$ register is such that the probability of value $c$ is $\gamma\probt_c[\rhotE]$ for $c\neq\perp$, and $1-\gamma$ for $c=\perp$. For brevity, we introduce the notation 
\begin{align}
\Cnoperp \defvar \mathcal{C} \,\setminus \{\perp\},
\end{align}
i.e.~the alphabet $\mathcal{C}$ with the $\perp$ symbol excluded, and similarly for any probability distribution $\mbf{p}$ we let $\mbf{p}_{\setminus\perp}$ denote the components of $\mbf{p}$ with $p_\perp$ omitted. 
Also, we shall write the tuple of values $\left\{\probt_c[\rhotE] 
\,\middle|\, c\in\Cnoperp
\right\}$ as $\probst[\rhotE]$ (notice that this tuple does not include a $c=\perp$ term, since $\probt_c[\rhotE]$ is not defined for that case by our above discussion).

We thus see that a valid choice of rate function would be given by the following optimization problem:
\begin{align}\label{eq:rateE}
\begin{aligned}
\rate(\mbf{p}) \defvar\;
\inf_{
\channE} &\quad W(\rhogE) \\ 
\suchthat &\quad 
\gamma \probst[\rhotE] = \mbf{p}_{\setminus\perp}
,
\end{aligned}
\end{align}
where $\rhotE,\rhogE$ are defined in terms of the channel $\channE$ via~\eqref{eq:rhotg_channel}.
Note that for any distribution $\mbf{p}$ such that $p_\perp \neq 1-\gamma$ the optimization is infeasible (because the constraints enforce that $\sum_{c\neq\perp}p_c = \sum_{c\neq\perp}\probt_c[\rhotE] = \gamma$, so $p_\perp = 1-\gamma$ by normalization), and hence we have $\rate(\mbf{p})=+\infty$ for such $\mbf{p}$.
Any affine lower bound on this function $\rate(\mbf{p})$ would be a valid min-tradeoff function.

The above optimization takes place over arbitrary channels $\channE: A' \rightarrow B$, which may appear difficult to describe. However, we can in fact handle it by parametrizing the channel via its Choi state \(J\); specifically, note that the action of a channel can be expressed in terms of its Choi state via \(\channE(X) = \tr[A']{J\left( X^T \otimes \id_{B} \right)}\). This gives us
\begin{align}\label{eq:rateJ}
\begin{aligned}
\rate(\mbf{p}) =\;
\inf_{
J} &\quad W(\rhogJ) \\ 
\suchthat &\quad 
\gamma \probst[\rhotJ] = \mbf{p}_{\setminus\perp}
,
\end{aligned}
\end{align}
where
\begin{align}\label{eq:rhotg_Choi}
\begin{gathered}
\rhogJ \defvar \tr[A']{ \left( \id_A \otimes J \right) \left( \ketbra{\xi^g}{\xi^g}^{T_{A'}} \otimes \id_B \right) }, \\
\rhotJ \defvar \tr[A']{ \left( \id_A \otimes J \right) \left( \ketbra{\xi^t}{\xi^t}^{T_{A'}} \otimes \id_B \right) }.
\end{gathered}
\end{align}
Importantly, this parameterization yields a convex optimization problem (given that $W$ is a convex function of $\rhogJ$ and $\Phi_c$ are affine functions of $\rhotJ$). In the remainder of this work, we will mostly focus on using this formulation. Note that the $\probt_c[\rhotJ]$ terms in the constraints can be written more directly in terms of the Choi matrix $J$ by substituting in the expression for $\rhotJ$ to obtain
\begin{align}
\probt_c[\rhotJ] = \tr{\left( M_a^A \otimes J \right) \left( \ketbra{\xi^t}{\xi^t}_{AA'}^{T_{A'}} \otimes M_b^B\right) },
\end{align}
where \(c=(a,b)\).

The constraints in the above optimization have terms on the order of the testing probability $\gamma$, which can present numerical stability issues when $\gamma$ is small. To mitigate this, we use the fact that the channels we consider here are \term{infrequent-sampling channels} in the sense defined in~\cite{DF19}, i.e.~with probability $\gamma$ they perform a test round and record some nontrivial value in the $C$ register, while with probability $1-\gamma$ they simply set the $C$ register to $\perp$.\footnote{Still, it is worth keeping in mind that in the protocol description and this analysis, what we have chosen to call a test round is determined entirely by Alice. Bob's basis choice and announcement will be accounted for in the description of the POVMs on his side, with the subsequent sifting being accounted for in this context within the functions $W$ and $\probst$, rather than the conversion we now describe between crossover min-tradeoff functions and min-tradeoff functions.}
For such channels, we can follow~\cite{DF19,LLR+21} and define the notions of \term{crossover rate functions} and \term{crossover min-tradeoff functions}, which only bound the (single-round) entropy in terms of the distribution \emph{conditioned} on being a test round, and ``cross over'' the resulting entropy bound into a function of the full distribution on the $C$ register. Specifically, the following function (where $\mbf{q}$ is a distribution on $\Cnoperp$) is a crossover rate function~\cite{DF19,LLR+21}:
\begin{align}\label{eq:crossrateJ}
\begin{aligned}
r_\mathrm{cross}(\mbf{q}) \defvar\;
\inf_{J} &\quad W(\rhogJ) \\ 
\suchthat &\quad 
\probst[\rhotJ] = \mbf{q}
.
\end{aligned}
\end{align}
Note that for any distribution $\mbf{p}$ on $\mathcal{C}$, we can express the value of $\rate(\mbf{p})$ in~\eqref{eq:rateJ} in terms of the function $r_\mathrm{cross}$, 
as follows. First, if $\mbf{p}$ is such that $p_\perp = 1-\gamma$, it must be of the form $\mbf{p} = (\gamma\mbf{q},1-\gamma)$ (where the last term is to be interpreted as the $\perp$ probability) for some distribution $\mbf{q}$, and by comparing the optimizations~\eqref{eq:rateJ} and~\eqref{eq:crossrateJ} we see that we simply have $\rate(\mbf{p}) = r_\mathrm{cross}(\mbf{q})$. As for any other value of $\mbf{p}$, as noted previously the optimization~\eqref{eq:rateJ} is infeasible in that case, and we simply have $\rate(\mbf{p}) = +\infty$.

\newcommand{\cperp}{\Max(g)} 

In turn, any affine lower bound $g(\mbf{q})$ on $r_\mathrm{cross}(\mbf{q})$ serves as a valid crossover min-tradeoff function. 
Similar to the previous section, we note that since it is affine, it can always be expressed as
\begin{align}\label{eq:affineg}
g(\mbf{q}) = \mbf{g}\cdot\mbf{q} + k_g,
\end{align}
for some gradient vector $\mbf{g}$ and scalar $k_g$, and we will often use this formulation in our subsequent analysis.
As shown in~\cite{DF19}, we can directly convert any crossover min-tradeoff function into a min-tradeoff function, as follows: given a crossover min-tradeoff function $g$, a valid choice\footnote{In~\cite{LLR+21}, it was noted that there is an additional degree of freedom in this construction which could be optimized to slightly improve the key rates; however, we do not consider it here as the potential improvements appear to be small~\cite{TSB+22}. Informally, this may be because this degree of freedom appears in the exponent terms in $K(\alpha)$, and hence setting it to any value other than the one used in~\cite{DF19} has a large negative effect on the key rate. Furthermore, using that degree of freedom causes some complications in some of our subsequent calculations; e.g.~some expressions would depend on the scalar $k_g$ in~\eqref{eq:affineg} instead of only the gradient vector $\mbf{g}$.} of min-tradeoff function is given by the unique affine function $f$ specified by the following values:
\begin{align}
\fmin(\delta_c) = 
\begin{cases} 
\frac{1}{\gamma} \g(\delta_c) + \left(1-\frac{1}{\gamma}\right)\cperp & \text{ if } c\neq\perp \\
\cperp & \text{ if } c = \perp
\end{cases}
\, ,
\label{eq:g_to_f}
\end{align}
where $\delta_c$ denotes the point distribution on the value $c$ (i.e.~$\Pr[c]=1$ and all other probabilities are zero).
The above formula can be used to evaluate $f(\mbf{p})$ (in terms of $g$) for arbitrary $\mbf{p}$, using the linearity of affine functions with respect to convex combinations --- explicitly, we have:
\begin{align}
\fmin(\mbf{p}) = \sum_c p_c\fmin(\delta_c) 
&= \left(\sum_{c\neq\perp} p_c \left(\frac{1}{\gamma} \g(\delta_c) + \left(1-\frac{1}{\gamma}\right)\cperp\right)\right) + p_\perp
\cperp \notag\\
&= \frac{1}{\gamma} \mathbf{\g}\cdot\mbf{p}_{\setminus\perp} + (1-p_\perp)\left(\frac{k_g}{\gamma} + \left(1-\frac{1}{\gamma}\right)\cperp \right) + p_\perp\cperp \notag\\
&= \frac{1}{\gamma} \mathbf{\g}\cdot\mbf{p}_{\setminus\perp} + p_\perp\left(\frac{\cperp-k_g}{\gamma}\right) + \left(\left(1-\frac{1}{\gamma}\right)\cperp + \frac{k_g}{\gamma}\right),
\end{align}
where in the second line onwards we have further rewritten the expressions in terms of the formulation~\eqref{eq:affineg} for $g$.

By comparing the last line to~\eqref{eq:affinef}, we see that this choice of $\fmin$ has gradient
\begin{align}\label{eq:gradf_from_g}
\mbf{f} = \frac{1}{\gamma}(\mbf{\g},\cperp-k_g) = \frac{1}{\gamma}(\mbf{\g},\max(\mbf{g})),
\end{align}
where by $\max(\mbf{g})$ we simply mean the maximum of the vector components $g_c$. Hence the gradient of $f$ can be computed using only the gradient of $g$ (and the testing probability $\gamma$).
Similarly, note that for any distribution of the form $\mbf{p} = (\gamma\mbf{q},1-\gamma)$, this choice of $f$ satisfies
\begin{align}
\Var(\mbf{p},f)
&=\sum_cp_c\Big(f(\delta_c)-\sum_{c'}p_{c'}f(\delta_{c'})\Big)^2\cr
&=(1-\gamma)\Big(\Max(g)-g(\mbf{q})\Big)^2 + \sum_{c\neq\perp}\gamma q_c\Big(\Max(g)+\frac{1}{\gamma}\big(g(\delta_c)-\Max(g)\big)-g(\mbf{q})\Big)^2 \cr
&=(1-\gamma)\Big(\Max(g)-g(\mbf{q})\Big)^2 + \sum_{c\neq\perp}\frac{q_c}{\gamma}\big(g(\delta_c)-\Max(g)\big)^2 \cr
&\qquad + 2(\Max(g)-g(\mbf{q})) \sum_{c\neq\perp} q_c\big(g(\delta_c)-\Max(g)\big) + \gamma (\Max(g)-g(\mbf{q}))^2 \sum_{c\neq\perp} q_c \cr
&=(1-\gamma)\Big(\Max(g)-g(\mbf{q})\Big)^2 + \sum_{c\neq\perp}\frac{q_c}{\gamma}\big(g(\delta_c)-\Max(g)\big)^2 \cr
&\qquad + 2(\Max(g)-g(\mbf{q})) \big(g(\mbf{q})-\Max(g)\big) + \gamma (\Max(g)-g(\mbf{q}))^2 \cr
&=\sum_{c\neq\perp}\frac{q_c}{\gamma}\Big(\Max(g)-g(\delta_c)\Big)^2-\Big(\Max(g)-g(\mbf{q})\Big)^2\cr
&=\sum_{c\neq\perp}\frac{q_c}{\gamma}\Big(\max(\mbf{g})-g_c\Big)^2-\Big(\max(\mbf{g}) - \mbf{g}\cdot\mbf{q}\Big)^2,
\end{align}
which depends only on $\mbf{q}$ and the gradient\footnote{Intuitively, the fact that it does not depend on the scalar $k_g$ arises from the fact that $\Var(\mbf{p},f)$ is the \emph{variance} of the function $f$ with respect to the distribution $\mbf{p}$, which is inherently unchanged if the value of $f$ is shifted by an additive constant. The above calculation serves to verify that a similar property holds in terms of the crossover min-tradeoff function $g$ as well, when $f$ is defined from $g$ via~\eqref{eq:g_to_f} (in which shifting the value of $g$ by an additive constant also only shifts the value of $f$ by the same constant, without changing its gradient).} $\mbf{g}$ (and the test probability $\gamma$). Therefore, when $\mbf{p} = (\gamma\mbf{q},1-\gamma)$ we can write
\begin{align}\label{eq:Vf_from_g}
V(\mbf{p},f) = \widetilde{V}(\mbf{q},\mbf{g}) 
\defvar \Bigg(\log (1+2d_A^\CvsQ)+\sqrt{2+\sum_{c\neq\perp}\frac{q_c}{\gamma}\Big(\max(\mbf{g})-g_c\Big)^2-\Big(\max(\mbf{g}) - \mbf{g}\cdot\mbf{q}\Big)^2}\Bigg)^2,
\end{align}
i.e.~for $\mbf{p}$ of that form, $V(\mbf{p},f)$ can also be computed using only $\mbf{q}$, $\mbf{g}$ and $\gamma$. In the next section, we will use these formulas to analyze the second-order terms.

\section{Key rate computation techniques}
\label{sec:finitesize}

\subsection{Obtaining secure lower bounds}

We first explain how, \emph{given} a choice of min-tradeoff function (without yet discussing how we would arrive at one), we can explicitly compute a secure choice for the key length of the protocol.

To begin, we ease our subsequent analysis by expressing the key length formula~\eqref{eq:keylength} in terms of the expected probability distribution $\phon$ (on a single-round $C$ register) that would be produced by the honest behaviour (note that since each round is tested with probability $\gamma$, $\phon$ must be of the form $\phon = (\gamma\qhon,1-\gamma)$ for some distribution $\qhon$, which will be important in some later calculations). Specifically, letting $S_\mathrm{acc}$ denote a set that contains all frequency distributions accepted during parameter estimation, the constant $h$ in that formula can be bounded via
\begin{align}
h = \inf_{c^n\in\Omega} f(\freq_{c^n}) &= \inf_{\mbf{p}^\mathrm{acc} \in S_\mathrm{acc}} f(\mbf{p}^\mathrm{acc}) \notag\\
&= f(\phon) + \inf_{\mbf{p}^\mathrm{acc} \in S_\mathrm{acc}} \left(f(\mbf{p}^\mathrm{acc}) - f(\phon) \right)\notag\\
&= f(\phon) - \Delta_\mathrm{com}
\quad \text{for} \quad \Delta_\mathrm{com}
\defvar \sup_{\mbf{p}^\mathrm{acc} \in S_\mathrm{acc}} -\mbf{f}\cdot(\mbf{p}^\mathrm{acc} - \phon). \label{eq:deltacomp}
\end{align}
The quantity $\Delta_\mathrm{com}$ can be viewed as the penalty to our key rate caused by having some ``tolerance'' in the accept condition, rather than being a unique-acceptance protocol (see Remark~\ref{remark:unique}).
In order to securely upper bound $\Delta_\mathrm{com}$, we note that if for instance $S_\mathrm{acc}$ is a convex polytope, then the task of computing $\Delta_\mathrm{com}$ is simply a linear program (LP). In this work we will be focusing on protocols where this is indeed the case, hence $\Delta_\mathrm{com}$ can be securely bounded. (We give more details for optimizing the choice of $S_\mathrm{acc}$ in the subsequent section, as well as simplifications of the LP when $S_\mathrm{acc}$ is of the form~\eqref{eq:acceptentrywise}.)

With this in mind, we see that to find a secure key length via~\eqref{eq:keylength}, it would suffice to find a lower bound on the value of $f(\phon) + T_\alpha(f)$ (since by the above calculation we could just subtract $\Delta_\mathrm{com}$ to get $h + T_\alpha(f)$). To do so, we first note that since we are computing the rate function via the formula~\eqref{eq:rateJ}, we can substitute that formula into the definition of $T_\alpha(f)$ (Eq.~\eqref{eq:2ndorderfuncs}) 
to get the following simplifications (in which $\rhogJ,\rhotJ$ are defined via~\eqref{eq:rhotg_Choi} as before): 
\begin{align}
    &f(\phon) + T_\alpha(f) \nonumber\\
    =& f(\phon) + \inf_{J} \left(W(\rhogJ) -f(\mbf{p}_J)-\frac{\alpha-1}{2-\alpha}\frac{\operatorname{ln}(2)}{2}V(\mbf{p}_J,f) \right) 
    \text{ where } \mbf{p}_J\defvar \left(\gamma\probst[\rhotJ],1-\gamma\right) \nonumber\\
    =& \inf_{J} \left(W(\rhogJ) + \mbf{f}\cdot (\phon - \mbf{p}_J) -\frac{\alpha-1}{2-\alpha}\frac{\operatorname{ln}(2)}{2}V(\mbf{p}_J,f) \right) \nonumber\\
    =& \inf_{J} \left(W(\rhogJ) + \mbf{g}\cdot\left(\qhon - \probst[\rhotJ]\right) -\frac{\alpha-1}{2-\alpha}\frac{\operatorname{ln}(2)}{2}\widetilde{V}\!\left(\probst[\rhotJ],\mbf{g}\right) \right)  
    , 
    \label{eq:finalopt}
\end{align}
where in the last line we use the fact that $\phon = (\gamma\qhon,1-\gamma)$ and apply the formulas~\eqref{eq:gradf_from_g} and \eqref{eq:Vf_from_g} from the previous section. 
A critical feature of the above formula is that as shown in~\cite[Supplement, Lemma~2]{LLR+21}, the term $\widetilde{V}\!\left(\probst[\rhotJ],\mbf{g}\right)$ is concave with respect to $\probst[\rhotJ]$, which is in turn an affine function of $J$. This implies that the above optimization consists of minimizing a (differentiable) convex objective function over the set of Choi matrices $J$ --- this is a task that can be tackled using for instance the Frank-Wolfe (FW) algorithm~\cite{FW56}, as was done in~\cite{WLC18} (see however Remark~\ref{remark:FW} regarding its convergence rates). In such contexts, this algorithm yields reliable lower bounds on the true minimum value, up to the precision of the SDP solver used in the last step of the algorithm. 

\begin{remark} \label{remark:FW}
A common problem encountered by FW methods is known as zigzagging, where the iterates move only very slowly closer towards the optimal point. This often occurs if the optimal point is on the boundary or close to it. In order to circumvent this problem we implemented improved FW methods from \cite{LJ15}. Those improved FW methods minimize the zigzagging and can reach higher convergence rates than the \(\mathcal{O}(1/N)\) rate of the original FW algorithm, where \(N\) is the number of iterations. We observed that these improved FW methods increased the convergence speed drastically.
\end{remark}

In summary, substituting~\eqref{eq:deltacomp} and~\eqref{eq:finalopt} into the key length formula~\eqref{eq:keylength}, we obtain
\begin{align}
    l&\leq n\left(\inf_{J} \left(W(\rhogJ) + \mbf{g}\cdot\left(\qhon - \probst[\rhotJ]\right) -\frac{\alpha-1}{2-\alpha}\frac{\operatorname{ln}(2)}{2}\widetilde{V}\!\left(\probst[\rhotJ],\mbf{g}\right) \right) - \Delta_\mathrm{com} \right) \nonumber\\
    &\qquad -n\left(\frac{\alpha-1}{2-\alpha}\right)^2K(\alpha)-\lambdaEC-\ceil{\log\frac{1}{\eEV}}-\frac{\alpha}{\alpha-1}\log\frac{1}{\ePA}+2, \label{eq:keylengthfinal}
\end{align}
where $\rhogJ,\rhotJ$ are defined via~\eqref{eq:rhotg_Choi}, and hence the minimization over $J$ is a convex optimization that can be bounded using for instance the Frank-Wolfe algorithm (again, see also Remark~\ref{remark:FW}). An interesting feature of the above expression is that the only way in which it depends on the crossover min-tradeoff function $g$ is via the gradient $\mbf{g}$, \emph{not} the scalar term $k_g$. In principle, any choice of gradient $\mbf{g}$ corresponds to some valid crossover min-tradeoff function (simply by choosing the scalar term $k_g$ appropriately; we discuss this in Appendix~\ref{app:modprimal}). Hence, the above formula in fact allows us to compute a secure key rate for any choice of $\mbf{g}$, without having to explicitly find the crossover min-tradeoff function $g$ itself. However, optimizing the choice of $\mbf{g}$ to obtain the best possible key length from the above formula is still a nontrivial task, as $\mbf{g}$ can be a high-dimensional vector in the case of protocols with many measurement choices and outcomes (e.g.~decoy-state protocols). In the next section, we discuss a method to obtain an (approximately) optimal such choice.

\subsection{Optimizing parameter choices}
\label{subsec:optparams}

\subsubsection{Min-tradeoff function}\label{subsubsec:optfmin}
\newcommand{\newT}{\hat{T}}
\newcommand{\lambdavec}{\boldsymbol{\lambda}}

We first describe how, given fixed values for $\gamma,\alpha$, there is a systematic method to find a choice of $\mbf{g}$ that approximately maximizes the key length given by~\eqref{eq:keylengthfinal}. The techniques described here are based on those in~\cite{arx_GLvH+22}; however, we shall directly use Lagrange duality instead of invoking Fenchel duality as in that work. 

To begin, note that the only terms in the key length formula~\eqref{eq:keylengthfinal} that depend on $\mbf{g}$ are the infimum over $J$ (i.e.~the expression~\eqref{eq:finalopt}), the $K(\alpha)$ term, and the $\Delta_\mathrm{com}$ term. 
For ease of analysis, we do not consider the dependencies on\footnote{For the $K(\alpha)$ term at least, we can say it should only have a very small effect on the key length, as it is has a prefactor of order $(\alpha-1)^2$, which scales asymptotically as $O(1/n)$~\cite{DF19} and hence quickly becomes small.} $K(\alpha)$ and $\Delta_\mathrm{com}$ for the purposes of approximately optimizing $\mbf{g}$, focusing only on finding the choice of $\mbf{g}$ that maximizes the expression~\eqref{eq:finalopt}. Now, while it would be ideal to tackle that term ``directly'', we did not find a method to do so via the following techniques (basically because these techniques rely on computing explicit Legendre-Fenchel conjugates, which we were unable to find for the expression~\eqref{eq:finalopt} itself), and hence we shall instead begin by first reducing that expression to an approximate lower bound with a simpler formula.\footnote{We stress that while this approximation is not necessarily a secure lower bound on the key length, we \emph{only} use it for the purposes of finding an approximately optimal choice of $\mbf{g}$ to substitute into the actual key length formula~\eqref{eq:keylengthfinal}. The latter does indeed yield a secure lower bound, and is what we use in our final computations, so we are not in danger of overestimating the true secure key length.}

By considering the second line in the computation~\eqref{eq:finalopt}, we see it can be lower-bounded as follows (as was done in~\cite{DF19}):
\begin{align}
&f(\phon) + \inf_{J} \left(W(\rhogJ) -f(\mbf{p}_J)-\frac{\alpha-1}{2-\alpha}\frac{\operatorname{ln}(2)}{2}V(\mbf{p}_J,f) \right) \nonumber\\
\geq& f(\phon) + \inf_{J} \left(-\frac{\alpha-1}{2-\alpha}\frac{\operatorname{ln}(2)}{2}V(\mbf{p}_J,f) \right) \nonumber\\
\geq& f(\phon) -\frac{\alpha-1}{2-\alpha}\frac{\operatorname{ln}(2)}{2}\left(\log(1+2d_A^{\CvsQ})+\sqrt{2+\frac{1}{\gamma}\left(\max(\mbf{g})-\min(\mbf{g})\right)^2}\right)^2 ,
\end{align}
where the first inequality holds because $W(\rhogJ) -f(\mbf{p}_J) \geq 0$, and the second inequality is from~\cite{DF19} Lemma~V.5.
Expanding the square in the above expression and recalling that $f(\phon)=g(\qhon)$, we see the above is equal to
\begin{align}
& g(\qhon) -\frac{\alpha-1}{2-\alpha}\frac{\operatorname{ln}(2)}{2}\bigg( \left(\log(1+2d_A^{\CvsQ})\right)^2 + 2\log(1+2d_A^{\CvsQ})\sqrt{2+\frac{1}{\gamma}\left(\max(\mbf{g})-\min(\mbf{g})\right)^2} + \nonumber\\
&\qquad + 2 + \frac{1}{\gamma}\left(\max(\mbf{g})-\min(\mbf{g})\right)^2 \bigg) \nonumber\\
\approx\,& g(\qhon) - \varphi_0(\left(\max(\mbf{g})-\min(\mbf{g})\right)^2 - \varphi_1\left(\max(\mbf{g})-\min(\mbf{g})\right) - \varphi_2 \nonumber\\
=\,& g(\qhon) - \newT(\mbf{g}) - \varphi_2 \qquad \text{for } \newT(\mbf{g}) \defvar \varphi_0(\left(\max(\mbf{g})-\min(\mbf{g})\right)^2 + \varphi_1\left(\max(\mbf{g})-\min(\mbf{g})\right)
, \label{eq:approxrate}
\end{align}
where the approximation in the second line is obtained by dropping the $2$ in the square root term\footnote{This approximation is reasonably accurate whenever $\left(\max(\mbf{g})-\min(\mbf{g})\right)^2/\gamma \gg 2$, which is typically the case for the parameter regimes used in this work, where $\gamma$ is small. Since all the other computations here are genuine lower bounds, this at least suggests it is unlikely that we will end up with some choice of $\mbf{g}$ that is ``good'' for~\eqref{eq:approxrate} but yields worse results in the actual key length formula~\eqref{eq:keylengthfinal}.}, and writing
\begin{align}
\begin{aligned}
\varphi_0 &\defvar \frac{\alpha-1}{2-\alpha}\frac{\operatorname{ln}(2)}{2} \frac{1}{\gamma}, \\
\varphi_1 &\defvar \frac{\alpha-1}{2-\alpha}\frac{\operatorname{ln}(2)}{2} \frac{2\log(1+2d_A^{\CvsQ})}{\sqrt{\gamma}}, \\
\varphi_2 &\defvar \frac{\alpha-1}{2-\alpha}\frac{\operatorname{ln}(2)}{2} \left(2 + \left(\log(1+2d_A^{\CvsQ})\right)^2\right).
\end{aligned}
\end{align}
Hence in the following analysis for optimizing the choice of min-tradeoff function, we will use the expression~\eqref{eq:approxrate} as a loose substitute for~\eqref{eq:finalopt}. 

Our goal is now to find the choice of $g$ that maximizes~\eqref{eq:approxrate}. First observe that since the $\varphi_2$ term is independent of $g$, we can ignore it for the purposes of the subsequent analysis, and just aim to maximize $g(\qhon) - \newT(\mbf{g})$. We now introduce a new function $\newT^*$ (which is in fact the Legendre-Fenchel conjugate (see e.g.~\cite{BV04v8} Chapter~3.3) of $\newT$, as we prove in Appendix~\ref{app:modprimal}):
\begin{align}\label{eq:Tconj}
\newT^*(\lambdavec) \defvar 
\begin{cases} 
s({\norm{\lambdavec}_1}/{2})& \text{ if } \sum_c \lambda_c = 0 \\
+\infty & \text{ otherwise }
\end{cases}
, \quad \text{where }
s(x) \defvar
\begin{cases} 
\frac{(x-\varphi_1)^2}{4\varphi_0} & \text{ if } x \geq \varphi_1 \\
0 & \text{ otherwise }
\end{cases}
.
\end{align}
Note that $\newT^*$ is a convex function, for instance by observing that the $1$-norm is a convex function and $s$ is a nondecreasing convex function on $\mathbb{R}$ (alternatively, since it is a Legendre-Fenchel conjugate, it is automatically convex as it is a supremum of affine functions).
In terms of the above functions $\newT^*$ and $s$, we have the following result, whose proof we give in Appendix~\ref{app:modprimal}:
\begin{theorem}\label{th:modprimal}
Let $r_\mathrm{best}$ be defined as the value
\begin{align}\label{eq:bestapproxrate}
r_\mathrm{best} \defvar \sup_{g} \left(g(\qhon) - \newT(\mbf{g}) \right),
\end{align}
where the supremum is taken over all valid crossover min-tradeoff functions $g$. Then we have 
\begin{align}
r_\mathrm{best} = \;
\inf_{J,\lambdavec} &\quad W(\rhogJ) + \newT^*(\lambdavec) \nonumber\\
\suchthat
&\quad \qhon - \probst[\rhotJ] - \lambdavec = \mbf{0} \label{eq:modprimal1} \\
= \;
\inf_{J,\lambdavec,\boldsymbol{\svar}} &\quad W(\rhogJ) + s\!\left(\sum_c \svar_c/{2}\right) \nonumber\\
\suchthat
&\quad \qhon - \probst[\rhotJ] - \lambdavec = \mbf{0} \label{eq:modprimal2} \\
&\quad -\boldsymbol{\svar} \leq \lambdavec \leq \boldsymbol{\svar} , \nonumber
\end{align}
where the domain in each infimum is to be understood as the set of all Choi matrices $J$ and vectors $\lambdavec,\boldsymbol{\svar} \in \mathbb{R}^{\left|\Cnoperp\right|}$. Furthermore, suppose $L(\rhogJ,\boldsymbol{\svar})$ is an affine lower bound on $W(\rhogJ) + s\left(\sum_c \svar_c/{2}\right)$, and define
\begin{align}\label{eq:linmodprimal}
\begin{aligned}
r_\mathrm{SDP} \defvar \;
\inf_{J,\lambdavec,\boldsymbol{\svar}} &\quad L(\rhogJ,\boldsymbol{\svar}) \\
\suchthat
&\quad \qhon - \probst[\rhotJ] - \lambdavec = \mbf{0}\\
&\quad -\boldsymbol{\svar} \leq \lambdavec \leq \boldsymbol{\svar} .
\end{aligned}
\end{align}
Then the above optimization is an SDP where the optimal 
dual solution\footnote{Note that for equality constraints, there is an arbitrary sign convention to choose when defining the dual variables. This result is to be understood with respect to the sign convention we used, namely that defined in~\eqref{eq:maxminprob}--\eqref{eq:modprimal3} in Appendix~\ref{app:modprimal}.} to the constraint $\qhon - \probst[\rhotJ] - \lambdavec = \mbf{0}$ is a vector $\mbf{g}^\star \in \mathbb{R}^{\left|\Cnoperp\right|}$ with the following property: there exists a crossover min-tradeoff function $g^\star$ with gradient $\mbf{g}^\star$ such that
\begin{align}\label{eq:FWoptimality}
g^\star(\qhon) - \newT(\mbf{g}^\star) \geq r_\mathrm{SDP},
\end{align}
i.e.~it is a feasible point of the supremum in~\eqref{eq:bestapproxrate} that attains value at least $r_\mathrm{SDP}$.
\end{theorem}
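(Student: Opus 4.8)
The plan is to optimize over the scalar offset $k_g$ first, recognize the resulting objects as Legendre--Fenchel conjugates, and then transport everything through convex duality. For a fixed gradient $\mbf{g}$, the quantity $\newT(\mbf{g})$ does not depend on $k_g$, so among all valid crossover min-tradeoff functions with that gradient the best is the one with the largest admissible scalar, namely $k_g=\inf_{\mbf{q}}\left(r_\mathrm{cross}(\mbf{q})-\mbf{g}\cdot\mbf{q}\right)$ (any larger $k_g$ would violate $g\leq r_\mathrm{cross}$). Substituting $r_\mathrm{cross}(\mbf{q})=\inf_{J:\,\probst[\rhotJ]=\mbf{q}}W(\rhogJ)$ and merging the two infima yields $g(\qhon)=\inf_{J}\left(W(\rhogJ)+\mbf{g}\cdot(\qhon-\probst[\rhotJ])\right)$ for this optimal $g$, whence $r_\mathrm{best}=\sup_{\mbf{g}}\inf_{J}\left(W(\rhogJ)+\mbf{g}\cdot(\qhon-\probst[\rhotJ])-\newT(\mbf{g})\right)$. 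I would then exchange the order of optimization: the objective is convex in $J$ (as $W$ is convex and the rest affine in $J$) and concave in $\mbf{g}$ (since $\mbf{g}\mapsto\max(\mbf{g})-\min(\mbf{g})$ is convex and nonnegative, so $\newT$ is convex), while the feasible set of Choi matrices $J$ is compact and convex, so Sion's minimax theorem applies. The inner supremum over $\mbf{g}$ is by definition the conjugate $\newT^*(\qhon-\probst[\rhotJ])$, which after introducing the slack variable $\lambdavec=\qhon-\probst[\rhotJ]$ produces exactly \eqref{eq:modprimal1}.

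The second step is to compute $\newT^*$ explicitly. The structural fact is that $\newT(\mbf{g})$ depends on $\mbf{g}$ only through the range $w(\mbf{g})=\max(\mbf{g})-\min(\mbf{g})$, and in particular $\newT(\mbf{g}+c\mbf{1})=\newT(\mbf{g})$; feeding this translation invariance into $\sup_{\mbf{g}}\left(\lambdavec\cdot\mbf{g}-\newT(\mbf{g})\right)$ forces $\newT^*(\lambdavec)=+\infty$ unless $\sum_c\lambda_c=0$. When $\sum_c\lambda_c=0$, I would first maximize $\lambdavec\cdot\mbf{g}$ over all $\mbf{g}$ of a fixed range $w$: placing the positive-$\lambda_c$ coordinates at the top of the window and the negative ones at the bottom gives $\sup\lambdavec\cdot\mbf{g}=\tfrac{1}{2}\norm{\lambdavec}_1\,w$ (using $\sum_{\lambda_c>0}\lambda_c=\norm{\lambdavec}_1/2$). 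The remaining one-dimensional maximization $\sup_{w\geq0}\left(\tfrac{1}{2}\norm{\lambdavec}_1 w-\varphi_0 w^2-\varphi_1 w\right)$ is an elementary concave quadratic whose optimum is precisely $s(\norm{\lambdavec}_1/2)$, giving \eqref{eq:Tconj}. The equivalence of \eqref{eq:modprimal1} and \eqref{eq:modprimal2} is then a reformulation: on the feasible set the constraint forces $\sum_c\lambda_c=0$, so $\newT^*(\lambdavec)=s(\norm{\lambdavec}_1/2)$, and since $s$ is nondecreasing the minimal feasible $\boldsymbol{\svar}$ under $-\boldsymbol{\svar}\leq\lambdavec\leq\boldsymbol{\svar}$ has $\svar_c=|\lambda_c|$, so $\inf_{\boldsymbol{\svar}}s(\sum_c\svar_c/2)=s(\norm{\lambdavec}_1/2)$ reproduces the same value while linearizing the $\ell_1$ norm.

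For the final (SDP) claim, I would first note that \eqref{eq:linmodprimal} is manifestly an SDP: $L$ is affine, $\rhogJ,\probst[\rhotJ]$ are affine in $J$, the constraints $\qhon-\probst[\rhotJ]-\lambdavec=\mbf{0}$ and $-\boldsymbol{\svar}\leq\lambdavec\leq\boldsymbol{\svar}$ are affine, and $J$ ranges over the semidefinite-representable set of Choi matrices. I would form its Lagrangian, dualizing only the equality constraint with multiplier $\mbf{g}$ in the sign convention fixed in the appendix; under a Slater point strong duality holds and $r_\mathrm{SDP}=d_\mathrm{SDP}(\mbf{g}^\star)$ at the optimal dual vector $\mbf{g}^\star$. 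The crux is then the chain
\begin{align}
r_\mathrm{SDP}=d_\mathrm{SDP}(\mbf{g}^\star)
&=\inf_{J,\lambdavec,\boldsymbol{\svar}:\,-\boldsymbol{\svar}\leq\lambdavec\leq\boldsymbol{\svar}}\left(L(\rhogJ,\boldsymbol{\svar})+\mbf{g}^\star\cdot(\qhon-\probst[\rhotJ]-\lambdavec)\right)\notag\\
&\leq\inf_{J,\lambdavec,\boldsymbol{\svar}:\,-\boldsymbol{\svar}\leq\lambdavec\leq\boldsymbol{\svar}}\left(W(\rhogJ)+s\!\left(\sum_{c}\svar_c/2\right)+\mbf{g}^\star\cdot(\qhon-\probst[\rhotJ]-\lambdavec)\right)\notag\\
&=\inf_{J}\left(W(\rhogJ)+\mbf{g}^\star\cdot(\qhon-\probst[\rhotJ])\right)+\inf_{\lambdavec}\left(s(\norm{\lambdavec}_1/2)-\mbf{g}^\star\cdot\lambdavec\right)\notag\\
&\leq g^\star(\qhon)-\newT(\mbf{g}^\star),
\end{align}
where the first inequality uses $L\leq W+s$ pointwise, the next line separates the $J$ and $\lambdavec$ optimizations (minimizing over $\boldsymbol{\svar}$ at $\svar_c=|\lambda_c|$), the first surviving infimum equals $g^\star(\qhon)$ for the crossover min-tradeoff function $g^\star$ of gradient $\mbf{g}^\star$ and optimal scalar, and the second infimum is at most $-\newT(\mbf{g}^\star)$ via Fenchel--Young, $\newT(\mbf{g}^\star)=\sup_{\lambdavec:\sum_c\lambda_c=0}\left(\mbf{g}^\star\cdot\lambdavec-s(\norm{\lambdavec}_1/2)\right)$. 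This is exactly \eqref{eq:FWoptimality}. I would also verify that $g^\star$ is a genuine (finite) crossover min-tradeoff function: since $W\geq0$ and $\mbf{q}$ ranges over a compact set, the optimal scalar $\inf_{\mbf{q}}(r_\mathrm{cross}(\mbf{q})-\mbf{g}^\star\cdot\mbf{q})$ is finite.

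The main obstacles I anticipate are convex-analytic rather than algebraic: justifying the minimax exchange and the SDP strong duality (compactness of the Choi-matrix feasible set, lower semicontinuity and convexity of $r_\mathrm{cross}$ as a partial minimization of a jointly convex function, and existence of a Slater point), and --- most delicately --- keeping every inequality pointed so that the affine relaxation $L$ still yields a valid \emph{lower} bound $g^\star(\qhon)-\newT(\mbf{g}^\star)\geq r_\mathrm{SDP}$ rather than an overestimate. The explicit evaluation of $\newT^*$ is clean once the range-only dependence and the $\sum_c\lambda_c=0$ reduction are isolated.
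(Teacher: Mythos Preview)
Your proposal is correct and follows essentially the same structure as the paper's proof: optimize out the scalar $k_g$ first, compute $\newT^*$ explicitly via the range-only dependence of $\newT$, obtain~\eqref{eq:modprimal1}--\eqref{eq:modprimal2} by convex duality plus the $\ell_1$-linearization, and derive~\eqref{eq:FWoptimality} from SDP strong duality combined with the pointwise bound $L\leq W+s$. The only notable difference is that for the first equality you invoke Sion's minimax theorem (using compactness of the Choi-matrix set) where the paper instead substitutes $\newT=\newT^{**}$, recognizes the resulting $\sup_{\mbf{g}}\inf_{(J,\lambdavec)}$ as a Lagrange dual, and applies Slater's condition; these routes are equivalent in the finite-dimensional setting at hand.
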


The results in the above theorem can be interpreted as follows. We first remark that if we were only interested in the value $r_\mathrm{best}$, it could in principle be computed by solving the first optimization~\eqref{eq:modprimal1}, which is a convex optimization --- note that while the $\newT^*(\lambdavec)$ term in the objective is abstractly defined to have value $+\infty$ whenever $\sum_c \lambda_c \neq 0$, this is not a major obstacle since the $\qhon - \probst[\rhotJ] - \lambdavec = \mbf{0} $ constraint ensures we indeed always have (as long as all variables are correctly normalized)
\begin{align}\label{eq:lambdasum}
\sum_c \lambda_c = \left(\sum_c \qhonc\right) - \left(\sum_c \probt_c[\rhotJ]\right) = 1-1 = 0,
\end{align}
hence we can replace the $\newT^*(\lambdavec)$ term in the objective with $s({\norm{\lambdavec}_1}/{2})$, which is always finite.
However, even with this replacement, the objective is still not differentiable at some points (e.g.~for $\lambdavec$ where one component is zero and ${\norm{\lambdavec}_1}/{2} \geq \varphi_1$). This creates two difficulties: first, it is no longer straightforward to apply gradient-based algorithms such as Frank-Wolfe; second, even if we could find the optimal solution to that optimization, it is not obvious\footnote{A potential approach may have been to try solving the KKT conditions (\cite{BV04v8} Chapter~5.5.3) to obtain a dual optimal solution from a primal optimal solution. However, we were unable to find a method to solve the KKT conditions for this problem, again basically due to the fact that the objective is not differentiable.} 
how to use it to extract the optimal $\mbf{g}$ in~\eqref{eq:bestapproxrate}, which is the quantity we are actually interested in here.

Those difficulties are resolved by the subsequent properties listed in the theorem. First, the optimization~\eqref{eq:modprimal2} is a convex optimization where the objective function is indeed differentiable, allowing us to apply the Frank-Wolfe algorithm. Furthermore, the Frank-Wolfe algorithm inherently yields a sequence of affine lower bounds $L(\rhogJ,\boldsymbol{\svar})$ such that the corresponding $r_\mathrm{SDP}$ values converge towards $r_\mathrm{best}$. Therefore, the bound~\eqref{eq:FWoptimality} ensures that for the last SDP evaluated when the algorithm terminates (which should have $r_\mathrm{SDP}$ close to $r_\mathrm{best}$), we have the property that its dual solution yields a $\mbf{g}$ that achieves objective value at least $r_\mathrm{SDP}$ in the supremum in~\eqref{eq:bestapproxrate}, which is what we are interested in. 

Finally, we remark that when solving the optimization~\eqref{eq:modprimal2}, it is not necessary to explicitly impose all the constraints --- instead, it is usually more numerically stable to eliminate the $\lambdavec$ variables with the $\qhon - \probst[\rhotJ] - \lambdavec = \mbf{0}$ constraint, i.e.~by substituting $\lambdavec = \qhon - \probst[\rhotJ]$. With this, the optimization~\eqref{eq:modprimal2} simplifies to
\begin{align}
\begin{aligned}
\inf_{J,\boldsymbol{\svar}} &\quad W(\rhogJ) + s\!\left(\sum_c \svar_c/{2}\right) \\
\suchthat
&\quad -\boldsymbol{\svar} \leq \qhon - \probst[\rhotJ] \leq \boldsymbol{\svar} .
\end{aligned}
\end{align}
A solution to this optimization can easily be converted into one for~\eqref{eq:modprimal2} by reversing the $\lambdavec$ substitution. Also, applying the Frank-Wolfe algorithm to this optimization still allows us to obtain affine lower bounds $L(\rhogJ,\boldsymbol{\svar})$ with the desired properties.

\subsubsection{Completeness penalty term}\label{subsubsec:deltacomp}

Recall that given any $\mbf{f}$, $\phon$, and $S_\mathrm{acc}$, the formula~\eqref{eq:deltacomp} for $\Delta_\mathrm{com}$ is completely specified, and if $S_\mathrm{acc}$ is a convex polytope, then it can be securely computed as an LP. 
This indeed holds for $S_\mathrm{acc}$ of the form we focus on (see~\eqref{eq:acceptentrywise}).
Hence for any desired value of $\ecomAT$, we could (heuristically) find the choice of $S_\mathrm{acc}$ of that form that minimizes the completeness penalty $\Delta_\mathrm{com}$, simply by optimizing over the tolerance values $\tlow, \tupp \in \mathbb{R}_{\geq 0}^{|\mathcal{C}|}$ to minimize $\Delta_\mathrm{com}$ (evaluated as an LP) while subject to the constraint imposed by~\eqref{eq:ecomATbnd} (in which the binomial-distribution probabilities can be computed as a function of $\tlow, \tupp$).

We further note that in fact, for $S_\mathrm{acc}$ of the form~\eqref{eq:acceptentrywise}, we can straightforwardly compute the solution to the LP for $\Delta_\mathrm{com}$ by the following procedure, which is faster than running a generic LP solver. We first slightly simplify the LP, recalling that $\mbf{p}^\mathrm{acc},\mbf{p}^\mathrm{hon}$ are normalized distributions:\footnote{Technically this means we should also have additional constraints $\mbf{0} \leq \mbf{p}^\mathrm{acc} \leq \mbf{1}$ in the LP. However, observe that the optimal choices of $\tlow, \tupp$ for minimizing $\Delta_\mathrm{com}$ will always be such that this constraint is imposed, because if e.g.~we have $\phonc -  \tlowc < 0$ for some $c$, then for the purposes of minimizing $\Delta_\mathrm{com}$ we can always decrease $\tlowc$ ``for free'', in the sense that this tightens the LP constraints (thereby reducing $\Delta_\mathrm{com}$) while not affecting the $\ecomAT$ condition~\eqref{eq:ecomATbnd} (because the corresponding binomial-distribution term in that formula is constant for all $\tlowc$ values such that $\phonc -  \tlowc \leq 0$). Hence in the LP here, we do not separately impose these additional constraints. 
} 
\begin{align}
\sup_{\mbf{p}^\mathrm{acc}} &\quad -\mbf{f}\cdot(\mbf{p}^\mathrm{acc} - \phon) \nonumber\\
\suchthat
&\quad \phon - \tlow \leq \mbf{p}^\mathrm{acc} \leq \phon + \tupp \\
&\quad \sum_c p_c^\mathrm{acc} = 1 \nonumber \\
= \;
\sup_{\mbf{v}} &\quad \overline{\mbf{f}}\cdot\mbf{v} \nonumber\\
\suchthat
&\quad - \tlow \leq \mbf{v} \leq \tupp \\
&\quad \sum_c v_c = 0, \nonumber
\end{align}
where in the second line we introduce $\overline{\mbf{f}} \defvar -\mbf{f}$ to avoid inconvenient sign flips in the following explanation. 
This LP is feasible if and only if $\sum_c - \tlowc \leq 0 \leq \sum_c \tuppc$; 
recall that in our case we have chosen $\tlow, \tupp \in \mathbb{R}_{\geq 0}^{|\mathcal{C}|}$, so this trivially holds.
Given that the LP is feasible, our method to compute the optimal solution (or an optimal solution, if it is not unique) is basically to first initialize all $v_c$ values to their extremal values that maximize $\overline{\mbf{f}}\cdot\mbf{v}$, and then adjust the $v_c$ values iteratively until the $\sum_c v_c = 0$ constraint holds, starting with those that have the ``smallest effect'' on the objective. 

Explicitly, we first generate a ``candidate solution'' as follows:
letting $\mathcal{C}_{\geq 0}$ be the set of $c\in\mathcal{C}$ with $\overline{f}_c \geq 0$ and $\mathcal{C}_{<0}$ be the rest,
set $v_c = \tuppc$ for all $c\in\mathcal{C}_{\geq 0}$ and $v_c = \tlowc$ for all $c\in\mathcal{C}_{< 0}$. This candidate solution clearly maximizes $\overline{\mbf{f}}\cdot\mbf{v}$ subject to $- \tlow \leq \mbf{v} \leq \tupp$, but may not satisfy $\sum_c v_c = 0$ (if it does, then we are done). If $\sum_c v_c > 0$, then we find the value $c\in\mathcal{C}_{\geq 0}$ with the smallest value of $\overline{f}_c$ (possibly zero), and decrease the corresponding $v_c$ value continuously towards $-\tlowc$; if at some point this achieves $\sum_c v_c = 0$ then we stop and take this as the solution, and otherwise we iteratively repeat with the next-smallest $\overline{f}_c$ value over $c\in\mathcal{C}_{\geq 0}$ until we reach $\sum_c v_c = 0$ and stop. 
This will indeed terminate as long as the LP is feasible (i.e.~$\sum_c - \tlowc \leq 0 \leq \sum_c \tuppc$ as mentioned above), by observing that since we started from a candidate solution with $\sum_c v_c > 0$, this process must achieve $\sum_c v_c = 0$ at some point by continuity (since if all the $v_c$ values for $c\in\mathcal{C}_{\geq 0}$ were decreased to $-\tlowc$, we would have $\sum_c v_c = \sum_c - \tlowc \leq 0$).
Analogously, if instead the initial candidate solution had $\sum_c v_c < 0$, then we find the value $c\in\mathcal{C}_{< 0}$ with the smallest value of $|\overline{f}_c|$ and increase the corresponding $v_c$ value towards $\tuppc$, iterating until $\sum_c v_c = 0$. 
As a final check on this method, we verified the results obtained from it against the results returned by a standard LP solver for the final data points computed in this work (though not for the intermediate computations used to heuristically optimize the choices of $\tlow, \tupp$, to speed up that optimization).

\subsubsection{Other parameters and overall summary}\label{subsubsec:otherparams}
To further optimize the key length formula~\eqref{eq:keylength}, in this section, we optimize the choice of security parameters. More precisely, as mentioned earlier the security parameter $\esecure$ consists of the sum of two terms: $\esecure=\ePA+\eEV$. Based on the key length formula in Eq.~(\ref{eq:keylength}), let us define the following function:
\begin{align}
    \label{eq:epsilons}
    f_\alpha(
    \esecure,
    \ePA)\coloneqq\frac{\alpha}{\alpha-1}\log\left(\frac{1}{\ePA}\right)+\log\left(\frac{2}{\esecure-\ePA}\right).
\end{align}
Then by simple calculus (finding the choice of $\ePA \in (0,\esecure)$ that
minimizes the above value, thereby maximizing the finite-size key rate in Eq.~(\ref{eq:keylength})), we find the following optimal choice of secrecy and correctness parameters:
\begin{align}
    \label{eq:optimal_PA}
    \ePA=\left(\frac{\alpha}{2\alpha-1}\right)\esecure\ ,\qquad\qquad\eEV=\left(\frac{\alpha-1}{2\alpha-1}\right)\esecure.
\end{align}

In summary, our computational procedure to heuristically optimize the parameter choices (for some desired $\esecure$ and $\ecomAT$ values, recalling that we suppose  ``reasonable'' $\ecomEV$ values can be achieved by choosing $f_{\mathrm{EC}} = 1.16$ in~\eqref{eq:error_correction}) has the following logical structure, where we note that the computations in each function have no dependency on any parameters that have already been optimized out by a ``lower-level'' function:
\begin{itemize}
\item Write a ``base'' function that, given $\tlow, \tupp,
\gamma,\alpha,\mbf{g}$, computes the key rate according to~\eqref{eq:keylengthfinal}, by using the Frank-Wolfe algorithm to securely bound the infimum over $J$, and setting $\esecret,\eEV$ as a function of $\esecure$ according to~\eqref{eq:optimal_PA}. 
\item Write a function that (given $\gamma,\alpha,\mbf{g}$) finds the optimal $\tlow, \tupp$ for the preceding function subject to the desired $\ecomAT$ value, by applying the procedure in Sec.~\ref{subsubsec:deltacomp} 
(note that this procedure handles all dependencies on $\tlow, \tupp$).
\item Write a function that (given $\gamma,\alpha$) finds an approximately optimal choice of $\mbf{g}$ for the preceding function, 
by applying the procedure in Sec.~\ref{subsubsec:optfmin}. 
\item Write a function that finds an approximately optimal choice of $\gamma,\alpha$ for the preceding function, by simply evaluating it over a grid of $\gamma,\alpha$ values and taking the best result.
\end{itemize}
Technically, all of the above computations also require a specification of the honest device behaviour (in order to specify $\phon,H(S|YI)_\mathrm{hon}$ and the functions $W,\probst$), which may in turn contain some other parameters that could be optimized over --- for instance, the decoy-state intensities and distribution of intensity choices in Sec.~\ref{sec:Decoy-state with improved analysis} (though the simpler qubit BB84 protocol in Sec.~\ref{sec:Qubit BB84 with loss} does not have such parameters). For the purposes of this work, we simply set such parameters to the heuristically optimized values found in \cite{arx_KTL25} for the IID case (for each $n$ and each loss value).

\section{Qubit BB84 with loss}\label{sec:Qubit BB84 with loss}

\newcommand{\depol}{p^\mathrm{depol}_\mathrm{hon}}
\newcommand{\lossparam}{\zeta_\mathrm{hon}}

In this section we consider a version of the BB84 protocol \cite{BB84} using a perfect qubit source or equivalently a perfect single photon source. 
We take the following model for the honest behaviour, described by two parameters: a depolarizing-noise parameter $\depol \in [0,1]$, and a loss-in-decibels parameter $\lossparam \in [0,\infty)$. Namely, after Alice uses the source to prepare perfect qubit states (in any specified basis), they are first subject to a depolarizing channel
\begin{align}
\mathcal{E}_\mathrm{depol}[\rho] \defvar (1-\depol) \rho + \depol \frac{\id}{2}.
\end{align}
Then, with probability $1-10^{-\frac{\lossparam}{10 \mathrm{dB}}}$ the qubit is lost (in which case Bob always obtains a no-detection outcome), and otherwise Bob performs an ideal projective measurement on the qubit, in some specified basis. 

Let us first discuss the single-round steps in this protocol in more detail (for brevity, we again omit the subscripts specifying individual round numbers during this discussion). In each round Alice decides with probability \(\gamma\) whether it is a test or generation round. If it is a generation round, Alice always sends one of the states \(\ket{0}, \ket{1}\) in the \(Z\)-basis. Otherwise, she always sends \(\ket{+}, \ket{-}\) in the \(X\)-basis. Alice records her choice of basis and signal state in a classical register $X$ with alphabet $\{(\mathtt{Z},0), (\mathtt{Z},1), (\mathtt{X},0), (\mathtt{X},1)\}$.
Upon arrival, Bob measures the incoming states with probability \(\gamma\) in the \(X\)-basis and with \(1-\gamma\) in the \(Z\)-basis, recording the basis choice and outcome in a classical register $Y$ with alphabet $\{(\mathtt{Z},0), (\mathtt{Z},1), (\mathtt{Z},\text{\texttt{no-det}}), (\mathtt{X},0), (\mathtt{X},1), (\mathtt{Z},\text{\texttt{no-det}})\}$.

Then Alice announces a classical register \(C^A\) that is set to $\perp$ if it was a generation round, and otherwise set to the value of the $X$ register. Bob then announces a classical register \(C^B\) as follows: if it was a generation round (i.e.~$C^A=\perp$) \emph{and} he measured in the $Z$ basis, then he sets $C^B$ to either $(\mathtt{Z},\text{\texttt{det}})$ or $(\mathtt{Z},\text{\texttt{no-det}})$ depending on whether there was a detection; otherwise he sets $C^B = Y$. These form all the single-round public announcements in this protocol, i.e.~we have $I=(C^A,C^B)$ for the purposes of our protocol description.
With these announcements, Alice and Bob set $C=\perp$ if $C^A = \perp$, and otherwise set $C=(C^A, C^B)$ (note that this means all signals sent in the \(X\)-basis are used for testing, independent of Bob's basis choice). 
Alice then applies sifting: 
if the round was a generation round that was also measured in the $Z$ basis and successfully detected
(i.e.~$C^A = \perp$ and $C^B = (\mathtt{Z},\mathtt{det})$), she sets $S$ to be equal to the second entry of $X$ (i.e.~just the choice of eigenstate, not the basis choice), otherwise she sets $S$ to be a fixed value of $0$; with this the alphabet of $S$ is $\{0,1\}$ and so we have $\dim(S)=2$.
Note that since this sifting process is based only on the public announcements $I=(C^A,C^B)$, Bob knows which rounds have been sifted out, though for the purposes of the subsequent calculations this is simply accounted for by the inclusion of $I$ in the conditioning registers.

\newcommand{\pdetZZ}{{p^{\mathtt{det}|\mathtt{ZZ}}_\mathrm{hon}}} 
With this description, we can slightly simplify the computation of the single-round honest value $H(S|YI)_\mathrm{hon}$ that determines the error-correction term~\eqref{eq:error_correction}. Specifically, since $S$ is set to a deterministic value $0$ whenever the round is sifted out, we can write
\begin{align}\label{eq:simplifyEC}
    H(S|YI)_\mathrm{hon} = 
    (1-\gamma)^2 \pdetZZ H(S|Y; C^A = \perp \land\, C^B = (\mathtt{Z},\mathtt{det}))_\mathrm{hon},
\end{align}
where by $\pdetZZ$ we mean the single-round probability of a detection event conditioned on sending and measuring in the $Z$ basis (in the honest case), and we can omit the $I$ register on the right-hand-side because it takes a fixed value when conditioned on $C^A = \perp$ and $C^B = (\mathtt{Z},\mathtt{det})$. For the model considered here, we simply have $\pdetZZ = 10^{-\frac{\lossparam}{10 \mathrm{dB}}}$ since the loss probability is independent of Alice and Bob's basis choices.

Previously we kept the exact definition of \(W (\rhogJ)\) rather general because it depends on the exact protocol. We will apply the formalism of \cite{WLC18} to represent the protocol and its announcement structure with two maps, the \(\mathcal{G}\)- and \(\mathcal{Z}\)-map. The \(\mathcal{G}\)-map applies the key map and incorporates the announcements, while the \(\mathcal{Z}\)-map is a pinching channel which is needed for technical reasons. Additionally, we will make use of the simplifications to this formalism that were presented in \cite[Appendix A]{LUL19}.

With the above protocol description, the appropriate conditional entropy $H(S|IE)$ to consider here can be written in the following form:\footnote{The prefactor of \(\left(1- \gamma \right)^2\) is to account for the probability of Alice and Bob both measuring in the $Z$ basis; we separately account for the probability of detection in the $W$ map, since that probability has a dependence on the state $\rhogJ$.} 
\begin{equation}\label{eq:W qubit BB84}
    W (\rhogJ) := \left(1- \gamma \right)^2 D\left(\mathcal{G}(\rhogJ) || \mathcal{Z} \circ \mathcal{G}(\rhogJ) \right),
\end{equation}
where 
\begin{equation}
    \rho^{t/g}_{J} = \tr[A']{ \left( \id_A \otimes J \right) \left( \ketbra{\xi^{t/g}}{\xi^{t/g}}^{T_{A'}} \otimes \id_B \right) },
\end{equation}
and 
\begin{equation}
    \ket{\xi^{t/g}}_{AA'} = \frac{1}{\sqrt{2}} \left( \ket{00} + \ket{11} \right).
\end{equation}

After applying the condition on the generation rounds, the only remaining Kraus operator for the \(\mathcal{G}\)-map is
\begin{equation}\label{eq:Kraus Ops G map}
	\begin{aligned}
		K_Z = &\left[
		\begin{pmatrix} 1 \\ 0 \end{pmatrix}
		_S  \otimes \begin{pmatrix} 1 & 0\\ 0 & 0 \end{pmatrix}_A +
		\begin{pmatrix} 0 \\ 1 \end{pmatrix}
		_S \otimes  \begin{pmatrix} 0 & 0\\ 0 & 1 \end{pmatrix}_A \right] 
		\otimes 
		\begin{pmatrix} 1 & & \\ & 1 & \\& & 0 \end{pmatrix}
		_B \otimes 1_I
	\end{aligned},
\end{equation}
where \(1_I \) is just a scalar. Finally, the Kraus operators for the pinching map \(\mathcal{Z}\) are
\begin{equation}\label{eq:Kraus Ops Z map}
	\begin{aligned}
		Z_1 &=
		\begin{pmatrix} 1 & \\ & 0 \end{pmatrix}
		\otimes _{\id_{\dim_A} \times\dim_B}, \\
		Z_2 &=
		\begin{pmatrix} 0 & \\ & 1 \end{pmatrix}
		\otimes _{\id_{\dim_A} \times\dim_B}.
	\end{aligned}
\end{equation}

Now, only the derivatives with respect to the Choi state \(J\) are missing to implement the Frank-Wolfe algorithm, which we will state for convenience below. Therefore, let us define the completely positive map \(\chi_g(X)\) as
\begin{equation}
    \chi_g(X) := \tr[A']{ \left( \id_A \otimes X^{T_{A'}} \right) \left( \ketbra{\xi^g}{\xi^g} \otimes \id_B \right) },
\end{equation}
where the state \(\ketbra{\xi^g}{\xi^g}\) plays the role of a Choi state. Thus, for \(X=J\), we find \(\rhogJ = \chi_g(J)\) and we can rewrite \(W(\rhogJ)\) as
\begin{equation}
    W(\rhogJ) = \left(W \circ \chi_g\right) (J).
\end{equation}
Then, keeping in mind that \(W(\rhogJ) = \left(1-\gamma\right)^2 D\left( \mathcal{G}(\rhogJ) || \mathcal{Z}\circ \mathcal{G}(\rhogJ) \right)\) as defined in Eq.~\eqref{eq:W qubit BB84}, we can write the derivative of \(W \) with respect to \(J\) as
\begin{align}
    \nabla_J W(\rhogJ) = \left(\chi_g^{\dagger} \circ \mathcal{G}^{\dagger} \circ \log \circ\; \mathcal{G} \circ \chi_g \right) \left(J \right) - \left(\chi_g^{\dagger} \circ \mathcal{G}^{\dagger} \circ \log \circ \mathcal{Z} \circ \mathcal{G} \circ \chi_g \right) \left(J \right).
\end{align}
The gradient of the remaining terms can be written as
\begin{equation}
\begin{aligned}
    &\nabla_J \left(\mbf{g}\cdot\left(\qhon - \probst[\rhotJ]\right) -\frac{\alpha-1}{2-\alpha}\frac{\operatorname{ln}(2)}{2}\widetilde{V}\!\left(\probst[\rhotJ],\mbf{g}\right) \right) \\
    = - &\left( \sum_c B_c^T \left( \mbf{g}_c + \frac{\alpha-1}{2-\alpha} \frac{\ln(2)}{2} \frac{\partial{\tilde{V}}}{\partial{\mbf{q_J}_c}} \right) \right),
\end{aligned}
\end{equation}
where \(B_c = B_{ij}\) for \((i,j) \in \Cnoperp \) and \(\mbf{q_J}\) are defined via
\begin{align}
    B_{ij} &\defvar \tr[A]{\left( M_i^A \otimes \id_{A'} \otimes M_j^B \right) \left( \ketbra{\xi^t}{\xi^t}^{T_{A'}} \otimes \id_B \right) }, \\
    \mbf{q}_J &\defvar \probst[\rhotJ].
\end{align}

\subsection{Numerical results}

First, in Fig.~\ref{fig:Qubit BB84 iid comparison}, 
we present the finite-size key rates for the qubit BB84 protocol described above plotted against loss in \(\mathrm{dB}\). In order to represent a realistic setup, we chose a depolarization of $\depol =0.01$ for all signals and a security parameter of $\esecure=10^{-8}$, though for simplicity we first only show the unique-acceptance scenario (i.e.~a single-point acceptance set \(S_{\mathrm{acc}} = \{\mbf{\phon} \}\); see Remark~\ref{remark:unique}). The testing probability \(\gamma\) and the \Renyi\ parameter have been optimized for each data point. 
We compare our results with the key rates resulting from the unique-acceptance IID analysis as presented in Ref.~\cite{arx_KTL25}. Our key rates compared to that scenario are worse, but this is to be expected since this work covers a much more general attack, compared to the IID attack in Ref.~\cite{arx_KTL25}. (In principle, another method to obtain key rates secure against coherent attacks would be to apply the postselection technique~\cite{CKR09,Nahar2024PRXQuantum} to that IID analysis. However, we leave such an analysis for separate work, since here we focus on the GEAT.) We see that in the low-loss regime, we in fact obtain basically the same key rates as the IID scenario for all $n$ values. However, our key rates drop off significantly faster as the loss increases.
\begin{figure}[h]
    \centering
    \includegraphics[width=1\textwidth]{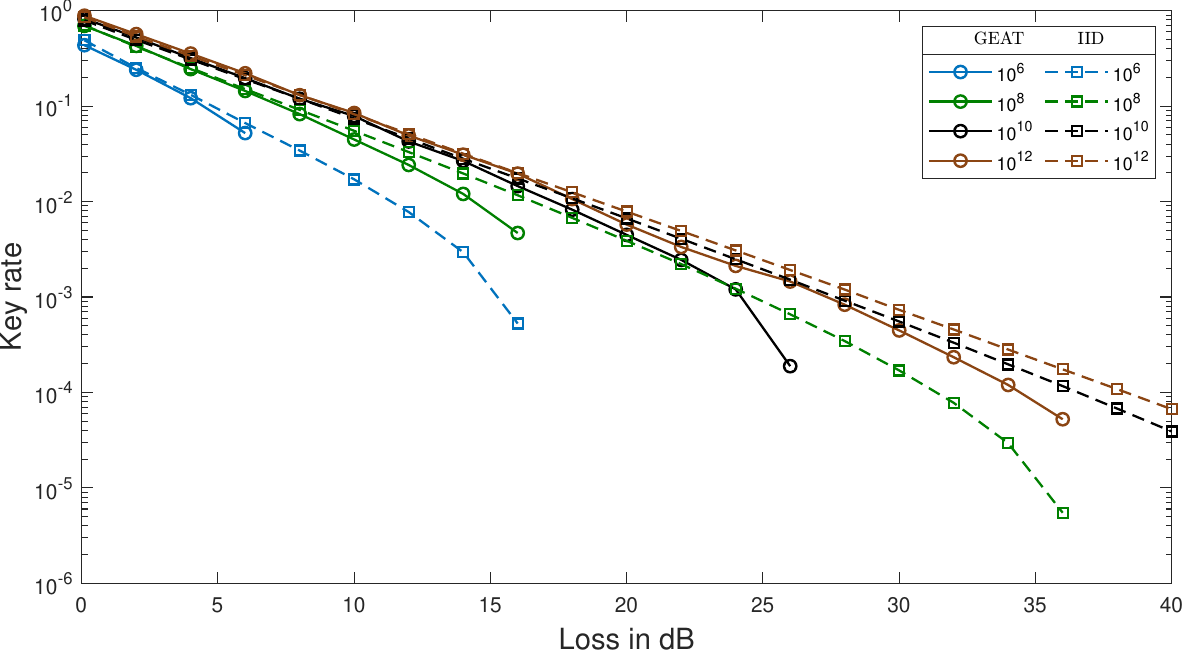}
    \caption{Key rate comparison for the qubit BB84 protocol with loss, between the GEAT analysis in this work (solid lines), and the IID setting (dashed lines). In all cases, the signals are depolarized with \(\depol = 0.01\), the security parameter is $\esecure=10^{-8}$, and for simplicity we show only the results for unique acceptance (see Remark~\ref{remark:unique}). The testing probability $\gamma$ and the \Renyi\ parameter $\alpha$ are optimized for the EAT analysis, and the former was also optimized in the IID analysis.}
    \label{fig:Qubit BB84 iid comparison}
\end{figure}

\begin{figure}
    \centering
    \includegraphics[width=1\textwidth]{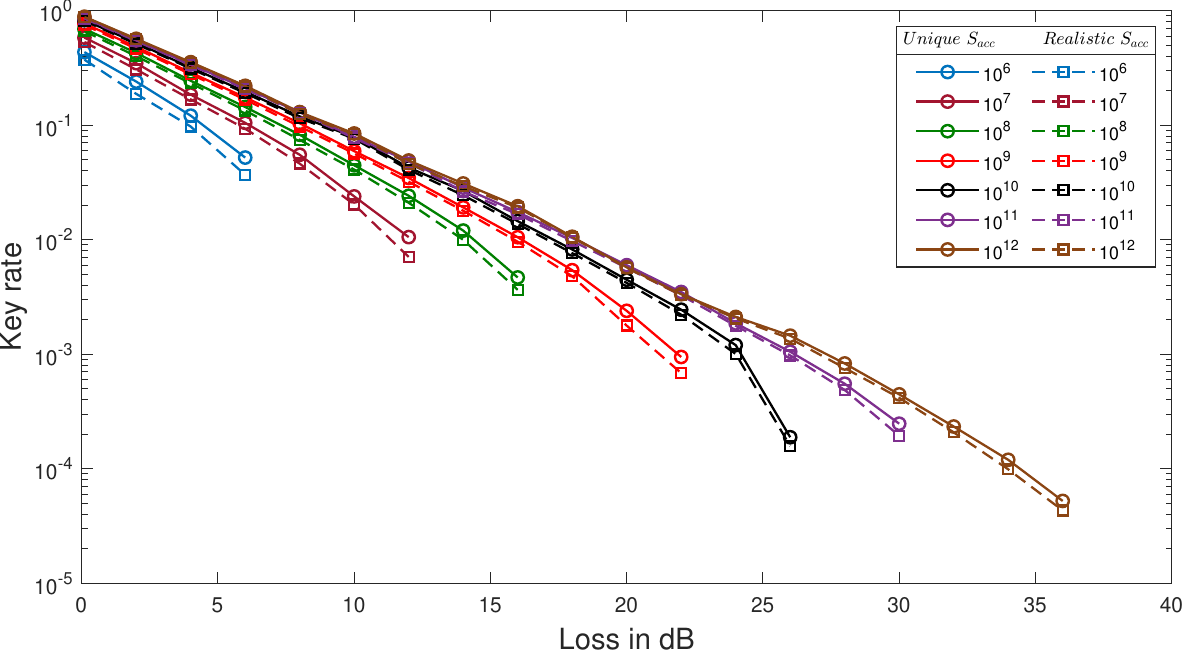}\\
    \includegraphics[width=1\textwidth]{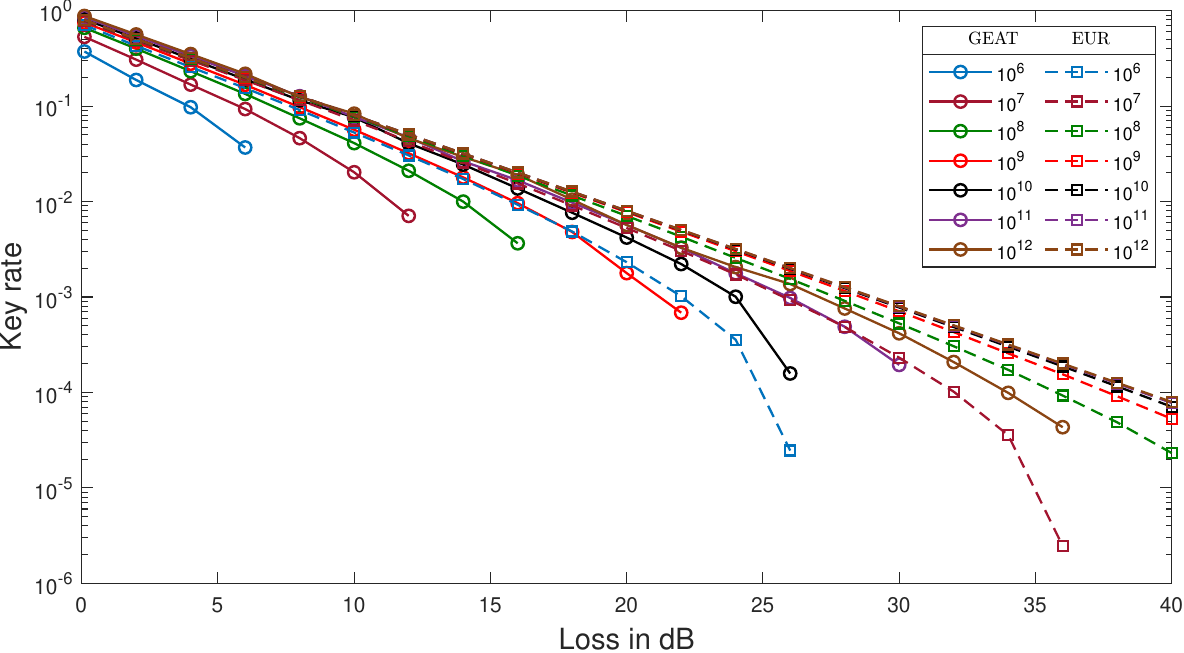}
    \caption{Key rates for the qubit BB84 protocol with loss. In the upper plot, the solid lines correspond to the unique-acceptance condition (see Remark~\ref{remark:unique}), whereas the dashed lines are for a realistic acceptance condition where there is a tolerance interval around the expected distribution as noted in Eq.~\eqref{eq:acceptentrywise}. In all cases, signals are depolarized with $\depol =0.01$; the security parameter is $\esecure=10^{-8}$; the completeness parameter for the realistic tolerance intervals is $\ecom=10^{-3}$; and the testing probability $\gamma$ and \Renyi\ parameter $\alpha$ are optimized for each data point. 
    In the lower plot, the solid lines show again our key rates for the realistic acceptance condition, while the dashed lines show the expected key rates obtained from a recent complementarity-based analysis in~\cite{arx_TNS+24} (incorporating recent improvements developed in~\cite{arx_MZC24}). The same $\depol,\esecure,\ecom$ values were used, and other parameters were again optimized. We find that for this specific protocol, our approach performs somewhat worse, though the difference becomes smaller at large $n$.
    }
    \label{fig:Qubit BB84 unique vs realistic acceptance}
\end{figure}

Next, in Fig.~\ref{fig:Qubit BB84 unique vs realistic acceptance}, we show a comparison between unique acceptance and a realistic acceptance set.
The latter was chosen such that it fulfils the completeness condition \(\Pr[\freq(c^n) \notin S_{\mathrm{acc}}  ] \leq \ecom\), where we chose \(\ecom = 10^{-3}\), as laid out in Sec.~\ref{subsec:completeness} and Sec.~\ref{subsubsec:deltacomp}.
One can see that to achieve positive key rates up to \(\unit[30]{dB}\), we require at least \(n=10^{11}\) signals sent. 
However, there is only a small difference in the key rates between the unique-acceptance case and the realistic acceptance set. In practical protocols, one would need to use the latter option, since otherwise the protocol would just almost always abort. Hence, the fact that we incur such a small penalty when making the protocol robust against statistical fluctuations is an important feature for real-world implementations.

Also in Fig.~\ref{fig:Qubit BB84 unique vs realistic acceptance}, we show a comparison of our results for the realistic acceptance set with the key rates obtained from a complementarity-based security proof, combining results from recent works~\cite{arx_TNS+24,arx_MZC24}. We find that the latter generally performs better than our approach for this protocol, though the difference becomes smaller at large $n$.

\section{Decoy-state with improved analysis}\label{sec:Decoy-state with improved analysis}

In this section we will use our methods presented in Sec.~\ref{sec:finitesize}, extend them to decoy-state protocols, and then apply the results to a decoy-state version of the BB84 protocol.

\subsection{Protocol details}\label{sec:Protocol Details Decoy}
We present a decoy-state version of the BB84 protocol which is effectively a decoy-state version of our qubit example in Sec.~\ref{sec:Qubit BB84 with loss} above. For this decoy-state protocol, Alice can choose between fully phase-randomized weak coherent pulses (WCP) with intensities \(\{\musig, \mu_2,\mu_3 \}\) to send her signals. The intensity \(\musig\) is the so-called signal intensity and will be used predominantly for key generation. Furthermore, we assume that the information is encoded in the polarization degree of freedom, i.e. Alice will send states from the set \(\{\rho_H, \rho_V , \rho_D,  \rho_A \}\), which are mixed because fully phase-randomized states are a 
classical mixture
of photon-number states.

There are a few additional differences to the qubit BB84 protocol presented above. 
In each round Alice still decides with probability \(\gamma\) whether it is a test or generation round. If it is a test round, Alice now selects an intensity \(\mu_i \in \{\musig, \mu_2,\mu_3 \} \) with some specified probability \(p(\mu_i|t)\), and sends a uniformly random choice out of \(\rho_D, \rho_A\) in the \(X\)-basis with that intensity. 
Otherwise, she uses the signal intensity \(\musig\) to send a uniformly random choice out of the states \(\rho_H, \rho_V\) in the \(Z\)-basis.
In either case, she uses a classical register $X$ to record her choice of intensity, basis, and signal state.\footnote{With the procedure specified here, this $X$ value is also sufficient to determine whether it is a test or generation round, just as in the previous qubit BB84 protocol (since Alice uses different basis choices in test or generation rounds). For more general protocols we may need to have the $X$ register also include a specification of whether it is a test or generation round.}

We consider an active measurement setup where Bob has control over a polarization-rotator, which then determines the basis to be measured in. This choice is just out of convenience for our optimization problem, to keep the dimensions of the involved Choi states small. There are no inherent issues with using a passive detection setup, unlike methods relying on the entropic uncertainty relation, e.g. \cite{LCW+14}.

In case of a passive detection setup, one would be required to use alternative squashing maps unless \(\gamma = 1/2\), since the squashing map presented in \cite{BML08,GBN+14} is only valid for passive setups with symmetric basis choices. There may be some technical points to handle when applying alternatives --- for example, the flag-state squasher of \cite{ZCW+21} (which allows for asymmetric basis choices of any detection setup) requires a subspace estimation\footnote{A generic method for this subspace estimation for passive linear optical detection setups was presented in \cite{KL24}.} technique, which we discuss how to handle within our framework in Appendix~\ref{app:squash} (Theorem~\ref{thrm:Squashed mod primal}). With this approach, using the flag-state squasher would result in one additional constraint in \eqref{eq:bestapproxrate decoy} corresponding to the subspace estimation. Furthermore, Bob's POVM elements would change due to his passive detection setup and by adding so-called flags necessary for the flag-state squasher. In summary, our framework can incorporate any passive linear optical detection setup, but for this work we focus on an active detection setup to keep the dimension of our optimization variables small.

Applying the active detection setup, Bob still measures the incoming states with probability \(\gamma\) in the \(X\)-basis and with \(1-\gamma\) in the \(Z\)-basis, recording his basis choice and outcome. Bob then applies the post-processing of \cite{BML08,GBN+14} for active BB84 detection setups to his outcomes. This converts his outcomes to an equivalent qubit detection scheme. Finally, Bob stores his post-processed outcomes in a classical register $Y$.

Then, the public announcements are also analogous to the previous qubit BB84 protocol. Alice first announces a classical register \(C^A\) that is set to $\perp$ if it was a generation round, and otherwise set to the value of the $X$ register. 
Bob then announces a classical register \(C^B\) as before: if it was a generation round (i.e.~$C^A=\perp$) \emph{and} he measured in the $Z$ basis, then he sets $C^B$ to either $(\mathtt{Z},\text{\texttt{det}})$ or $(\mathtt{Z},\text{\texttt{no-det}})$ depending on whether there was a detection; otherwise he sets $C^B = Y$. 
Just as in the qubit BB84 protocol, these form all the single-round public announcements, i.e.~we have $I=(C^A,C^B)$; similarly, with these announcements Alice and Bob set $C=\perp$ if $C^A = \perp$, and otherwise set $C=(C^A, C^B)$. 
Then, they apply an analogous sifting procedure: if the round was a generation round that was also measured in the $Z$ basis and successfully detected (i.e.~$C^A = \perp$ and $C^B = (\mathtt{Z},\mathtt{det})$), then Alice sets $S$ to be $0$ or $1$ depending on which $Z$-eigenstate she sent; otherwise she fixes $S=0$.

For this analysis, we model the honest behaviour following the description in~\cite{WL22}, which is specified by a misalignment parameter $\theta^{\text{misalign}}_\mathrm{hon}$ and a loss-in-decibels parameter $\lossparam$. Note that the latter is used to implement a beamsplitter model for loss from multi-photon states, rather than the single-photon loss model described previously in Sec.~\ref{sec:Qubit BB84 with loss}. With this, the overall probability of a round 
passing the sifting stage in the honest case is again \(\left(1- \gamma \right)^2 \pdetZZ \), though $\pdetZZ$ has to be computed under the beamsplitter model instead. 
The value of $H(S|YI)_\mathrm{hon}$ in the error-correction term can then be computed using the formula~\eqref{eq:simplifyEC} as before.

\subsection{General decoy formulation}

We now lay out the the theoretical foundations for bounding the entropy against Eve. 
For flexibility in potential applications, in this description we will consider a slightly more general scenario than that described in Sec.~\ref{sec:Protocol Details Decoy} above. Specifically, we shall allow Alice to choose from multiple intensities \(\musig, \mu_2, \dots \) in the generation rounds as well as the test rounds, according to some arbitrary distribution that may be different in the two cases.
The protocol we described above can be viewed as the special case where the distribution in the generation rounds is the trivial distribution that always uses a single intensity $\musig$.

When using decoy states from WCP sources, we can assume without loss of generality that Eve performs a QND measurement of the photon number first and then applies an attack based on the photon number \cite{LL20}. Thus, we can write Eve's attack channel \(\channE\) as a direct sum acting on each photon number separately, \(\channE = \bigoplus_{n=0}^{\infty} \channE_n \). The same holds true for the Choi states of the channel, i.e. \(J = \bigoplus_{n=0}^{\infty} J_n \). Therefore, the states \(\rhogJ\) conditioned on a generation round and \(\rho^{t,\mu}_J\) conditioned on a test round with intensity \(\mu\) satisfy
\begin{equation}
    \rhogJ = \sum_n p(n) \rho_{J_n}^{g}, \qquad 
    \rhotmu = \sum_n p_{\mu}(n) \rhotmun.
\end{equation}

Incorporating these properties of the channel and the states, we rewrite the crossover rate function as
\begin{align}\label{eq:crossover decoy infinite}
\begin{aligned}
r_\mathrm{cross}(\mbf{q}) =
\inf_{J = \bigoplus_{n=0}^{\infty} J_n} &\quad \sum_{n=0}^{\infty} p(n) W(\rho_{J_n}^{g}) \\ 
\suchthat &\quad 
\probst[\rhotJ] = \mbf{q}.
\end{aligned}
\end{align}
which still contains infinitely many terms in both the objective function and the constraints. However, we note that this optimization problem is already convex and does not require any relaxations in order to achieve convexity. Next, we aim to find a lower bound requiring only a finite number of optimization variables. We begin by first bounding the objective function, as follows.

For any protocol we can find a lower bound on the objective in the crossover rate function \eqref{eq:crossrateJ} by ignoring all contributions apart from those of single photons sent by Alice. In the case of the BB84 protocol this will be the main contribution, whereas vacuum states would only contribute on the order of dark counts. All higher photon numbers will not contribute to increasing the key rate because Eve could perform a PNS attack \cite{BBB+92,BLM+00}. Nevertheless, our technique allows for including higher photon numbers in principle, which could be beneficial for other protocols.

As the single-photon contribution to the objective function only depends on the Choi state $J_1$, from the above consideration we see the crossover rate function can be lower bounded via
\begin{align}
\begin{aligned}
r_\mathrm{cross}(\mbf{q}) \geq
\inf_{J = \bigoplus_{n=0}^{\infty} J_n} &\quad p(1) W(\rho_{J_1}^{g}) \\ 
\suchthat &\quad 
\probst[\rhotJ] = \mbf{q},
\end{aligned}
\end{align}
whereas if we were to consider additional photon numbers up to some cut-off \(N_c\), we would replace \(p(1) W(\rho_{J_1}^{g})\) with 
\begin{equation}\label{eq:multiphotonobjective}
    \sum_{n\leq N_c} p(n) W(\rho_{J_n}^{g}),
\end{equation}
and the optimization variables appearing in the objective function would include all Choi states up to \(N_c\).

Similarly to the objective function, one can exploit the block-diagonal structure of the Choi states for \(\probst[\rhotmu]\). Here we can equivalently write for each intensity \(\mu \)
\begin{equation}\label{eq:allChoiconstraints}
    \probst[\rhotmu] = \sum_{n=0}^{\infty} p_{\mu}(n) \probst[\rhotmun].
\end{equation}
For simplicity let us write \(\mbf{q}^{\mu}\) to denote the component of \(\mbf{q}\) corresponding to the intensity \(\mu\), i.e.~so we have \(\mbf{q} = \left(\mbf{q}^{\musig}, \mbf{q}^{\mu_2}, \dots \right)\) and \( \sum_c \mbf{q}^{\mu}_c = p(\mu |t)\).

Next, for each \(c=(a,b) \in \Cnoperp\) we can make the following rearrangements by writing \(\probst[\rhotmun]_{ab} = p(a,b|\mu,n) =: Y_n^{ab}\): 
\begin{equation}
    \mbf{q}^{\mu} =  p(\mu|t)\probst_{ab}[\rhotmu] =  p(\mu|t) \sum_{n=0}^{\infty} p_{\mu}(n) \probst_{ab}[\rhotmun] =  p(\mu|t) \sum_{n=0}^{\infty} p_{\mu}(n) Y_n^{ab},
\end{equation}
where we chose the definition of the \(n\)-photon yield \(Y_n^{ab} \) in line with the common one used for decoy-state protocols. These yields actually do not depend on the intensity \(\mu\), because based on the photon number Eve is unable to distinguish the intensities \cite{LL20}.

With this formulation, observe that for all \(n \neq 1 \), the only dependence of our optimization on the Choi states $J_n$ is via the corresponding yields \(\mbf{Y}_n\). Therefore, we can optimize over the yields \(\mbf{Y}_n\) in place of those Choi states. 
However, since this would still contain an infinite number of optimization variables, we introduce a photon number cut-off \(\Nph\) and characterize the remainder of the sum by \(\deltavec^{\mu}\). The remainder \(\deltavec^{\mu}\) cannot be arbitrarily large; it needs to satisfy for each intensity \(\mu\):
\begin{equation}
    0 \leq \delta_{ab}^{\mu} \leq 1- \sum_{n \leq \Nph} p_{\mu}(n) =: 1- \ptot(\mu) \quad \forall (a,b) \in \mathcal{C}.
\end{equation}
Therefore, for some \(\delta_{ab}^{\mu}\) satisfying the above constraint, we can write 
\begin{equation}\label{eq:probst per intensity}
    \probst_{ab}[\rhotmu] = \sum_{n=0}^{\Nph} p_{\mu}(n) Y_n^{ab} + \delta_{ab}^{\mu},
\end{equation}
for all intensities \(\mu\). Hence, we can recast the final optimization problem for the crossover rate function as
\begin{align}\label{eq:crossover rate decoy}
\begin{aligned}
r_\mathrm{cross}(\mbf{q}) \geq
\inf_{J_1, \mbf{Y}_0, \dots \mbf{Y}_{\Nph}, \deltavec^{\mu}} &\quad p(1) W(\rho_{J_1}^{g}) \\ 
\suchthat &\quad 
p(\mu|t)\left( \sum_{n \leq \Nph} p_{\mu}(n) \mbf{Y}_n+ \deltavec^{\mu} \right) = \mbf{q}^{\mu} \; \forall \mu, \\
&\quad  0 \leq \deltavec^{\mu} \leq 1- \ptot(\mu)\; \forall \mu, \\
&\quad  p(\mu|t) \mbf{Y}_1 = p(\mu|t) \probst[\rhotmun[1]]\; \forall \mu,\\
&\quad  \sum_b Y^{ab}_n = p(a|t,n) \; \forall a,b.
\end{aligned}
\end{align}

Before we continue to the decoy version of the optimization problem in Theorem~\ref{th:modprimal}, a few remarks about this crossover rate function are in line. As mentioned earlier, the crossover rate function yields the secret key rate for collective attacks, and hence can be used to compute valid key rates in the asymptotic limit as well. Thus, one can draw simple comparisons to previous decoy-state methods.

For example, in \cite{WL22}, asymptotic key rates were calculated for decoy-state protocols. In that work a two-step process was performed, where first the single-photon yields were bounded from above and below using an LP, and then these bounds were used to calculate the secret key rate. Additionally, in \cite{NUL23, KL24} improved methods for the decoy-state analysis of this two-step process were developed. Those works already used the Choi state to characterize the channel, but only for the purposes of bounding the single-photon yields, i.e.~the overall key rate calculation was still a two-step process. This was then extended to a finite-size analysis in~\cite{arx_KTL25}.

More specifically, in terms of our notation, the first step of the two-step process in those works consists of computing bounds that constrain the values \( \mbf{Y}_1 = \probst[\rhotmun[1]]\). Then, the second step consists of minimizing over the Choi state, subject only to those constraints. Qualitatively, the drawback of that approach is that it allows feasible points in the optimization that essentially correspond to Eve achieving the ``worst-case'' value on all yields simultaneously, which results in suboptimal bounds on the key rate.
In contrast, our methods here combine these two steps into a single one, which should provide better bounds than~\cite{WL22,NUL23,KL24,arx_KTL25}.

\begin{remark} \label{remark:twostep}
Note that the above claim can be rigorously justified as follows: every feasible point in our optimization in Eq.~\eqref{eq:crossover rate decoy} yields a feasible point in the two-step optimizations used in~\cite{WL22,NUL23,KL24,arx_KTL25}, but not vice versa --- for instance, in our optimization, there would be no feasible points that correspond to Eve simultaneously achieving the ``worst-case'' value on all yields. Since these optimizations are minimizations, this immediately implies on a mathematical level that applying our formulation would yield a higher optimal value as compared to those works. Still, it is true that this does not give an estimate of the extent by which it is higher, hence we present some empirical results in the next section (see the discussion of Fig.~\ref{fig:Decoy BB84 iid comparison}). We also note that the optimization we have presented is convex, and we find that the improved Frank-Wolfe methods we implemented converge in practice; similarly, the two-step procedures described in those works consist of an LP (which is rapidly solvable by standard algorithms) followed by a convex optimization that is similarly solved using Frank-Wolfe. 

We again clarify that the scope of this claim is not intended to encompass complementarity-based proof techniques, for which it is harder to provide a fully theoretical comparison. For such techniques, we provide an empirical comparison of the resulting key rates in Fig.~\ref{fig:Decoy BB84 unique vs realistic acceptance}.
\end{remark}

In fact, we emphasize that the optimization in Eq.~\eqref{eq:crossover decoy infinite} is \emph{exactly} a reparametrization of an optimization over Eve's single-round attack on the decoy-state protocol. More formally: there is a bijection between feasible points of the optimization and attacks Eve can physically perform, described by the Choi states. Hence in the context of the asymptotic key rates, that optimization would yield an \emph{exactly} tight bound on the Devetak-Winter asymptotic key rate formula, i.e.~there was no loss of tightness in constructing that optimization, because every feasible point corresponds to a valid attack that could actually be performed.

The only differences between that tight bound and the optimization that we numerically implement (Eq.~\eqref{eq:crossover rate decoy}) are the following. First, we instead have to set a finite value for the photon cut-off \(\Nph\), effectively grouping together all photon-number terms above the cutoff into a single term that might theoretically not correspond to a valid attack. Second, we have relaxed the objective function slightly in that we have excluded the vacuum-state entropy contribution.\footnote{While the objective function formula in Eq.~\eqref{eq:crossover rate decoy} also does not have multi-photon contributions, this is simply because these contributions are exactly zero in the case of decoy-state BB84 as previously discussed, and hence nothing is lost by excluding them.} Third, we relaxed some of the constraints slightly by replacing the multi-photon Choi states with yields, which loses some ``quantum correlations'' in those terms. It appears difficult to give a fully formal analysis of the effect of these relaxations --- however, on a purely informal basis, we remark that it seems unlikely that this introduces a significant suboptimality. This is because the higher-photon-number components usually have low weight in typical parameter regimes for decoy-state protocols, and the vacuum-state entropy contribution is only on the order of the dark counts, as mentioned previously. However, we do not intend this to be a formal claim, and perhaps future work could study in more detail the effect of this relaxation. Most importantly, we highlight that even the numerical implementation in Eq.~\eqref{eq:crossover rate decoy} can be made arbitrarily tight with respect to the Devetak-Winter formula, by including more Choi states.

Next we turn our attention to finding an optimal crossover min-tradeoff function as in Theorem~\ref{th:modprimal} (or Appendix~\ref{app:squash}, Theorem~\ref{thrm:Squashed mod primal} if squashing maps are required). We still define \(r_\mathrm{best}\) in the same way, but instead we use the modified crossover rate function of Eq.~\eqref{eq:crossover rate decoy} from above. After following the same steps as in Appendix~\ref{app:modprimal}, we find
\begin{align}\label{eq:bestapproxrate decoy}
\begin{aligned}
r_\mathrm{best} =
\inf_{\substack{J_1, \mbf{Y}_0, \dots \mbf{Y}_{\Nph}, \\\deltavec^{\mu} ,\boldsymbol{\svar}^{\mu}}} &\quad p(1) W(\rho_{J_1}^{g}) + s\!\left(\sum_c \svar_c/{2}\right) \\ 
\suchthat &\quad 
-\boldsymbol{\svar}^{\mu} \leq \mbf{q}^{\mu} - p(\mu|t)\left( \sum_{n \leq \Nph} p_{\mu}(n) \mbf{Y}_n + \deltavec^{\mu} \right) \leq \boldsymbol{\svar}^{\mu} \; \forall \mu , \\
&\quad  0 \leq \deltavec^{\mu} \leq 1- \ptot(\mu)\; \forall \mu, \\
&\quad  p(\mu|t) \mbf{Y}_1 = p(\mu|t) \probst[\rhotmun[1]]\; \forall \mu,\\ 
&\quad  \sum_b Y^{ab}_n = p(a|t,n) \; \forall a,b.
\end{aligned}
\end{align}

As before, the gradient of the crossover min-tradeoff function can be extracted as the dual variable to the constraint 
\(\qhon - \probst[\rhotJ] - \lambdavec = \mbf{0}\), where \(\qhon = \left(\mbf{q}^{\musig}, \dots \right)^T\), \(\probst[\rhotJ] = \left( p(\mu|t)\probst[\rhotmu] , \dots \right)^T\) and each \(\probst[\rhotmu]\) is identified by Eq.~\eqref{eq:probst per intensity}.
This constraint is, as expected, equivalent to the first constraint of our optimization problem for \(r_\mathrm{best}\):
\begin{equation}
    \mbf{0} = \mbf{q}^{\mu} - p(\mu|t)\left( \sum_{n \leq \Nph} p_{\mu}(n) \mbf{Y}_n + \deltavec^{\mu} \right) - \lambdavec^{\mu} \quad \forall \mu.
\end{equation}

After we find the gradient \(\mbf{g}\) of a crossover min-tradeoff function, we can again apply the key length formula from Eq.~\eqref{eq:keylength}:
\begin{align}
    l&\leq n\left(\inf_{\substack{J_1, \mbf{Y}_0, \dots \mbf{Y}_{\Nph}, \\\mbf{\delta}^{\mu} }} \left(p(1)W(\rho_{J_1}^{g}) + \mbf{g}\cdot\left(\qhon - \probst[\rhotJ]\right) -\frac{\alpha-1}{2-\alpha}\frac{\operatorname{ln}(2)}{2}\widetilde{V}\!\left(\probst[\rhotJ],\mbf{g}\right) \right) - \Delta_\mathrm{com} \right) \nonumber\\
    &\qquad -n\left(\frac{\alpha-1}{2-\alpha}\right)^2K(\alpha)-\lambdaEC-\frac{\alpha}{\alpha-1}\log\frac{1}{\esecret}+2, 
\end{align}
where now \(\qhon = \left(\mbf{q}^{\musig}, \dots \right)^T\), \(\probst[\rhotJ] = \left( p(\mu|t)\probst[\rhotmu] , \dots \right)^T\) and 
\begin{equation}
    p(\mu|t)\probst[\rhotmu] = p(\mu|t)\left( \sum_{n \leq \Nph} p_{\mu}(n) \mbf{Y}_n + \deltavec^{\mu} \right) 
\end{equation}
is a function of \( \mbf{Y}_0, \dots, \mbf{Y}_{\Nph}, \deltavec^{\mu}\). Thus, we have found a formulation in line with Sec.~\ref{sec:finitesize} and can apply those methods to calculate the finite-size secret key rates.
Again, if we were to include higher photon numbers, the term \(p(1)W(\rho_{J_1}^{g})\) will be replaced with \(\sum_n p(n)W(\rho_{J_n}^{g})\).

\subsection{Numerical results}
In this section we present our numerical results for the decoy-state BB84 protocol as described in Sec.~\ref{sec:Protocol Details Decoy}. Again, note that we only consider the single-photon contribution (any higher photon number will give zero key rate) as shown in the derivation of \eqref{eq:bestapproxrate decoy}.
Hence, \(W(\rho_{J_1}^{g})\) is equal to \(W(\rhogJ)\) in Eq.~\eqref{eq:W qubit BB84} of the qubit BB84 protocol. Therefore, also the Kraus operators for the \(\mathcal{G}\)-map of eq.~\eqref{eq:Kraus Ops G map} and the \(\mathcal{Z}\)-map of eq.~\eqref{eq:Kraus Ops Z map} apply for the decoy-state protocol as well.

\begin{figure}
    \centering
    \includegraphics[width=1\textwidth]{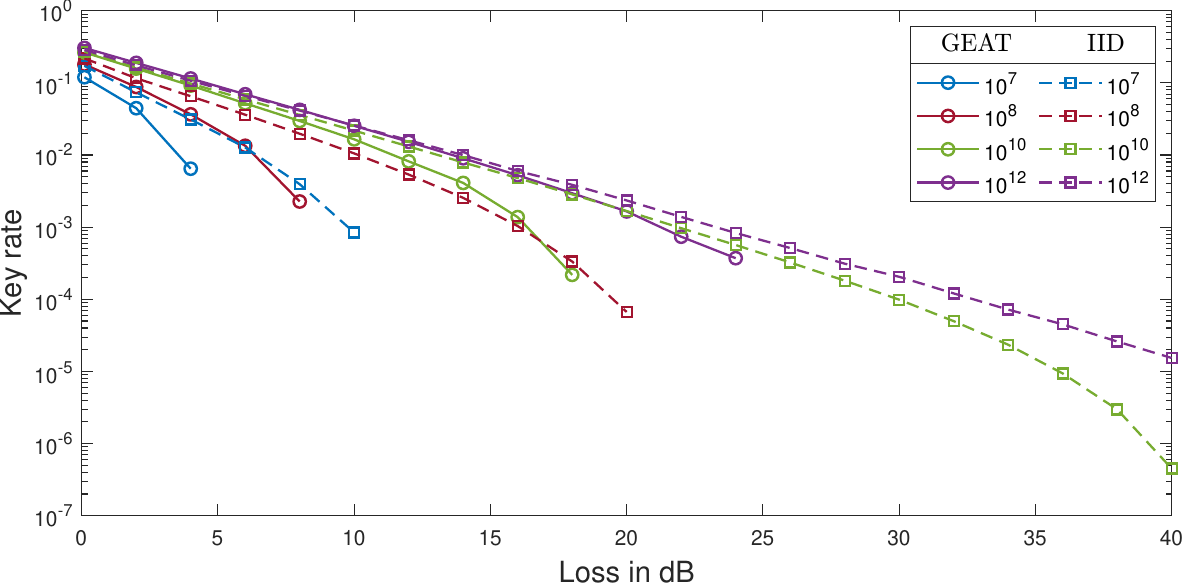}
    \caption{
    Key rate comparison for the decoy-state protocol, between the GEAT analysis in this work (solid lines), and the IID setting (dashed lines). 
    In all cases, fully phase-randomized weak coherent pulses with intensities $\{\musig = 0.9, \mu_2 = 2\times10^{-2},\mu_3 = 10^{-3}\}$ were used. Signals are misaligned with \(\theta^{\text{misalign}}_\mathrm{hon} = \asin(0.1)\), the photon number cut-off is $\Nph=10$, the security parameter is $\esecure=10^{-8}$,
    and for simplicity we show only the results for unique acceptance (see Remark~\ref{remark:unique}). The testing probability $\gamma$ and the \Renyi\ parameter $\alpha$ are optimized for the EAT analysis, and the former was also optimized for the IID analysis.}
    \label{fig:Decoy BB84 iid comparison}
\end{figure}

\begin{figure}
    \centering
    \includegraphics[width=.97\textwidth]{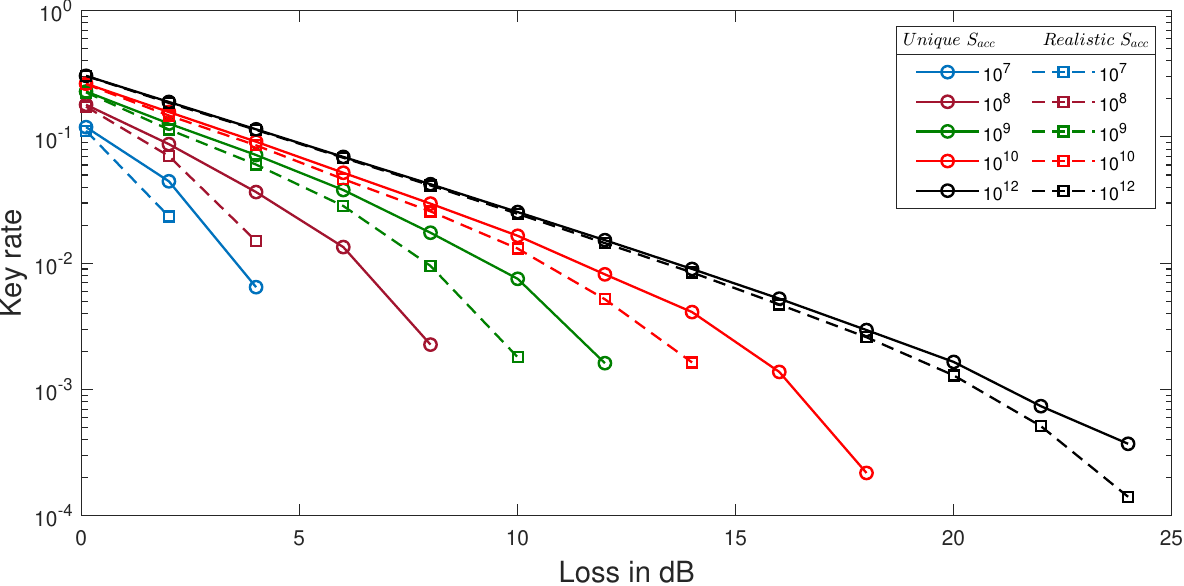}\\
    \includegraphics[width=.97\textwidth]{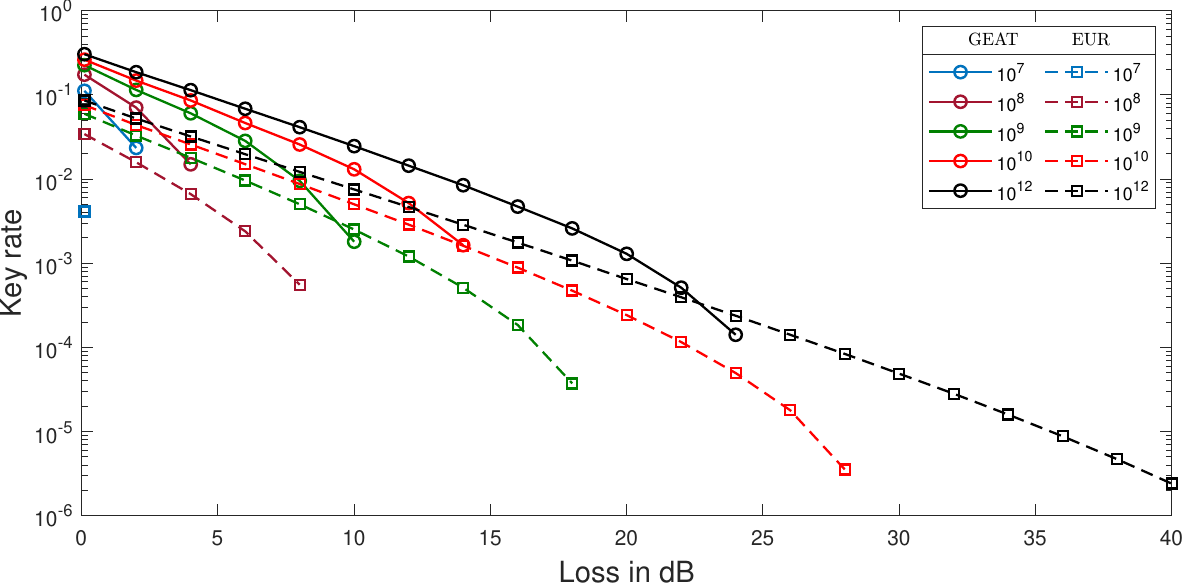}
    \caption{
    Key rates for the decoy-state protocol. In the upper plot, the solid lines correspond to the unique-acceptance condition (see Remark~\ref{remark:unique}), whereas the dashed lines are for a  realistic acceptance condition where there is a tolerance interval around the expected distribution as noted in Eq.~\eqref{eq:acceptentrywise}. 
    In all cases, two decoy signals are used with intensities $\mu_2=2\times 10^{-2}$ and $\mu_3=10^{-3}$, while the actual signal has the intensity of $\musig=0.9$, and the signals are chosen uniformly at random in each round. Signals are misaligned with \(\theta^{\text{misalign}}_\mathrm{hon} = \asin(0.1)\), the photon number cut-off is $\Nph=10$, the security parameter is chosen to be $\esecure=10^{-8}$, and the completeness parameter for the realistic tolerance intervals is $\ecom=10^{-3}$. The testing probability $\gamma$ and the \Renyi\ parameter $\alpha$ are optimized for each data point. From the plot, it can be seen that similar to the qubit BB84 protocol with loss in Sec.~\ref{sec:Qubit BB84 with loss}, as the number of signals is increased the difference between the two cases decreases. 
    In the lower plot, the solid lines show again our key rates for the realistic acceptance condition, while the dashed lines show the expected key rates obtained from a recent complementarity-based analysis in~\cite{arx_TNS+24} (incorporating recent improvements developed in~\cite{arx_MZC24}). The same $\theta^{\text{misalign}}_\mathrm{hon},\esecure,\ecom$ values were used, and other parameters were again optimized. We find that for this specific protocol, our approach appears to perform better at low loss values, while the complementarity-based approach appears to perform better at high loss values; however, these conclusions do not necessarily generalize to other protocols, for which similar comparisons would have to be performed on a case-by-case basis.
    }
    \label{fig:Decoy BB84 unique vs realistic acceptance}
\end{figure}

In Fig.~\ref{fig:Decoy BB84 iid comparison} we show the secret key rates for the decoy-state BB84 protocol, plotted against loss in \(\mathrm{dB}\). 
We suppose that the honest implementation is subject to misalignment with an angle of \(\theta^{\text{misalign}}_\mathrm{hon} = \asin(0.1) \approx 0.1002\) under the model described in \cite{WL22}, as mentioned previously. 
We chose the intensities as $\{\musig = 0.9, \mu_2 = 2\times 10^{-2},\mu_3 = 10^{-3}\}$, the photon number cut-off as $\Nph=10$, and the security parameter as $\esecure=10^{-8}$.  
For each data point we optimized the testing probability \(\gamma\) and the \Renyi\ parameter.

As in the case of the qubit BB84 protocol, the key rates resulting from our proof technique are mostly lower than the IID key rates from \cite{arx_KTL25} for the same protocol. Again, since we prove security against a wider class of attacks, one could expect such a behaviour.
We see that to reach positive key rates for losses up to \(\unit[25]{dB}\), we require about \(10^{12}\) signals sent.

We point out however that in the low-loss regime, we can approximately reach the asymptotic value with \(n=10^{12}\) signals, and for those data points it can be seen from the figure that we actually have a slight improvement over results from~\cite{arx_KTL25} for IID collective attacks (which was based on previous decoy methods). Thus, we confirm that our improved method can indeed improve the key rates as compared to the method used in~\cite{arx_KTL25}, although in the specific scenario we considered here, the improvement is small. We expect that if we were to compute key rates against IID collective attacks using this decoy method, we might be able to demonstrate a larger improvement (since the analysis in this work accounts for general attacks and hence has larger finite-size corrections as compared to~\cite{arx_KTL25}), but we leave a detailed comparison for future work, as our focus here is security against general attacks.  

Finally, in Fig.~\ref{fig:Decoy BB84 unique vs realistic acceptance}, we also present a comparison between unique acceptance and realistic acceptance sets for the decoy-state BB84 protocol. We note that in this case, the influence of a realistic acceptance set is much more pronounced than for the ideal qubit protocol. In particular, as the channel loss increases and the protocol approaches its maximum tolerable loss, the penalty from realistic acceptance sets becomes more significant. 
It may be of interest for future work to see if this can be avoided by applying adaptive key rate formulations such as in \cite{TTL24, inprep_HB25}.

Also in Fig.~\ref{fig:Decoy BB84 unique vs realistic acceptance}, we show a comparison of our results for the realistic acceptance set with the key rates obtained from a complementarity-based security proof, combining results from recent works~\cite{arx_TNS+24,arx_MZC24}. For this protocol, we find that the former appears to perform better in high-loss regimes, while the latter performs better in low-loss regimes. However, as complementarity-based proof techniques are quite different from the GEAT-based approach we used, it is difficult to draw broad conclusions from this about general protocols, and the question of which approach performs better for a given protocol would have to be addressed on a case-by-case basis.

\section{Conclusion}
\label{sec:conclusion}

In summary, in this work we have developed a flexible framework for security proofs of PM protocols against general attacks, with a particular focus on decoy-state protocols. To do so, we introduced  techniques for analyzing decoy-state protocols that are compatible with the GEAT, and have the further advantage that since they merge several steps that were handled separately in some previous works~\cite{WL22,NUL23,KL24,arx_KTL25}, they should yield better key rates than those works even in the asymptotic or IID scenarios (see Remark~\ref{remark:twostep}). Furthermore, regarding GEAT-specific contributions, we implemented a number of methods to improve the finite-size terms, including a method to optimize the choice of min-tradeoff function, and incorporating various improvements to the finite-size terms. By applying our framework to an example of a decoy-state protocol, we show that reasonably robust key rates can be achieved even in the finite-size regime.

We highlight that by using using the GEAT rather than the EAT in this work, we have also obtained an important advantage in that the resulting key rates have a reasonable level of loss tolerance, overcoming a difficulty noticed in the EAT-based analysis of~\cite{arx_GLvH+22}. More specifically, due to some technical issues regarding the EAT Markov conditions, the finite-size key rates computed from the EAT always had a \emph{subtractive} penalty on the order of the test-round probability~$\gamma$. 
This caused an issue that in any loss regime where the asymptotic key rate was of similar order of magnitude to~$\gamma$, it was not possible to obtain positive finite-size key rates, due to this subtractive penalty. In contrast, for a GEAT-based security proof, the effect of the test-round announcements is essentially just to rescale the first-order term by a \emph{multiplicative} penalty, which is much less significant. 

Finally, we note that as mentioned in the introduction, we can rely on recent works~\cite{arx_FKR+25,arx_AT25}\footnote{Note that the broad approach in those works was inspired by a work in preparation~\cite{inprep_HB25} presented at various conferences, but all proofs in those works are independent of~\cite{inprep_HB25}.} to state that the key rates computed in this work are also valid without restricting Eve to only interact with a single signal at a time. 
We now elaborate on this claim. Consider a PM protocol in which Alice sends out the signal states for the $n$ rounds, Eve interacts with them in some arbitrary fashion (without \emph{any} restriction on the number of signals she holds at a time), and then Bob measures the states as he receives them. Alice and Bob then compute the values $S_i C_i I_i$ for each round the same way as in our Protocol~\ref{Prot:PM Protocol} description. By the standard source-replacement analysis~\cite{BBM92,FL12}, the resulting state can be written in the form $\mathcal{M}^{\otimes n}(\rho_{A_1^n B_1^n \mathsf{E}})$, where $\rho_{A_1^n B_1^n \mathsf{E}}$ denotes a source-replaced quantum state before Alice and Bob's measurements (but after Eve has interacted arbitrarily with it to obtain some side-information $\mathsf{E}$), and $\mathcal{M}$ denotes a channel that (in each single round) performs Alice and Bob's measurements on the quantum registers $A_i B_i$ in that round, and processes the outcomes into the classical values $S_i C_i I_i$. 
Furthermore, the source-replacement analysis also implies that we have $\rho_{A_1^n} = \sigma_A^{\otimes n}$ for some fixed state $\sigma_A$. 
This state $\mathcal{M}^{\otimes n}(\rho_{A_1^n B_1^n \mathsf{E}})$ has a slightly different structure from the class of states that were considered in the GEAT (Theorem~\ref{thrm:EAT}), i.e.~those produced by a sequence of channels. However, from~\cite{arx_AT25} we have the following critical result (in fact, somewhat more general versions of it, where the channels can be different in each round):
\begin{theorem}[Special case of \cite{arx_AT25} Corollary~4.2, Eq.~(118)]
Consider any state of the form $\mathcal{M}^{\otimes n}(\rho_{A_1^n B_1^n \mathsf{E}})$ for some state $\rho_{A_1^n B_1^n \mathsf{E}}\in\mathcal{D}(A_1^n B_1^n \mathsf{E})$ satisfying $\rho_{A_1^n} = \sigma_A^{\otimes n}$ and some channel $\mathcal{M}: AB \to SCI$ with classical $C$. 
Let $f$ be an affine function satisfying
\begin{equation}
f(\mbf{p})\leq\inf_{\nu\in\Sigma(\mbf{p})}H(S|CIR)_\nu,
\end{equation}
where $\Sigma(\mbf{p})$ is the set of all states of the form $(\mathcal{M}\otimes\id_R)(\omega_{ABR})$ for some $\omega$ satisfying $\omega_A = \sigma_A$, such that the reduced state on the classical register $C$ has the same distribution as $\mbf{p}$. 
Let $\eps\in (0,1),\ \alpha\in(1,3/2)$, let 
$\Omega$ be an event on registers $C_1^n$, and let $h=\min_{c^n\in\Omega} f(\mathrm{freq}(c^n))$. Then: 
\begin{align}\label{eq:tensormodelbound}
    H^{\uparrow}_\alpha(S_1^n|C_1^n I_1^n \mathsf{E})&_{\mathcal{M}^{\otimes n}(\rho_{A_1^n B_1^n \mathsf{E}})_{|_{\Omega}}}
    \ge nh+nT_\alpha(f)-\frac{\alpha}{\alpha-1}\log \frac{1}{\pr
    {\Omega}}-n\left(\frac{\alpha-1}{2-\alpha}\right)^2 K(\alpha),
\end{align}
where $T_\alpha(f),\Var(\mbf{p},f),V(\mbf{p},f),K(\alpha)$ are defined the same way as in~\eqref{eq:2ndorderfuncs}.
\end{theorem}
In other words, this is precisely the statement that $\mathcal{M}^{\otimes n}(\rho_{A_1^n B_1^n \mathsf{E}})$ satisfies \emph{exactly the same} {\Renyi} entropy bound as in Theorem~\ref{thrm:EAT}, i.e.~the bound~\eqref{eq:EATbound}. Therefore, the key rates we computed in this work, based on that bound, also apply equally well for the state $\mathcal{M}^{\otimes n}(\rho_{A_1^n B_1^n \mathsf{E}})$, which corresponds to an \emph{arbitrary} coherent attack by Eve, without any restriction on the way she interacts with the signals.

In fact, the results in~\cite{arx_FKR+25,arx_AT25} also give sharper bounds (based on {\Renyi} entropies) than that shown in~\eqref{eq:tensormodelbound}, which should yield better key rates. We highlight that the techniques we developed here, especially for solving optimizations with a modified objective function as in Eq.~\eqref{eq:bestapproxrate decoy}, are also naturally compatible with the bounds derived in~\cite{arx_AT25}. 
Hence we aim to consolidate these approaches in future work. As a starting point, for readers already familiar with~\cite{arx_AT25}, we discuss in Appendix~\ref{app:Renyi} some further details relevant to analyzing optical QKD protocols with the proof techniques in that work; specifically, sharper conversions between {\Renyi} entropy and von Neumann entropy. 

\section*{Acknowledgements}
We thank Jie Lin for extensive assistance over the course of this work, particularly regarding the approach used in~\cite{arx_GLvH+22}.
We also thank Tony Metger and Martin Sandfuchs for helpful discussions.
L.K., A.A., N.L., and E.Y.Z.T.~conducted research at the Institute for Quantum Computing, at the University of Waterloo, which is supported by Innovation, Science, and Economic Development Canada. Support was also provided by NSERC under the Discovery Grants Program, Grant No. 341495. 
I.G. was supported by an Illinois Distinguished Fellowship and NSF Grant No. 2112890 during portions of this project.

\section*{Computational platform} 
The computations in this work were performed using the MATLAB package CVX~\cite{cvxpackage} with the solver MOSEK~\cite{mosek}. The code used to prepare the results in this paper will be available at \href{https://openqkdsecurity.wordpress.com/repositories-for-publications/}{https://openqkdsecurity.wordpress.com/repositories-for-publications/}.

\printbibliography

\appendix

\section{Deriving modified optimization}
\label{app:modprimal}

In this appendix, we give the proof of Theorem~\ref{th:modprimal}. We begin by proving the following lemma:
\begin{lemma}
For any constants $\varphi_0,\varphi_1 \geq 0$, the function 
$\newT$
defined in~\eqref{eq:approxrate}
is a convex function with Legendre-Fenchel conjugate $\newT^*$ given by~\eqref{eq:Tconj}, and satisfies 
\begin{align}\label{eq:Tenvelope}
\newT(\mbf{g}) = \newT^{**}(\mbf{g}) = \sup_{\lambdavec} \left( \mbf{g}\cdot\lambdavec - \newT^*(\lambdavec) \right).
\end{align}
\end{lemma}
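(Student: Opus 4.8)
The plan is to establish the three claims in sequence: convexity of $\newT$, the explicit form of the conjugate $\newT^*$, and the biconjugate identity \eqref{eq:Tenvelope}. The unifying observation throughout is that $\newT$ depends on $\mbf{g}$ only through the \emph{span} $w(\mbf{g}) \defvar \max(\mbf{g}) - \min(\mbf{g})$.

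For convexity, I would note that $\max(\mbf{g})$ and $-\min(\mbf{g}) = \max_c(-g_c)$ are each maxima of linear functions, so the span $w$ is convex and nonnegative. Since $t \mapsto \varphi_0 t^2 + \varphi_1 t$ is convex and nondecreasing on $[0,\infty)$ whenever $\varphi_0,\varphi_1 \geq 0$, and $\newT(\mbf{g}) = \varphi_0\, w(\mbf{g})^2 + \varphi_1\, w(\mbf{g})$, the standard rule that a nondecreasing convex function of a convex function is convex yields convexity of $\newT$.

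For the conjugate $\newT^*(\lambdavec) = \sup_{\mbf{g}}(\mbf{g}\cdot\lambdavec - \newT(\mbf{g}))$, I would first exploit that $\newT$ is invariant under shifts $\mbf{g} \mapsto \mbf{g} + c\mbf{1}$. If $\sum_c \lambda_c \neq 0$, such a shift changes $\mbf{g}\cdot\lambdavec$ by $c\sum_c\lambda_c$ while leaving $\newT$ fixed, so the supremum is $+\infty$, giving the second case of \eqref{eq:Tconj}. When $\sum_c\lambda_c = 0$, I would compute the inner maximum at fixed span, $M(w) \defvar \sup\{\mbf{g}\cdot\lambdavec : w(\mbf{g}) \leq w\}$. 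Using shift-invariance of $\mbf{g}\cdot\lambdavec$ (valid precisely because $\sum_c\lambda_c=0$) to normalize $\min(\mbf{g}) = 0$, so that $g_c \in [0,w]$, and splitting the indices into $\{\lambda_c > 0\}$ and $\{\lambda_c < 0\}$, the bound $\sum_c g_c\lambda_c \leq w\sum_{\lambda_c>0}\lambda_c = \tfrac{w}{2}\norm{\lambdavec}_1$ is attained by setting $g_c = w$ on the positive part and $g_c = 0$ on the negative part, so $M(w) = \tfrac{w}{2}\norm{\lambdavec}_1$. The conjugate then reduces to the one-dimensional problem $\sup_{w\geq 0}\big(\tfrac{w}{2}\norm{\lambdavec}_1 - \varphi_0 w^2 - \varphi_1 w\big)$; maximizing this downward parabola over $w \geq 0$, with maximizer $w^\star = (\norm{\lambdavec}_1/2 - \varphi_1)/(2\varphi_0)$ when this is nonnegative and $w=0$ otherwise, produces exactly $s(\norm{\lambdavec}_1/2)$ as in \eqref{eq:Tconj}.

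Finally, for \eqref{eq:Tenvelope} I would invoke the Fenchel–Moreau theorem: $\newT$ is finite-valued and continuous on $\mathbb{R}^{|\Cnoperp|}$, hence proper and closed, and convex by the first step, so it coincides with its biconjugate, $\newT = \newT^{**}$. I expect the main obstacle to be the reduction computing $M(w)$: establishing $M(w) = \tfrac{w}{2}\norm{\lambdavec}_1$ cleanly requires carefully using $\sum_c\lambda_c=0$ both to justify the $\min(\mbf{g})=0$ normalization and to identify $\sum_{\lambda_c>0}\lambda_c$ with $\tfrac{1}{2}\norm{\lambdavec}_1$; the surrounding steps are routine one-variable calculus. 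I would also record the standing assumption $\varphi_0 > 0$ needed for the division defining $s$, noting that in the degenerate case $\varphi_0 = 0$ the same argument gives $\newT^* \in \{0,+\infty\}$.
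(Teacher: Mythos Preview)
Your proposal is correct and follows essentially the same route as the paper: both establish convexity via the span $\max(\mbf{g})-\min(\mbf{g})$, handle the $\sum_c\lambda_c\neq 0$ case by shift-invariance, reduce the $\sum_c\lambda_c=0$ case to the one-variable optimization over the span by identifying the optimal $\mbf{g}$ as taking two values on the positive/negative parts of $\lambdavec$, and then invoke Fenchel--Moreau for the biconjugate identity. The only cosmetic difference is that the paper constructs the extremal $\tilde{\mbf{g}}$ explicitly rather than phrasing it as your constrained maximum $M(w)$, and your added remark on the $\varphi_0=0$ degeneracy is a nice touch the paper omits.
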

\begin{proof}
We start the proof by showing that $\newT(\mbf{g})$ is a convex function. We first restate the definition of $\newT$ from~\eqref{eq:approxrate}:
\begin{equation*}
    \newT(\mbf{g}) \defvar \varphi_0(\left(\max(\mbf{g})-\min(\mbf{g})\right)^2 + \varphi_1\left(\max(\mbf{g})-\min(\mbf{g})\right).
\end{equation*}
First note that $\max(\mbf{g})$ is convex and $\min(\mbf{g})$ is concave. Thus, $\max(\mbf{g})-\min(\mbf{g})$ is convex; moreover, it is non-negative, therefore, its square is also a convex function. By noting that $\varphi_0,\varphi_1 \geq 0$, we conclude $\newT(\mbf{g})$ is a convex function. 

Now to prove its Legendre-Fenchel conjugate is the expression~\eqref{eq:Tconj}, we first write out the definition of the conjugate (see e.g.~\cite{BV04v8} Chapter~3.3):
\begin{equation}\label{eq:Tstar_sub}
    \newT^*(\lambdavec) = \sup_{\mbf{g}} \left( \mbf{g}\cdot\lambdavec -\varphi_0(\left(\max(\mbf{g})-\min(\mbf{g})\right)^2 - \varphi_1\left(\max(\mbf{g})-\min(\mbf{g})\right) \right).
\end{equation}
Let us first consider the case where the sum of the elements of $\lambdavec$ is zero, i.e., $\sum_c\lambda_c=0$. Note that for any vector $\mbf{g}$, we can define another vector $\Tilde{\mbf{g}}$, whose elements are given by:
\begin{equation}
    \label{eq:newg_element}
    \Tilde{g}_c=
    \begin{cases} 
\max(g)& \text{ if }\lambda_c\ge 0 \\
\min(g) & \text{ otherwise }
\end{cases}.
\end{equation}
It is not hard to see that for any vector $\mbf{g}$, the quantity within the supremum in~\eqref{eq:Tstar_sub} is upper bounded by the value with $\Tilde{\mbf{g}}$ in place of $\mbf{g}$. Furthermore, letting $\mathcal{C}_{\geq0}$ be the set of $c$ values with $\lambda_c\ge 0$, we have
\begin{align}
\Tilde{\mbf{g}}\cdot\lambdavec = \max(g)\sum_{c\in\mathcal{C}_{\geq0}}\lambda_c + \min(g)\sum_{c\notin\mathcal{C}_{\geq0}}\lambda_c 
= (\max(g) - \min(g)){\norm{\lambdavec}_1}/{2},
\end{align}
using the fact that $\sum_{c\in\mathcal{C}_{\geq0}}\lambda_c = -\sum_{c\notin\mathcal{C}_{\geq0}}\lambda_c = {\norm{\lambdavec}_1}/{2}$ since $\sum_c\lambda_c=0$. Therefore, the RHS of~\eqref{eq:Tstar_sub} simplifies to
\begin{align}\label{eq:Tstar_simp}
    \newT^*(\lambdavec) &= \sup_{\mbf{g}}\left( ({\norm{\lambdavec}_1}/{2}-\varphi_1)\left(\max(\mbf{g})-\min(\mbf{g})\right)-\varphi_0\left(\max(\mbf{g})-\min(\mbf{g})\right)^2\right)\cr
    &= \sup_{\beta\ge 0}\left( ({\norm{\lambdavec}_1}/{2}-\varphi_1)\beta-\varphi_0\beta^2\right),
\end{align}
where in the second line, we use the fact that the optimization only depends on the difference $\max(\mbf{g})-\min(\mbf{g})$. The optimization in~\eqref{eq:Tstar_simp} can be solved using simple calculus which leads to the following:
\begin{equation}
    \label{eq:Tstar_nontivial_ans}
    \newT^*(\lambdavec)=
    \begin{cases}
        \frac{({\norm{\lambdavec}_1}/{2}-\varphi_1)^2}{4\varphi_0}& \text{if} \norm{\lambdavec}_1\ge 2\varphi_1 \\
        0 & \text{if} \norm{\lambdavec}_1 < 2\varphi_1
    \end{cases}
\end{equation}
Let us now look at the situation where $\sum_c\lambda_c\neq 0$. Consider the particular choice $\mbf{g}=k\mbf{1}$ for some $k\in\mathbb{R}$. Taking the supremum over such choices of $\mbf{g}$, from~\eqref{eq:Tstar_sub} we have:
\begin{align}\label{eq:Tstar_trivial_ans}
    \newT^*(\lambdavec)&\geq\sup_{k}\left(k\sum_c\lambda_c\right)\cr
    &=+\infty,
\end{align}
where the second equality follows by setting $k\rightarrow +\infty$ (resp. $k\rightarrow -\infty$) if $\sum_c\lambda_c> 0$ (resp. $\sum_c\lambda_c< 0$). Combining~\eqref{eq:Tstar_nontivial_ans} with~\eqref{eq:Tstar_trivial_ans} results the expression in~\eqref{eq:Tconj}.

Finally, to show that~\eqref{eq:Tenvelope} holds, we simply use the fact that $\newT$ is a convex function with its domain being all of $\mathbb{R}^{|\mathcal{C}|-1}$. Therefore its epigraph is a convex closed set, and hence it is equal to the conjugate of its conjugate (see e.g.~\cite{BV04v8} Chapter 3.3.2), i.e.~we have $\newT(\mbf{g}) = \newT^{**}(\mbf{g}) = \sup_{\lambdavec} \left( \mbf{g}\cdot\lambdavec - \newT^*(\lambdavec) \right)$ as claimed.
\end{proof}

With this, we turn to proving the first equality~\eqref{eq:modprimal1} in Theorem~\ref{th:modprimal}. First, recall that as discussed in Sec.~\ref{subsec:ratefunctions}, any $\mbf{g}$ and $k_g$ satisfying the condition $\mbf{g}\cdot\mbf{q} + k_g \leq r_\mathrm{cross}(\mbf{q})$ (for all $\mbf{q}$) yields a valid crossover min-tradeoff function. This means that in fact any gradient vector $\mbf{g}$ corresponds to some valid choice(s) of crossover min-tradeoff function, simply by choosing $k_g$ to be some value satisfying that condition (such values always exist because $r_\mathrm{cross}(\mbf{q})$ is a non-negative convex function). Now observe that in the supremum over crossover min-tradeoff functions in~\eqref{eq:bestapproxrate}, for any fixed $\mbf{g}$, the best choice of $k_g$ would be the highest possible value satisfying that condition. Specifically, this implies we should choose it to be as follows, recalling that by definition an infimum is the highest possible lower bound on a set (see~\cite{TSB+22} Sec.~5 for another perspective based on Lagrange duals):
\begin{align}
k_g &= \inf_{\mbf{q}} \left(r_\mathrm{cross}(\mbf{q}) - \mbf{g}\cdot\mbf{q}\right) \nonumber\\
&= \inf_{J} \left(W(\rhogJ) - \mbf{g}\cdot\probst[\rhotJ]\right), \label{eq:optscalar}
\end{align}
where in the second line we have substituted in the definition of $r_\mathrm{cross}(\mbf{q})$ from~\eqref{eq:crossrateJ}, with $\rhogJ,\rhotJ$ being again understood as functions of $J$ via~\eqref{eq:rhotg_Choi}. Hence, we should always take the scalar term $k_g$ in the optimization~\eqref{eq:bestapproxrate} to be given by the above expression.

Substituting the formulas~\eqref{eq:Tenvelope} and~\eqref{eq:optscalar} into the argument of the supremum in~\eqref{eq:bestapproxrate}, it becomes
\begin{align}
& \mbf{g}\cdot\qhon + \inf_{J} \left( W(\rhogJ) - \mbf{g}\cdot\probst[\rhotJ] \right) - \sup_{\lambdavec} \left( \mbf{g}\cdot\lambdavec - \newT^*(\lambdavec) \right)  \nonumber\\
=\,& \inf_{(J,\lambdavec)\in\mathcal{D}} \left( W(\rhogJ) + \newT^*(\lambdavec)  + \mbf{g}\cdot\left(\qhon - \probst[\rhotJ] - \lambdavec\right) \right),
\end{align}
where $\mathcal{D}$ is a convex set defined as the set of tuples $(J,\lambdavec)$ such that $J$ is a Choi matrix and $\lambdavec \in \mathbb{R}^{|\mathcal{C}|-1}$ satisfies $\sum_c \lambda_c = 0$ (we can restrict the optimization over $\lambdavec$ to such values without loss of generality, because $\newT(\mbf{g})=+\infty$ whenever $\sum_c \lambda_c \neq 0$).
Our goal is to find the choice of $\mbf{g}$ that maximizes the above value, in other words to find the optimal solution $\mbf{g}$ to
\begin{align}\label{eq:maxminprob}
\sup_{\mbf{g}} \inf_{(J,\lambdavec)\in\mathcal{D}} \left( W(\rhogJ) + \newT^*(\lambdavec) + \mbf{g}\cdot\left(\qhon - \probst[\rhotJ] - \lambdavec\right) \right).
\end{align}
(Note that in the above, the objective function is finite everywhere over the domain $\mathcal{D}$, so we do not have issues with domain definitions.)
We now simply observe that the above problem is \emph{precisely} the Lagrange dual of the constrained convex optimization
\begin{align}\label{eq:modprimal3}
\begin{aligned}
\inf_{(J,\lambdavec)\in\mathcal{D}} &\quad W(\rhogJ) + \newT^*(\lambdavec) \\
\suchthat &\quad \qhon - \probst[\rhotJ] - \lambdavec = \mbf{0} ,
\end{aligned}
\end{align}
and furthermore this constrained optimization satisfies strict feasibility, for instance by setting $J=\id$ and $\lambdavec = \qhon - \probst[\rho^t_{\id}]$ 
(this point indeed lies in the relative interior of $\mathcal{D}$, because $J=\id$ is in the relative interior of the set of Choi matrices, and $\sum_c \lambda_c = 0$ by~\eqref{eq:lambdasum}). Hence by Slater's condition~(\cite{BV04v8} Chapter~5.2.3), we have strong duality, i.e.~the optimizations~\eqref{eq:maxminprob} and~\eqref{eq:modprimal3} have the same optimal value, and the optimal dual solution $\mbf{g}$ is attained. From~\eqref{eq:modprimal3}, we obtain the desired equality~\eqref{eq:modprimal1} by simply noting that the $\sum_c \lambda_c = 0$ restriction in the domain $\mathcal{D}$ is already enforced by the $\qhon - \probst[\rhotJ] - \lambdavec = \mbf{0}$ constraint (see~\eqref{eq:lambdasum}).

Turning to the next equality~\eqref{eq:modprimal2}, it is a straightforward transformation of the optimization problem that can be derived as follows. First observe that the optimization in~\eqref{eq:modprimal1} is clearly lower bounded by the optimization in~\eqref{eq:modprimal2} because every feasible point $(J,\lambdavec)$ in the former yields a feasible point $(J',\lambdavec',\boldsymbol{\svar}')$ of the latter with the same objective value, simply by taking $J'=J$, $\lambdavec'=\lambdavec$ and $\svar'_c = |\lambda_c|$. Conversely, the optimization in~\eqref{eq:modprimal2} is also lower bounded by the optimization in~\eqref{eq:modprimal1}, because every feasible point $(J,\lambdavec,\boldsymbol{\svar})$ in the former yields a feasible point $(J',\lambdavec')$ of the latter with an objective value that is no higher, by taking $J'=J$ and $\lambdavec'=\lambdavec$ (in which case we have $\sum_c \svar_c \geq 
\norm{\lambdavec'}_1$ and hence
$s\!\left(\sum_c \svar_c/{2}\right) \geq s\!\left(\norm{\lambdavec'}_1/2\right) = \newT^*(\lambdavec')$, since $s$ is a monotone increasing function and $\sum_c \lambda'_c = 0$ by~\eqref{eq:lambdasum}). Hence, the optimizations in~\eqref{eq:modprimal1} and~\eqref{eq:modprimal2} are equal.

Finally, we turn to proving the last bound~\eqref{eq:FWoptimality}. The intuition behind this property is that as mentioned above, the optimal dual solution to~\eqref{eq:modprimal3} is in fact the optimal choice of $\mbf{g}$ in~\eqref{eq:bestapproxrate}. However, since in arriving at the SDP~\eqref{eq:linmodprimal} we have implemented several transformations of the domain, objective and constraints in the ``fundamental'' optimization~\eqref{eq:modprimal3}, it is not immediately clear whether this property is inherited by the dual of the SDP~\eqref{eq:linmodprimal}. Hence, we shall instead directly prove the bound~\eqref{eq:FWoptimality} via an appropriate series of inequalities.

First note that the SDP~\eqref{eq:linmodprimal} is again strictly feasible (e.g.~following the above ideas, by choosing $J=\id$,  $\lambdavec = \qhon - \probst[\rho^t_{\id}]$ and any $\svar_c > |\lambda_c|$) and hence by Slater's condition, the dual is attained and has the same optimal value $r_\mathrm{SDP}$. By the definition of Lagrange duality, this means the optimal dual solution to the $\qhon - \probst[\rhotJ] - \lambdavec = \mbf{0}$ constraint in the SDP~\eqref{eq:linmodprimal} is a vector $\mbf{g}^\star \in \mathbb{R}^{\left|\Cnoperp\right|}$ such that\footnote{Here we are using the fact that SDP duals can be viewed as Lagrange duals. Strictly speaking, when writing an SDP dual, one would usually ``dualize'' all the constraints (including the implicit constraints enforcing that $J$ is a Choi matrix), whereas in the following expression we have only ``dualized'' the $\qhon - \probst[\rhotJ] - \lambdavec = \mbf{0}$ constraint and not the other contraints. However, it is not difficult to show that any dual feasible solution in the former version yields a dual feasible solution for the latter version that has at least the same value; in particular, an optimal dual solution in the former yields an optimal dual solution in the latter.}
\begin{align}\label{eq:SDPstrongdual}
\begin{aligned}
r_\mathrm{SDP} = 
\;
\inf_{(J,\lambdavec,\boldsymbol{\svar})\in\mathcal{D}'} &\quad L(\rhogJ,\boldsymbol{\svar}) + \mbf{g}^\star\cdot\left(\qhon - \probst[\rhotJ] - \lambdavec\right),
\end{aligned}
\end{align}
where for compactness we introduce the following notation: $\mathcal{D}'$ denotes the set of tuples $(J,\lambdavec,\boldsymbol{\svar})$ such that $J$ is a Choi matrix and $\lambdavec,\boldsymbol{\svar} \in \mathbb{R}^{|\mathcal{C}|-1}$ are vectors satisfying $-\boldsymbol{\svar} \leq \lambdavec \leq \boldsymbol{\svar}$ (i.e.~it incorporates the constraints other than $\qhon - \probst[\rhotJ] - \lambdavec = \mbf{0}$ in the SDP~\eqref{eq:linmodprimal}).

With this, note that given the above choice of gradient $\mbf{g}^\star$, if we choose the scalar term in $g^\star$ according to~\eqref{eq:optscalar} then the desired result~\eqref{eq:FWoptimality} holds:
\begin{align}
g^\star(\qhon) - \newT(\mbf{g}^\star) 
&= \mbf{g}^\star\cdot\qhon + \inf_{J} \left( W(\rhogJ) - \mbf{g}^\star\cdot\probst[\rhotJ] \right) - \sup_{\lambdavec} \left( \mbf{g}^\star\cdot\lambdavec - \newT^*(\lambdavec) \right)  \nonumber\\
&= \inf_{(J,\lambdavec)\in\mathcal{D}} \left( W(\rhogJ) + \newT^*(\lambdavec) + \mbf{g}^\star\cdot\left(\qhon - \probst[\rhotJ] - \lambdavec\right) \right) \nonumber\\
&= \inf_{(J,\lambdavec,\boldsymbol{\svar})\in\mathcal{D}'} \left( W(\rhogJ) + s\!\left(\sum_c \svar_c/{2}\right) + \mbf{g}^\star\cdot\left(\qhon - \probst[\rhotJ] - \lambdavec\right) \right) \nonumber\\
&\geq \inf_{(J,\lambdavec,\boldsymbol{\svar})\in\mathcal{D}'} \left( L(\rhogJ,\boldsymbol{\svar}) + \mbf{g}^\star\cdot\left(\qhon - \probst[\rhotJ] - \lambdavec\right) \right) \nonumber\\
&=r_\mathrm{SDP},
\end{align}
where each line is justified as follows. The first and second lines are obtained by simply substituting~\eqref{eq:Tenvelope} and~\eqref{eq:optscalar} and then simplifying the resulting expression, similar to before. The third line holds by the same arguments as in the above derivation of~\eqref{eq:modprimal2}. The fourth line holds because $L(\rhogJ,\boldsymbol{\svar})$ is a lower bound on $W(\rhogJ) + s\!\left(\sum_c \svar_c/{2}\right)$ by hypothesis. Finally, the fifth line is simply the equality~\eqref{eq:SDPstrongdual}.

\section{Min-tradeoff functions with squashing and source maps}
\label{app:squash}
For certain protocols, squashing \cite{BML08,GBN+14,ZCW+21} and source maps \cite{NUL23} allow for a much easier security proof. Therefore, in this section, we discuss how to incorporate these techniques into our framework. We only present squashing maps explicitly and comment on source maps later.

However, we phrase the following theorem rather abstractly in case there are other use cases besides squashing maps. Intuitively, in the following theorem the channel \(\Lambda\) should be thought of as a squashing map as in \cite{BML08,GBN+14} or \cite{ZCW+21}. Below the theorem we will give a detailed discussion on how it can be applied to squashing maps.

\begin{theorem}\label{thrm:Squashed mod primal}
    Let \( \Lambda: B \rightarrow B' \) be a CPTP map, and for any CPTP map $\mathcal{E}:A' \to B$, let \(J'\) denote the Choi state of the combined CPTP map \(\Lambda \circ \mathcal{E}: A' \rightarrow B' \). Let us define states \(\sigma_{J'}^g\) and \(\sigma_{J'}^t\) in terms of $J'$ as in Eq.~\eqref{eq:rhotg_Choi}, except with $J'$ in place of $J$. Moreover, suppose there exists a quantum-to-classical channel\footnote{Technically, we did not define \(\probst\) as a quantum-to-classical channel, but rather as a tuple of functions creating a list of probabilities. However, it can straightforwardly be viewed as a channel, by embedding the resulting classical probability distribution as a diagonal matrix.} \(\probst': AB'\rightarrow C\) such that \( \Phi_c'[\sigma_{J'}] = \Phi_c'\circ \Lambda[\rho_J] = \Phi_c [\rho_J]\) for all \(c \in \mathcal{C}\) and all attack channels \(\mathcal{E}\). Additionally, suppose there exist constants \(\theta_1,\theta_2\) and a subset \(\mathcal{F} \subset \mathcal{C}\) such that 
    \begin{equation}\label{eq:Squashing Constraint}
        \sum_{c\in\mathcal{F}} \Phi_{c}[\rho] \geq \theta_1\left( \theta_2 - \tr{\left(M^A \otimes \Pi^B \right) \rho} \right) \; \forall \rho \in \dop{=}(AB),
    \end{equation} 
    where \(\Pi^B\) is a projector onto a subspace in \(B\) which is invariant under \(\Lambda\), and \(M^A\) is one of Alice's POVM elements. Finally, let \(\hat{W}\) be a lower bound on \(W\) such that \(\hat{W}(\sigma^g_{J'}) \leq W(\rhogJ)\) for all attack channels \(\mathcal{E}\). Then, the key length in \eqref{eq:keylengthfinal} is lower bounded by
    \begin{align}
    &n\left(\inf_{J' \in \mathcal{D}'} \left(\hat{W}(\sigma_{J'}^g) + \mbf{g}\cdot\left(\qhon - \probst'[\sigma_{J'}^t]\right) -\frac{\alpha-1}{2-\alpha}\frac{\operatorname{ln}(2)}{2}\widetilde{V}\!\left(\probst'[\sigma_{J'}^t],\mbf{g}\right) \right) - \Delta_\mathrm{com} \right) \nonumber\\
    &\qquad -n\left(\frac{\alpha-1}{2-\alpha}\right)^2K(\alpha)-\lambdaEC-\ceil{\log\frac{1}{\eEV}}-\frac{\alpha}{\alpha-1}\log\frac{1}{\ePA}+2, \label{eq:keylengthfinalsquashed}
\end{align}
where the domain \(\mathcal{D}'\) is defined as
\begin{equation}
    \mathcal{D}' = \bigg\{ J' \in \Pos(AB') \bigg| \tr[B']{J'}=I_A, \; \sum_{c\in\mathcal{F}} \Phi_c'[\sigma_{J'}^t] \geq \theta_1\left( \theta_2 - \tr{\left(M^A \otimes \Pi^{B'} \right) \sigma_{J'}^t} \right) \bigg\},
\end{equation}
i.e.~it captures the constraint from equation \eqref{eq:Squashing Constraint} in addition to requiring that \(J'\) is a valid Choi state.
\end{theorem}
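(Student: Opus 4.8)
The plan is to show that the optimization appearing in~\eqref{eq:keylengthfinalsquashed} is a relaxation of the one in~\eqref{eq:keylengthfinal}, so that its value can only be smaller and hence still constitutes a valid lower bound on $l$. Concretely, I would exhibit, for every feasible point of the original optimization over Choi matrices $J$, a feasible point $J'$ of the squashed optimization whose objective value is no larger. Since the original optimization ranges over all attack channels $\mathcal{E}:A'\to B$, the natural correspondence sends each such $\mathcal{E}$ (with Choi matrix $J$) to the Choi matrix $J'$ of the composed channel $\Lambda\circ\mathcal{E}:A'\to B'$. The first thing to record is that composing with $\Lambda$ on the output acts on the Choi matrix by applying $\Lambda$ to the $B$-register, so that $\sigma^{t}_{J'}=(\idmap_A\otimes\Lambda)(\rhotJ)$ and likewise $\sigma^g_{J'}=(\idmap_A\otimes\Lambda)(\rhogJ)$; this single identity is what ties together all the quantities in the two optimizations.

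With this correspondence fixed, I would compare the two objectives term by term. The entropic term satisfies $\hat{W}(\sigma^g_{J'})\leq W(\rhogJ)$ directly by the hypothesis on $\hat{W}$. For the remaining terms, the statistics-preservation hypothesis $\Phi'_c[\sigma^t_{J'}]=\Phi_c[\rhotJ]$ gives $\probst'[\sigma^t_{J'}]=\probst[\rhotJ]$, so both the linear contribution $\mbf{g}\cdot(\qhon-\probst'[\sigma^t_{J'}])$ and the correction $\widetilde{V}(\probst'[\sigma^t_{J'}],\mbf{g})$ coincide \emph{exactly} with their counterparts in~\eqref{eq:keylengthfinal}, recalling from~\eqref{eq:Vf_from_g} that $\widetilde{V}$ depends on its first argument only through $\probst'[\sigma^t_{J'}]$. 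Combining these, the squashed objective evaluated at $J'$ is pointwise no larger than the original objective evaluated at $J$.

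It remains to verify feasibility, i.e.~that $J'\in\mathcal{D}'$; this is the step I expect to require the most care. That $J'$ is a valid Choi matrix ($\tr[B']{J'}=I_A$) is immediate, since $\Lambda\circ\mathcal{E}$ is CPTP. For the subspace-estimation constraint I would apply the hypothesis~\eqref{eq:Squashing Constraint} to the state $\rho=\rhotJ$. Its left-hand side $\sum_{c\in\mathcal{F}}\Phi_c[\rhotJ]$ equals $\sum_{c\in\mathcal{F}}\Phi'_c[\sigma^t_{J'}]$ by statistics preservation, matching the left-hand side of the constraint defining $\mathcal{D}'$. For the right-hand side I would invoke the invariance of $\Pi^B$ under $\Lambda$, read in the Heisenberg picture as $\Lambda^\dagger(\Pi^{B'})=\Pi^B$, so that $\tr{(M^A\otimes\Pi^{B'})\sigma^t_{J'}}=\tr{(M^A\otimes\Lambda^\dagger(\Pi^{B'}))\rhotJ}=\tr{(M^A\otimes\Pi^B)\rhotJ}$; pinning down precisely this form of the invariance condition is the crux of the argument. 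Together, these two identities convert~\eqref{eq:Squashing Constraint} evaluated at $\rhotJ$ into exactly the inequality required for membership in $\mathcal{D}'$.

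Finally I would assemble the pieces. The image $\{\,\mathrm{Choi}(\Lambda\circ\mathcal{E})\,\}$ is a subset of $\mathcal{D}'$, and on it the squashed objective is dominated by the original objective, so
\begin{equation*}
\inf_{J'\in\mathcal{D}'}(\text{squashed objective})\;\leq\;\inf_{J}(\text{original objective}).
\end{equation*}
Since the outer finite-size terms, the prefactor $n$, and the penalty $\Delta_\mathrm{com}$ (which depends only on the fixed gradient $\mbf{g}$) are identical in~\eqref{eq:keylengthfinal} and~\eqref{eq:keylengthfinalsquashed}, this inequality propagates to the full key-length expressions, yielding~\eqref{eq:keylengthfinalsquashed}$\,\leq\,$\eqref{eq:keylengthfinal}. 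As~\eqref{eq:keylengthfinal} is already a secure lower bound on $l$, so is~\eqref{eq:keylengthfinalsquashed}. The main obstacle, as noted, lies entirely in the feasibility step: correctly interpreting ``invariant under $\Lambda$'' as $\Lambda^\dagger(\Pi^{B'})=\Pi^B$ and verifying that the statistics-preservation property is used at the level of the test-round state $\sigma^t_{J'}$, rather than merely for the bare measurement channel.
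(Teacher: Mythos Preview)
Your proposal is correct and follows essentially the same approach as the paper's proof: both argue that the squashed optimization is a relaxation of the original one by mapping each attack Choi matrix $J$ to $J'=\mathrm{Choi}(\Lambda\circ\mathcal{E})$, using $\hat{W}\leq W$ for the entropic term and statistics preservation $\probst'\circ\Lambda=\probst$ for the remaining terms. Your presentation is in fact slightly more direct (a single relaxation step rather than the paper's intermediate domain $\tilde{\mathcal{D}}$) and more explicit about the Heisenberg-picture reading $\Lambda^\dagger(\Pi^{B'})=\Pi^B$ of the invariance hypothesis, which the paper leaves implicit when asserting that every feasible $J$ yields a feasible point of the squashed problem.
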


\begin{proof}
First, by assumption the entropy \(\hat{W}(\sigma_{J'}^g) = \hat{W}(\Lambda[\rhogJ]) \) is a lower bound on \(W(\rhogJ)\), for all attack channels \(\mathcal{E}\). Therefore, the conditional Renyi entropy \(H_\alpha^\uparrow(S^n|I^nE'_n)_{\rho_{|_{\Omega_{\mathrm{AT}}}}}\) can be lower bounded by
\begin{equation}\label{eq:Halpha squashed1}
\begin{aligned}
    &H_\alpha^\uparrow(S^n|I^nE'_n)_{\rho_{|_{\Omega_{\mathrm{AT}}}}} \\
    &\geq n\left(\inf_{J \in \tilde{\mathcal{D}}} \left(\hat{W}(\Lambda[\rhogJ]) + \mbf{g}\cdot\left(\qhon - \probst'\circ \Lambda[\rhotJ]\right) -\frac{\alpha-1}{2-\alpha}\frac{\operatorname{ln}(2)}{2}\widetilde{V}\!\left(\probst'\circ \Lambda[\rhotJ],\mbf{g}\right) \right) - \Delta_\mathrm{com} \right) \nonumber\\
    &\qquad -n\left(\frac{\alpha-1}{2-\alpha}\right)^2K(\alpha),
    \end{aligned}
\end{equation}
where the set \(\tilde{\mathcal{D}}\) is defined as
\begin{equation}
    \tilde{\mathcal{D}} = \bigg\{ J \in \Pos(AB) \bigg| \tr[B]{J}=I_A, \; \sum_{c\in\mathcal{F}} \Phi_c'\circ \Lambda[\rhotJ] \geq \theta_1\left( \theta_2 - \tr{\left(M^A \otimes \Pi^{B'} \right) \Lambda[\rhotJ]} \right) \bigg\}.
\end{equation}
This holds, since by assumption, every feasible point in Eq.~\eqref{eq:finalopt} is a feasible point of the optimization problem \eqref{eq:Halpha squashed1}. Hence, Eq.~\eqref{eq:Halpha squashed1} is indeed a lower bound on \(H_\alpha^\uparrow(S^n|I^nE'_n)_{\rho_{|_{\Omega_{\mathrm{AT}}}}}\).

Next, we combine the channels \(\Lambda\) and \(\mathcal{E}\), thus effectively giving the channel \(\Lambda\) into Eve's control. The new optimization variable is therefore \(J'\), the Choi state of the channel \(\Lambda \circ \mathcal{E}\), now creating states \(\sigma_{J'}^t\) and \(\sigma_{J'}^g\) on systems \(AB'\). The only term which could be affected by this change is the conditional entropy \(\hat{W}\), however including the map \(\Lambda\) in the minimization, can only reduce the optimum. Therefore, we find
\begin{equation}\label{eq:Halpha squashed2}
\begin{aligned}
    &H_\alpha^\uparrow(S^n|I^nE'_n)_{\rho_{|_{\Omega_{\mathrm{AT}}}}} \\
    &\geq n\left(\inf_{J' \in \mathcal{D}'} \left(W(\sigma_{J'}^g) + \mbf{g}\cdot\left(\qhon - \probst'[\sigma_{J'}^t]\right) -\frac{\alpha-1}{2-\alpha}\frac{\operatorname{ln}(2)}{2}\widetilde{V}\!\left(\probst'[\sigma_{J'}^t],\mbf{g}\right) \right) - \Delta_\mathrm{com} \right) \nonumber\\
    &\qquad -n\left(\frac{\alpha-1}{2-\alpha}\right)^2K(\alpha),
    \end{aligned}
\end{equation}
where the set \(\mathcal{D}'\) is now defined as
\begin{equation}
    \mathcal{D}' = \bigg\{ J' \in \Pos(AB') \bigg| \tr[B']{J'}=I_A, \; \sum_{c\in\mathcal{F}} \Phi_c'[\sigma_{J'}^t] \geq \theta_1\left( \theta_2 - \tr{\left(M^A \otimes \Pi^{B'} \right) \sigma_{J'}^t} \right) \bigg\}.
\end{equation}
Inserting this lower bound into Eq.~\eqref{eq:keylength} gives the theorem statement.
\end{proof}

Before we consider the application of Theorem \ref{thrm:Squashed mod primal} to squashing maps, let us first discuss its implications on the crossover min-tradeoff function. By applying the methods presented in Appendix~\ref{app:modprimal}, one can find a crossover min-tradeoff function which achieves
\begin{equation}
    g^\star(\qhon) - \newT(\mbf{g}^\star) \geq r_{\mathrm{SDP},\Lambda},
\end{equation}
where \(r_{\mathrm{SDP},\Lambda}\) is the version of Eq.~\eqref{eq:linmodprimal} after applying \(\Lambda\). Thus, the expected difference in performance between the original version and the min-tradeoff function resulting from the application of \(\Lambda\) are on the order of the difference in the asymptotic key rates before and after applying \(\Lambda\).

Now, let us describe the situation where \(\Lambda\) is a squashing map in more detail. In \cite{ZCW+21,L20}, it was shown that the conditional entropy \(H(S|IE)_{\rho}\) is lower bounded by the entropy \(H(S|IEE')_{\Lambda(\rho)}\) including the squashing map \(\Lambda\). Here, the register \(E'\) corresponds to the additional information Eve could gather by holding a purification of the squashed state. Hence, in the case of squashing maps the function \(\hat{W}\) in Theorem \ref{thrm:Squashed mod primal} is simply given by the conditional entropy of the squashed state.

Next, let us discuss two different kinds of squashing maps. First, if we consider the squashing maps from \cite{BML08,GBN+14}, then no additional constraints as in Eq.~\eqref{eq:Squashing Constraint} are required, which could equivalently be written as \(\theta_1=0\), enforcing only trivial constraints. However, much more importantly, this implies immediately, that one can use their squashing maps in our framework.

On the other hand, if the flag-state squasher from \cite{ZCW+21} is employed, constraints of the form as in Eq.~\eqref{eq:Squashing Constraint} are imperative. For example, consider the bounds on Bob's \(\leq N_B\)-photon subspace derived in \cite{KL24}. These state that the weight \(p(\leq N_B|x)\) inside Bob's \(\leq N_B\)-photon subspace is bounded as
\begin{equation}\label{eq:subspace bound}
    p(\leq N_B|x) \geq 1 - \frac{m_{\mathrm{mult}|x}}{c_{\geq N_B +1}},
\end{equation}
where the multi-click probability \(m_{\mathrm{mult}|x}\) given that Alice sent state \(x\) is given by
\begin{equation}
    m_{\mathrm{mult}|x} = \tr{ \left( M_x^A \otimes M_{\mathrm{mult}}^B \right) \rho} \; \forall \rho \in \dop{=}(AB),
\end{equation}
and \(c_{\geq N_B +1}\) is a constant determined by the optical detection setup. For a generic way of finding \(c_{\geq N_B +1}\) for arbitrary passive detection setups see \cite[Sec. VII]{KL24}.

The weight inside the Bob's \(\leq N_B\)-photon subspace can be equivalently written as 
\begin{equation}
    p(\leq N_B|x) = \tr{\left(M^A_x \otimes \Pi^B_{\leq N_B} \right) \rho} \; \forall \rho \in \dop{=}(AB).
\end{equation}
Hence, the bound in Eq.~\eqref{eq:subspace bound} can easily be recast in the form of Theorem \ref{thrm:Squashed mod primal} as
\begin{equation}
    \tr{ \left( M_x^A \otimes M_{\mathrm{mult}}^B \right) \rho} \geq c_{\geq N_B +1} \left( 1- \tr{\left(M^A_x \otimes \Pi^B_{\leq N_B} \right) \rho} \right) \forall \rho \in \dop{=}(AB).
\end{equation}
Therefore, the constants required are \(\theta_1 = c_{\geq N_B +1}\) and \(\theta_2 =1 \) and the set \(\mathcal{F}\) contains Alice's signal choice \(x\) and all of Bob's POVM elements included in the multi-click POVM element \(M_{\mathrm{mult}}^B\). Hence, one can also apply the flag-state squashing map in our framework. 

\begin{remark}
    Source maps as in \cite{NUL23} can be included in a similar manner as in Theorem~\ref{thrm:Squashed mod primal}, although one needs to reverse the order of \(\Lambda\) and \(\mathcal{E}\). Also, note that they are not required to produce any additional constraints as characterized by the set \(\mathcal{F}\). Thus, source maps can immediately be incorporated by applying \cite[Theorem 7]{NUL23}, which yields a lower bound on \(W\).
\end{remark}

\section{Improved conversions between {\Renyi} entropy and von Neumann entropy}
\label{app:Renyi}

The bounds derived in~\cite{arx_AT25} are similar in spirit to the GEAT in the sense that they involve the entropies of single rounds as discussed in Sec.~\ref{subsec:ratefunctions}. However, the relevant single-round quantity is instead the {\Renyi} entropy $H^\uparrow_\alpha(S|IE)$ (or in some cases, it may be convenient to relax it to $H_\alpha(S|IE)$ instead), rather than the von Neumann entropy $H(S|IE)$.\footnote{There are also a number of other improvements that significant sharpen the finite-size bounds and entirely avoid the separate optimization of min-tradeoff function choice, but we do not discuss those points further here.} In principle, to obtain the tightest finite-size key rates from that work, one should directly analyze those {\Renyi} entropies rather than $H(S|IE)$. However, since an extensive body of work has already been devoted to the latter~\cite{CML16,WLC18,WL22,NUL23,KL24} (including the approaches we used here for formalizing the $W$ function in Sec.~\ref{sec:Qubit BB84 with loss}--\ref{sec:Decoy-state with improved analysis}), it may still be useful to bound the former in terms of the latter, so that the approaches from those works can be applied. We now present some methods for doing so that may be particularly useful for optical QKD protocols --- in particular, some of the resulting finite-size corrections roughly depend on the  \emph{detected} rather than \emph{total} number of rounds, which can be a significant difference.

\begin{remark}
In some applications of entropy accumulation,
one may have to consider the entropies $H_\alpha(SC|IE)$ rather than just $H_\alpha(S|IE)$ (recall that $C$ is the register containing data for the acceptance test) --- the former is potentially relevant for protocols where the $C$ registers might not satisfy the Markov or non-signaling conditions of the EAT or GEAT respectively; refer to e.g.~\cite{ARV19} for further discussion. The bounds presented below generalize straightforwardly to that scenario, except they might depend on $\dim(SC)$ rather than $\dim(S)$. We stress however that a useful proof tactic is that if $C$ can be ``projectively reconstructed'' from $SIE$ in the sense of~\cite[Lemma~B.7]{DFR20}, then that lemma gives us 
\begin{align}
H_\alpha(SC|IE) = H_\alpha(S|IE) \quad\text{and}\quad H(SC|IE) = H(S|IE),
\end{align}
so in such scenarios we can freely add or remove $C$ from the left-hand-side conditioning registers using these relations, allowing the following bounds to be applied directly.
\end{remark}

To begin, a known simple bound in terms of $H(S|IE)$ is~\cite[Lemma~B.9]{DFR20}:
\begin{align}\label{eq:tovN}
H^\uparrow_\alpha(S|IE) \geq H_\alpha(S|IE) \geq H(S|IE) - (\alpha-1)\log^2\left(1+2 \dim(S) \right),
\end{align}
for any $\alpha \in \left(1,1+\frac{1}{\log\left(1+2\dim(S)\right)}\right)$. A more elaborate bound was derived in~\cite[Corollary~IV.2 with Corollary III.5]{DF19}, of the form
\begin{align}\label{eq:tovN2}
H^\uparrow_\alpha(S|IE) \geq H_\alpha(S|IE) \geq H(S|IE) - (\alpha-1)\frac{\ln 2}{2}\log^2\left(1+2\dim(S)^\CvsQ \right) -  \widetilde{K}(\alpha)(\alpha-1)^2,
\end{align}
for any $\alpha\in(1,2)$, where $\CvsQ=1$ if the $S$ system is classical and $\CvsQ=2$ if it is quantum (for QKD we would of course focus on the former case). The quantity $\widetilde{K}(\alpha)$ is a somewhat complicated expression similar to the $K(\alpha)$ term we presented in Eq.~\eqref{eq:2ndorderfuncs}, but it can be explicitly upper bounded.\footnote{While it involves various other {\Renyi} entropies, one can simply apply crude bounds on those entropies in terms of the system dimensions without too much loss of tightness, because in any case it only affects a term of order $O((\alpha-1)^2)$.}
By comparing the prefactors on the $(\alpha-1)$ terms, we see that for classical $S$, the bound~\eqref{eq:tovN2} is substantially better than~\eqref{eq:tovN} whenever $\alpha$ is close enough to $1$ for the $\widetilde{K}(\alpha)(\alpha-1)^2$ term to be negligible. (If $S$ is quantum, it can still be better if $\dim(S)$ is small.) 

Bounds of essentially the above form were often implicitly used in many theorems involving {\Renyi} entropies, including the proofs of the EAT and GEAT bounds in~\cite{DFR20,DF19,Metger2022Mar} (albeit in a fairly complicated fashion). 
However, we now show that the bounds can be substantially refined for applications in optical QKD. We begin by stating the following lemma:
\begin{lemma}\label{lemma:approxlin}
Consider any $w\in[0,1]$ and any states $\nu,\nu',\nu''\in\dop{=}(CQ)$ with classical $C$, such that $\nu_{CQ} = w \nu'_{CQ} + (1-w) \nu''_{CQ}$.\footnote{We do not require $\nu',\nu''$ to have disjoint supports, so for instance we do not need an explicit classical register in $\nu$ indicating whether $\nu'$ or $\nu''$ ``occurred''.} 
Let $\Delta_\alpha$ denote
\begin{align}
\Delta_\alpha \defvar \frac{\alpha-1}{\alpha} \frac{\ln2}{2} \log\dim(C).
\end{align}
Then for all $\alpha \in (1,\infty)$ we have:
\begin{align}\label{eq:approxlin}
\begin{gathered}
H^\uparrow_\alpha(C|Q)_{\nu} \geq (1-\Delta_\alpha) wH^\uparrow_\alpha(C|Q)_{\nu'}, \\ H_\alpha(C|Q)_{\nu} \geq (1-\alpha\Delta_\alpha) wH_\alpha(C|Q)_{\nu'},
\end{gathered}
\end{align}
and for all $\alpha \in \left(1,1+\frac{2}{
\ln\dim(C) 
}\right]$ we have:
\begin{align} \label{eq:approxlin2}
\begin{gathered}
H^\uparrow_\alpha(C|Q)_{\nu} 
\geq \left(1 - (1-w)\Delta_\alpha - 2{w^2}\Delta_\alpha^2 \right) wH^\uparrow_\alpha(C|Q)_{\nu'} , \\
H_\alpha(C|Q)_{\nu} 
\geq \left(1 - (1-w)\alpha\Delta_\alpha - 2{w^2}\alpha^2\Delta_\alpha^2 \right) wH_\alpha(C|Q)_{\nu'} .
\end{gathered}
\end{align}
\end{lemma}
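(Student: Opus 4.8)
The plan is to reduce the statement to a one-variable scalar inequality by introducing an auxiliary classical ``branch'' register and exploiting the direct-sum structure of conditional {\Renyi} entropies. First I would introduce the flagged state
\[
\tilde\nu_{FCQ} \defvar w\,\pure{0}_F \otimes \nu'_{CQ} + (1-w)\,\pure{1}_F \otimes \nu''_{CQ},
\]
which is classical on $F$ and satisfies $\tr[F]{\tilde\nu_{FCQ}} = \nu_{CQ}$. Since conditioning on the extra register $F$ can only decrease a conditional entropy (data processing for sandwiched {\Renyi} entropies, see e.g.~\cite{Tom16}), we get $H_\alpha(C|Q)_\nu \geq H_\alpha(C|QF)_{\tilde\nu}$, and likewise for the $\uparrow$ variant. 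The purpose of this step is that for a state classical on $F$ the conditional entropies admit closed ``direct-sum'' forms, namely
\[
H_\alpha(C|QF)_{\tilde\nu} = \tfrac{1}{1-\alpha}\log\!\Big(w\,2^{(1-\alpha)H_\alpha(C|Q)_{\nu'}} + (1-w)\,2^{(1-\alpha)H_\alpha(C|Q)_{\nu''}}\Big),
\]
and the analogous expression for $H_\alpha^\uparrow(C|QF)$ obtained by replacing the exponent $(1-\alpha)$ with $\tfrac{1-\alpha}{\alpha}$ and the prefactor $\tfrac{1}{1-\alpha}$ with $\tfrac{\alpha}{1-\alpha}$. Both identities follow directly from the definitions, using that all relevant operators are block-diagonal in $F$.

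Next I would discard the $\nu''$ branch: since $\alpha>1$ the prefactor is negative and (for classical $C$) $H_\alpha(C|Q)_{\nu''}\geq 0$, so $2^{(1-\alpha)H_\alpha(C|Q)_{\nu''}}\leq 1$; replacing this term by $1$ enlarges the argument of the increasing logarithm and hence, after multiplying by the negative prefactor, only decreases the entropy. This leaves a bound depending solely on $a\defvar H_\alpha(C|Q)_{\nu'}$ (resp.~$a^\uparrow\defvar H_\alpha^\uparrow(C|Q)_{\nu'}$). Writing $s \defvar (\alpha-1)a\ln 2$ in the non-arrow case (resp.~$s\defvar \tfrac{\alpha-1}{\alpha}a^\uparrow\ln 2$ in the $\uparrow$ case), the boundedness $0\leq H_\alpha(C|Q)_{\nu'}\leq \log\dim(C)$ translates into the range $s\in[0,2\alpha\Delta_\alpha]$ (resp.~$s\in[0,2\Delta_\alpha]$). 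This is precisely where the factor of $\alpha$ distinguishing the two variants of the bound originates, and it is where the hypothesis $\alpha\leq 1+2/\ln\dim(C)$ enters, since it forces $\Delta_\alpha<1$ and keeps $s\le 2$, i.e.\ within the regime where the Taylor estimates below are valid.

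It then remains to establish the scalar inequalities. For the coarse bounds~\eqref{eq:approxlin} I would use $-\ln(1-z)\geq z$ together with $1-e^{-s}\geq s-\tfrac{s^2}{2}$ and the range bound $s/2\leq \alpha\Delta_\alpha$ (resp.~$\Delta_\alpha$), which collapses directly to the multiplicative factors $(1-\alpha\Delta_\alpha)$ and $(1-\Delta_\alpha)$. The sharper bounds~\eqref{eq:approxlin2} are the main obstacle, since the first-order estimates become too lossy there. Instead I would retain the quadratic term via $-\ln(1-z)\geq z+\tfrac{z^2}{2}$ and expand $F(s)\defvar -\ln\!\big(1-w(1-e^{-s})\big)$ to second order, giving $F(s)=ws-\tfrac{w(1-w)}{2}s^2+O(s^3)$. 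The key cancellation is that the target's linear-in-$\Delta_\alpha$ penalty $w(1-w)\Delta_\alpha s$ dominates the $-\tfrac{w(1-w)}{2}s^2$ term exactly because $s\leq 2\Delta_\alpha$, leaving the quadratic slack $2w^2\Delta_\alpha^2$ in the claimed factor available to absorb the cubic-and-higher remainder uniformly over $s\in(0,2\Delta_\alpha]$. The delicate part I expect to be verifying that this slack indeed dominates the remainder for all $w\in[0,1]$ (using $\Delta_\alpha<1$), which I would carry out by reducing, after the substitutions above, to a polynomial inequality in $s$ and $w$.
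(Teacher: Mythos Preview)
Your proposal is correct and mirrors the paper's proof essentially line for line: the paper also introduces a classical flag register (called $Z$ there), invokes data processing to condition on it, applies the direct-sum formula \cite[Proposition~5.1]{Tom16}, drops the $\nu''$ branch via $H_\alpha(C|Q)_{\nu''}\geq 0$, and then reduces to the same scalar inequality $-\ln\!\big(we^{-x}+(1-w)\big)\geq wx(1-x/2)$ (from $e^{-x}\leq 1-x+x^2/2$ and $-\ln(1-z)\geq z$) for~\eqref{eq:approxlin}, sharpening to $-\ln(1-z)\geq z+z^2/2$ for~\eqref{eq:approxlin2}. The paper is only marginally more explicit about the final polynomial estimate, writing out $-\ln(we^{-x}+(1-w))\geq wx\big(1-\tfrac{1-w}{2}x-\tfrac{w}{2}x^2+\tfrac{w}{8}x^3\big)$ and then dropping the positive last term before invoking $x\leq 2$; this is exactly the ``delicate part'' you flag.
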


We defer the proof to the end of this appendix, first discussing its implications and applications here. We focus mainly on the first bound~\eqref{eq:approxlin} as it is somewhat simpler. The point of it is that in QKD protocols, we often have the property that any state $\nu_{SIE}$ that can be produced in a single round (after the measurements, announcements and sifting) indeed has the form $\nu = w \nu' + (1-w) \nu''$, where furthermore $\nu''$ is such that we cannot get lower bounds on $H_\alpha(S|IE)_{\nu''}$ other than the trivial lower bound of zero. As a concrete example, in the decoy-state analysis in Sec.~\ref{sec:Decoy-state with improved analysis}, $\nu''$ could be the state conditioned on the event\footnote{This is a fairly elaborate event, so in potential future applications, one might want to instead just focus on a subset of the conditions in this event for simplicity, though the resulting bounds would be less tight.}
\begin{align}\label{eq:exampleevent}
\text{[more than $1$ photon was sent] or [the round was sifted out]},
\end{align}
since when this event happens, either Eve can learn $S$ perfectly or the value of $S$ is set to a deterministic value. Note that the value of $w$ can be ``variable'' (e.g.~it depends on Eve's attack) rather than a fixed value; for instance, this is indeed the case in the above example, because the sifting procedure involves checking whether Bob received a detection, and the probability of this can vary depending on Eve's attack.

For such a state (with classical $S$), its von Neumann entropy would satisfy 
\begin{align}
H(S|IE)_{\nu} \geq w H(S|IE)_{\nu'},
\end{align}
by concavity of von Neumann entropy, with equality whenever $\nu',\nu''$ have disjoint supports and $H_\alpha(S|IE)_{\nu''} = 0$. Most numerical approaches for studying such states, including those in this work, would implicitly use this (or similar ideas) to replace $H(S|IE)_{\nu}$ with the lower bound $w H(S|IE)_{\nu'}$ and study the latter instead. Now observe that the bound~\eqref{eq:approxlin} states that we have an approximate analogous result $H^\uparrow_\alpha(S|IE)_{\nu} \gtrsim wH^\uparrow_\alpha(S|IE)_{\nu'}$ for {\Renyi} entropy, up to a prefactor that approaches $1$ as $\alpha\to1$. 
Consequently, if we \emph{now} bound $H^\uparrow_\alpha(S|IE)_{\nu'}$ using either~\eqref{eq:tovN} or~\eqref{eq:tovN2},
we conclude that the following bounds hold (for the same ranges of $\alpha$ as in~\eqref{eq:tovN}--\eqref{eq:tovN2}):
\begin{gather}
H^\uparrow_\alpha(S|IE)_\nu \geq \left(1 - \Delta_\alpha \right) \left(wH(S|IE)_{\nu'} - w(\alpha-1)\log^2\left(1+2 \dim(S) \right) \right), \label{eq:newtovN} \\
H^\uparrow_\alpha(S|IE)_\nu \geq \left(1 - \Delta_\alpha \right) \left(wH(S|IE)_{\nu'} - w(\alpha-1)\frac{\ln 2}{2}\log^2\left(1+2\dim(S) \right) -  w\widetilde{K}(\alpha)(\alpha-1)^2 \right) \label{eq:newtovN2},
\end{gather}
where in the second case we have set $\CvsQ=1$ since $S$ is classical. Note that the prefactors of $1-\Delta_\alpha$ are just constants, and thus these lower bounds should still be numerically tractable under the methods used in this work --- recall that these numerical approaches were likely to be analyzing $w H(S|IE)_{\nu'}$ in place of $H(S|IE)_{\nu}$ anyway, and the other terms in these bounds are linear in $w$ (and therefore also usually linear in the optimization variables in the relevant parametrizations, thereby retaining convexity). In fact, for some simpler use cases, the value $w$ is in fact a constant in the optimization, making this even easier to analyze --- for instance, this holds if e.g.~$w$ only represents the probability of sending at most 1 photon in a single round.

The critical benefit of using these bounds instead of~\eqref{eq:tovN}--\eqref{eq:tovN2} directly is that the correction terms have picked up a prefactor of $w$, which substantially reduces their magnitude if $w$ is small. In fact, a heuristic scaling analysis of $\alpha$ informally suggests that it might cause the finite-size correction terms on the final {\Renyi} entropy bounds to scale as $O(\sqrt{wn})$ instead of $O(\sqrt{n})$ (putting aside the complication that $w$ might potentially be an optimization variable rather than a constant, as well as some subtleties regarding the scaling of a ``penalty function'' term in the~\cite{arx_AT25} results). This change is especially dramatic if for instance $w$ is the probability of the round being detected (or the even more restrictive event we defined above in~\eqref{eq:exampleevent}), because then $wn$ roughly corresponds to the number of detected rounds, which can be orders of magnitude smaller than $n$ at high loss. 

This comes at a small price of having an overall prefactor of $1-\Delta_\alpha$ that reduces the values slightly. However, we can roughly quantify the maximum possible effect of this prefactor with a series of crude bounds, focusing for instance on the bound~\eqref{eq:newtovN2}:
\begin{align}
& \left(1 - \frac{\alpha-1}{\alpha} \frac{\ln2}{2} \log\dim(S) \right) \bigg(wH(S|IE)_{\nu'} - w(\alpha-1)\frac{\ln 2}{2}\log^2\left(1+2\dim(S)\right)  -  O((\alpha-1)^2) \bigg) \nonumber\\
=& wH(S|IE)_{\nu'} - wH(S|IE)_{\nu'} \frac{\alpha-1}{\alpha} \frac{\ln2}{2} \log\dim(S) - w(\alpha-1)\frac{\ln 2}{2}\log^2\left(1+2\dim(S)\right)  -  O((\alpha-1)^2) \nonumber\\
\geq& wH(S|IE)_{\nu'} - w(\alpha-1) \frac{\ln2}{2} \log^2\dim(S)  - w(\alpha-1)\frac{\ln 2}{2}\log^2\left(1+2\dim(S)\right)  -  O((\alpha-1)^2) \nonumber\\
\geq& wH(S|IE)_{\nu'} - w(\alpha-1)(\ln 2)\log^2\left(1+2\dim(S)\right)  -  O((\alpha-1)^2).
\end{align}
From this we immediately see that the new bound~\eqref{eq:newtovN2} should be better than its previous counterpart~\eqref{eq:tovN2} whenever $w\leq1/2$ and $(\alpha-1)^2$ is small; in fact, since the above bounds were somewhat crude, it is likely to be better over a larger range than that. (On the other hand, it does highlight that if for instance we simply consider $w$ to be the probability of a generation round, i.e.~$w=1-\gamma$, then this approach may not give an improvement, since $\gamma$ is often small.)

Finally, we briefly comment on the other bound~\eqref{eq:approxlin2} in Lemma~\ref{lemma:approxlin}. Observe that it is slightly sharper than~\eqref{eq:approxlin}, in that it has an additional prefactor of $1-w$ on the $O(\alpha-1)$ term, hence making that term smaller. Unfortunately, it poses potential challenges for numerical work because it has nonlinear dependencies on $w$. This may not be a problem if we are considering examples in which $w$ is a fixed value (for instance, the probability of sending at most 1 photon), so that the prefactor in~\eqref{eq:approxlin2} is simply a constant, but in situations where $w$ is an optimization variable, it could be difficult to handle in this bound. Furthermore, recalling that we are most likely interested in scenarios where $w$ is small, the improvement is unlikely to be substantial. Hence this version may have somewhat limited use.

We wrap up by presenting the proof of Lemma~\ref{lemma:approxlin}. The informal intuition behind the proof is that since the {\Renyi} entropies converge to von Neumann entropy as $\alpha\to1$, we might expect to get results analogous to the von Neumann entropy case via Taylor expansions about $\alpha=1$.

\begin{proof}
Extend $\nu_{CQ}$ onto a classical register $Z$ via $\nu_{CQZ} = w \nu'_{CQ} \otimes \pure{0}_Z + (1-w) \nu''_{CQ} \otimes \pure{1}_Z$. 
We first prove the bound on $H_\alpha(C|Q)_{\nu}$ in~\eqref{eq:approxlin} (we consider this case first because the intermediate formulas are slightly simpler than the $H^\uparrow_\alpha(C|Q)_{\nu}$ case).
Specifically, noting that $C$ is classical so its entropies are non-negative, we find the claimed result for all $\alpha\in(1,\infty)$:
\begin{align}
H_\alpha(C|Q)_{\nu} 
&\geq H_\alpha(C|QZ)_{\nu} \nonumber\\
&= \frac{1}{1-\alpha}\log\left(w 2^{(1-\alpha)H_\alpha(C|Q)_{\nu'}} + (1-w) 2^{(1-\alpha)H_\alpha(C|Q)_{\nu''}}\right)
\nonumber\\
&\geq \frac{1}{1-\alpha}\log\left(w 2^{(1-\alpha)H_\alpha(C|Q)_{\nu'}} + (1-w)\right) \text{ since } H_\alpha(C|Q)_{\nu''}\geq 0 \nonumber\\
&= -\frac{1}{(\ln2)(\alpha-1) }\ln\left(w e^{-(\ln2)(\alpha-1)  H_\alpha(C|Q)_{\nu'}} + (1-w)\right) \nonumber\\
&\geq wH_\alpha(C|Q)_{\nu'} \left(1 - \frac{(\ln2)(\alpha-1)  H_\alpha(C|Q)_{\nu'}}{2} \right) \;\text{since } (\ln2)(\alpha-1)  H_\alpha(C|Q)_{\nu'}\geq 0 \nonumber\\
&\geq wH_\alpha(C|Q)_{\nu'} \left(1 - (\alpha-1) \frac{\ln2}{2} \log\dim(S) \right) = wH_\alpha(C|Q)_{\nu'} \left(1 - \alpha\Delta_\alpha \right), 
\end{align}
where in the second line we use~\cite[Proposition~5.1]{Tom16}, in the third line we use the fact that the expression is nondecreasing in $H_\alpha(C|Q)_{\nu''}$, and in the fifth line we apply the following inequality that holds for all $w\in[0,1]$ and $x\in[0,\infty)$:
\begin{align}\label{eq:taylor}
-\ln\left(w e^{-x} + (1-w)\right) 
\geq -\ln\left(w \left(-x + \frac{x^2}{2}\right) + 1\right) 
\geq w \left(x - \frac{x^2}{2}\right)
= wx \left(1 - \frac{x}{2}\right),
\end{align}
where the first inequality uses the bound $e^{-x} \leq 1-x+x^2/2$ for $x\in[0,\infty)$ (and implicitly the fact that the bounding quantity $1-x+x^2/2$ is still non-negative so the logarithm value is still well-defined), and the second inequality is the generic bound $\ln(1-x) \leq -x$ for all $x$ in the domain, i.e.~$x\in(-\infty,1)$.  (These bounds on the exponential and logarithm functions can be obtained using Taylor's theorem with mean-value form of the remainder term, noting that the sign of the remainder term is fixed in the regimes we consider; alternatively, the bound on the logarithm function can be obtained directly from the fact that it is concave.)

The $H^\uparrow_\alpha$ bound in~\eqref{eq:approxlin} follows from similar calculations, except we start with a slightly different relation for $H^\uparrow_\alpha$ in~\cite[Proposition~5.1]{Tom16}, 
\begin{align}
H^\uparrow_\alpha(C|QZ)_{\nu} = \frac{\alpha}{1-\alpha}\log\left(w 2^{\frac{1-\alpha}{\alpha}H^\uparrow_\alpha(C|Q)_{\nu'}} + (1-w) 2^{\frac{1-\alpha}{\alpha}H^\uparrow_\alpha(C|Q)_{\nu''}}\right), 
\end{align}
then carry out the remaining calculations the same way.

To prove the remaining bounds, we use  higher-order Taylor expansion bounds to sharpen the inequality~\eqref{eq:taylor}. Specifically, using the same bound on the exponential function but instead combining it with $\ln(1-x) \leq -x-x^2/2$ for $x \in [0,1)$, we see that the following holds for all $w\in[0,1]$ and $x\in[0,2]$: 
\begin{align}
-\ln\left(w e^{-x} + (1-w)\right) 
&\geq -\ln\left(w \left(-x + \frac{x^2}{2} \right) + 1\right)  \nonumber\\
&\geq w \left(x - \frac{x^2}{2}\right) + \frac{w^2}{2}\left(x - \frac{x^2}{2}\right)^2 \text{ since } x\leq2 \implies w \left(x - \frac{x^2}{2}\right)\geq 0 \nonumber\\
&= wx - \left(\frac{w}{2} - \frac{w^2}{2} \right) x^2 - \frac{w^2}{2}x^3 + \frac{w^2}{8}x^4 \nonumber\\
&= wx \left(1 - \frac{1-w}{2}x - \frac{w}{2}x^2 + \frac{w}{8}x^3 \right) .
\end{align}
With this we conclude that via analogous calculations to above, 
the bounds in~\eqref{eq:approxlin2} hold. Note that in doing so we dropped the last term in the above bound, because it is a positive contribution for which there may not be any straightforward lower bounds other than the trivial bound of zero (unless we allow the final bound to have nonlinear dependence on the entropy term as well). 
We also highlight that the (somewhat loosely constructed) restriction on allowed $\alpha$ values is to ensure that $(\ln2)(\alpha-1)  H_\alpha(C|Q)_{\nu'}$ or $(\ln2)\frac{1-\alpha}{\alpha}H^\uparrow_\alpha(C|Q)_{\nu'}$ is upper bounded by
\begin{align}
(\ln2)(\alpha-1)  \log\dim(S) 
 = (\alpha-1) \ln\dim(S) 
\leq 2,
\end{align}
so we can validly apply the above bound.

One could also consider higher-order bounds on the exponential function instead, but since the next valid bound that holds for all $x\in[0,\infty)$ requires going to the next even-order expansion (i.e.~$e^{-x} \leq 1-x+x^2/2!-x^3/3!+x^4/4!$), the resulting bound is very lengthy and seems unlikely to be of much use (furthermore, at best it only sharpens the $O((\alpha-1)^2)$ corrections in the final results, which may already have been fairly negligible). Another option would be to directly analyze the Taylor expansion of the function $-\ln\left(w e^{-x} + (1-w)\right) $, taking care to ensure the remainder term has the correct sign in the required regimes, but again the improvements would only be on the $O((\alpha-1)^2)$ corrections.
\end{proof}

\end{document}
\typeout{get arXiv to do 4 passes: Label(s) may have changed. Rerun}